\newtheorem{lemma}{Lemma}
\newtheorem{theorem}{Theorem}
\theoremstyle{definition}
\newtheorem{definition}{Definition}
\theoremstyle{remark}
\newtheorem{observation}{Observation}
\newcommand{\A}{\mathcal{A}}
\newcommand{\C}{\hat C}
\renewcommand{\P}{\hat P}
\newcommand{\R}{\mathbb R}
\newcommand{\al}{\alpha}
\newcommand{\lm}{\lambda}
\newcommand{\tb}[1]{\textbf{#1}}
\newcommand{\lct}{\operatorname{l-cost}}
\newcommand{\cl}{\operatorname{close}}
\newcommand{\ct}{\operatorname{cost}}
\newcommand{\cv}{\operatorname{cover}}
\newcommand{\nr}{\operatorname{near}}
\newcommand{\rank}{\operatorname{rank}}
\begin{document}

\title{A Subquadratic Time Algorithm for the Weighted $k$-Center Problem on Cactus Graphs}





\author[1]{Binay Bhattacharya}
\affil[1]{Simon Fraser University, Burnaby, Canada}
\author[2]{Sandip Das}
\author[2]{Subhadeep Ranjan Dev}
\affil[2]{Indian Statisitical Institute, Kolkata, India}

\maketitle


\begin{abstract}
    The weighted $k$-center problem in graphs is a classical facility location problem where we place $k$ centers on the graph, which minimize the maximum weighted distance of a vertex to its nearest center. We study this problem when the underlying graph is a cactus with $n$ vertices and present an $O(n \log^2 n)$ time algorithm for the same. This time complexity improves upon the $O(n^2)$ time algorithm by Ben-Moshe et al. \cite{ben}, which is the current state-of-the-art.

\end{abstract}

\section{Introduction} \label{sec:intro}

The facility location problem is a classical problem in operations research which has been well studied in both the graph and geometry settings. In graph, the vertices have some kind of demand, and the edges have some associated length. The length on the edges correspond to the distance between its two end vertices. The objective is to identify a certain number of locations on the edges and vertices of the graph where we can place facilities. The facilities are placed such that they satisfy the demands on the vertices of the graph and also minimize a certain predefined objective function.

In the \emph{weighted $k$-center problem} the objective is to place $k$ facilities which minimize the maximum weighted distance between a vertex and its nearest facility. Kariv and Hakimi \cite{kariv} introduced this problem in 1979 and showed it to be NP-hard for general graphs. In fact, they proved a much stronger statement where they showed that the problem is NP-hard on planar graphs of maximum degree 3. However, the problem is shown to have efficient polynomial runtime when either the graph class is restricted or there is some bound on $k$. For example, in trees, Wang and Zhang \cite{wang} recently showed that the weighted $k$-center problem can be solved in $O(n \log n)$ time, where $n$ is the number of vertices in the tree. If $k$ is a constant, Bhattacharya et al. \cite{bhattacharya2019} presented an optimal $O(n)$ time solution for the problem on trees. For cactus, the current state of the art is an $O(n^2)$ time algorithm by Ben-Moshe et al. \cite{ben}. When $k$ is restricted to be 1 or 2 then the runtime for cactus comes down to $O(n \log n)$ and $O(n \log^3 n)$ time respectively \cite{bhattacharya2017}. For partial $p$-trees, Bhattacharya and Shi \cite{bhattacharya} showed that the weighted $k$-center problem can be solved in $O(n^k \log^{2p} n)$ time, when the facilities are restricted to be placed only on the vertices.

The algorithm of Wang and Zhang \cite{wang} for the weighted $k$-center problem on trees uses Frederickson's parametric search technique \cite{frederickson1991optimal} to bring down the runtime from $O(n \log n + k \log^2 n \log(n/k))$ \cite{banik} to $O(n \log n)$. We generalize Frederickson's technique to present an $O(n \log^2 n)$ time algorithm for the problem. This is the first time that such a generalization could be achieved for a graph class other than trees and is the first subquadratic time algorithm for the problem. The feasibility test is the decision version of the problem where given a cost $\lm$, the test returns whether the optimal cost (not known) is at most $\lm$ or not. We have also proposed an algorithm for the feasibility test with respect to the weighted $k$-center problem, which runs in optimal $O(n)$ time.

The rest of the paper is organized as follows.  In Section \ref{sec:prelim}, we provide a formal definition of the weighted $k$-center problem and the tree representation of a cactus. We present some results on circular arc graphs which we use in the later sections. In Section \ref{sec:feasibility}, we present our algorithm for the feasibility test. For any input $\lambda$, the test determines whether $\lambda \geq \lambda^*$ or not, where $\lambda^*$ is the optimal cost of the weighted $k$-center problem. In Section \ref{sec:arrangements}, we prove some properties on linear arrangements whose results we use in Section \ref{sec:stem}. In Section \ref{sec:stem}, we define a cactus stem which is basically a subgraph of the cactus and a generalization of the (path) stem as defined in Frederickson \cite{frederickson1991optimal}. We then present an $O(n \log^2 n)$ time algorithm to solve the weighted $k$-center problem on the cactus stem. We also show another technique with which we can get an $O(n \log n + k n \log n)$ time algorithm. In Section \ref{sec:main}, we present our $O(n \log^2 n + k n \log n)$ time algorithm for cacti. Finally we conclude in Section \ref{sec:conclusion}.

\section{Preliminaries} \label{sec:prelim}
\subsection{Problem definition}
Let $G$ be a graph with vertex set $V(G)$ and edge set $E(G)$ with $|V(G)| = n$. We say $G$ is a \emph{cactus} if no two cycles of $G$ share a common edge. Each vertex $v \in V(G)$ is associated with a non-negative weight $w(v)$ and each edge $e \in E(G)$ is associated with a positive length $l(e)$. We interpret $e$ as a line segment of length $l(e)$ so that any point $x$ on $e$ can be referred to. The set of all points on all edges of $G$ is denoted by $A(G)$. For any two points $x, y \in A(G)$, the distance between $x$ and $y$, denoted by $d_G(x, y)$, is the length of the shortest path between them in $G$. We omit the subscript $G$ if it is evident. For a point $x \in A(G)$ and a vertex $v \in V(G)$, the \emph{weighted distance} between the point $x$ and the vertex $v$ is equal to $w(v) \cdot d(x, v)$. Given a set of points $X$, the \emph{cost} of covering the vertices of $G$ by $X$ is given by the expression

\begin{equation*}
    \max_{v \in V(G)} \left \{\min_{x \in X} \left \{w(v) \cdot d(x, v) \right \} \right \}
\end{equation*}

The objective of the \emph{weighted $k$-center} problem in $G$ is to find a set of $k$ points $X$ such that the cost of covering $V(G)$ is minimized. This cost is also called the \emph{optimal cost} of the weighted $k$-center problem in $G$.

We say a point (or a center) $x \in A(G)$ \emph{covers} a vertex $v \in V(G)$ with cost $\lm,$ if $w(v) \cdot d(x, v) \leq \lm$. Extending this definition we say a set of centers $X$ covers $V(G)$ with cost $\lm$ if for all vertices $v \in V(G)$ its nearest center in $X$ covers $v$ with cost $\lm$. We sometimes do not explicitly mention $\lm$ if it is well understood.

\subsection{Tree representation of a cactus}

Given a cactus $G$, a vertex of $G$ is a called a \emph{graft} if it does not belong to any cycle of $G$. It is called a \emph{hinge} if it belongs to a cycle of $G$ and has degree at least 3. Each of the remaining vertices, if any, belongs to exactly one cycle of $G$.

We transform $G$ into a new graph $T$, where the nodes of $T$, $V(T)$, represent either grafts, hinges or cycles of $G$. If a node $t_i \in V(T)$ represents a cycle of $G$ then we add an edge to every node $t_j \in V(T)$ which represents a hinge of that cycle. If $t_i, t_j \in V(T)$ both do not represent a cycle then they are adjacent if and only if the vertices of $G$ that they represent are adjacent.

Observe that two nodes of $V(T)$ representing a graft and a cycle of $G$, respectively, are never adjacent. Since $G$ is a cactus, the graph $T$ thus formed is a tree. We call $T$ the \emph{tree representation} of $G$ \cite{ben}. See Figure \ref{fig:cactus-tree} for an illustration.

\begin{figure}[ht]
    \centering
    \subfloat[A cactus $G$. \label{fig:cactus}]{
        \includegraphics[width=0.5\textwidth]{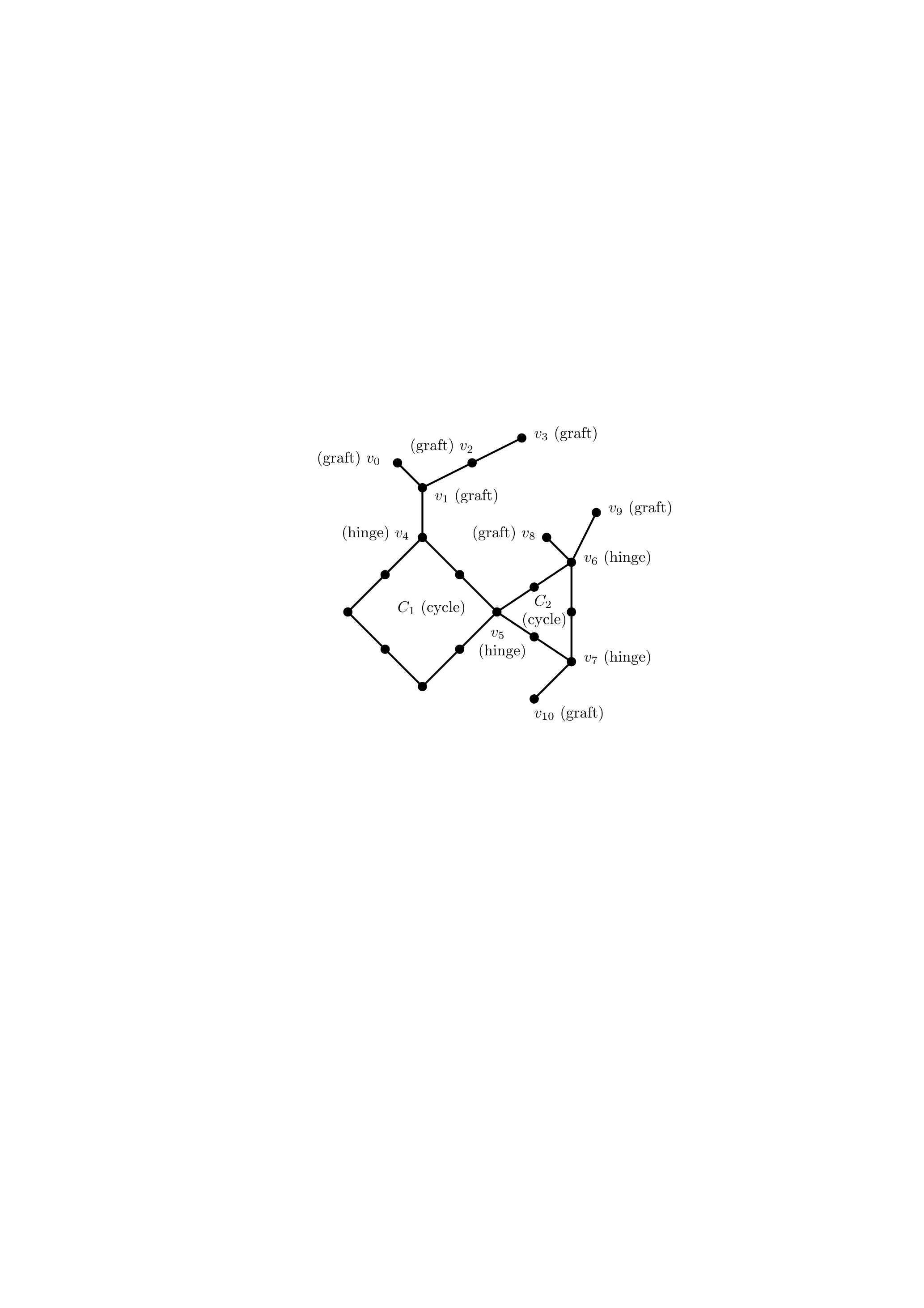}
    }
    \hfill
    \subfloat[Tree representation $T$ of $G$ rooted $t_0$. For all $t \in V(T)$, the vertex $t$ represents is mentioned inside brackets. \label{fig:tree}]{
        \includegraphics[width=0.45\textwidth]{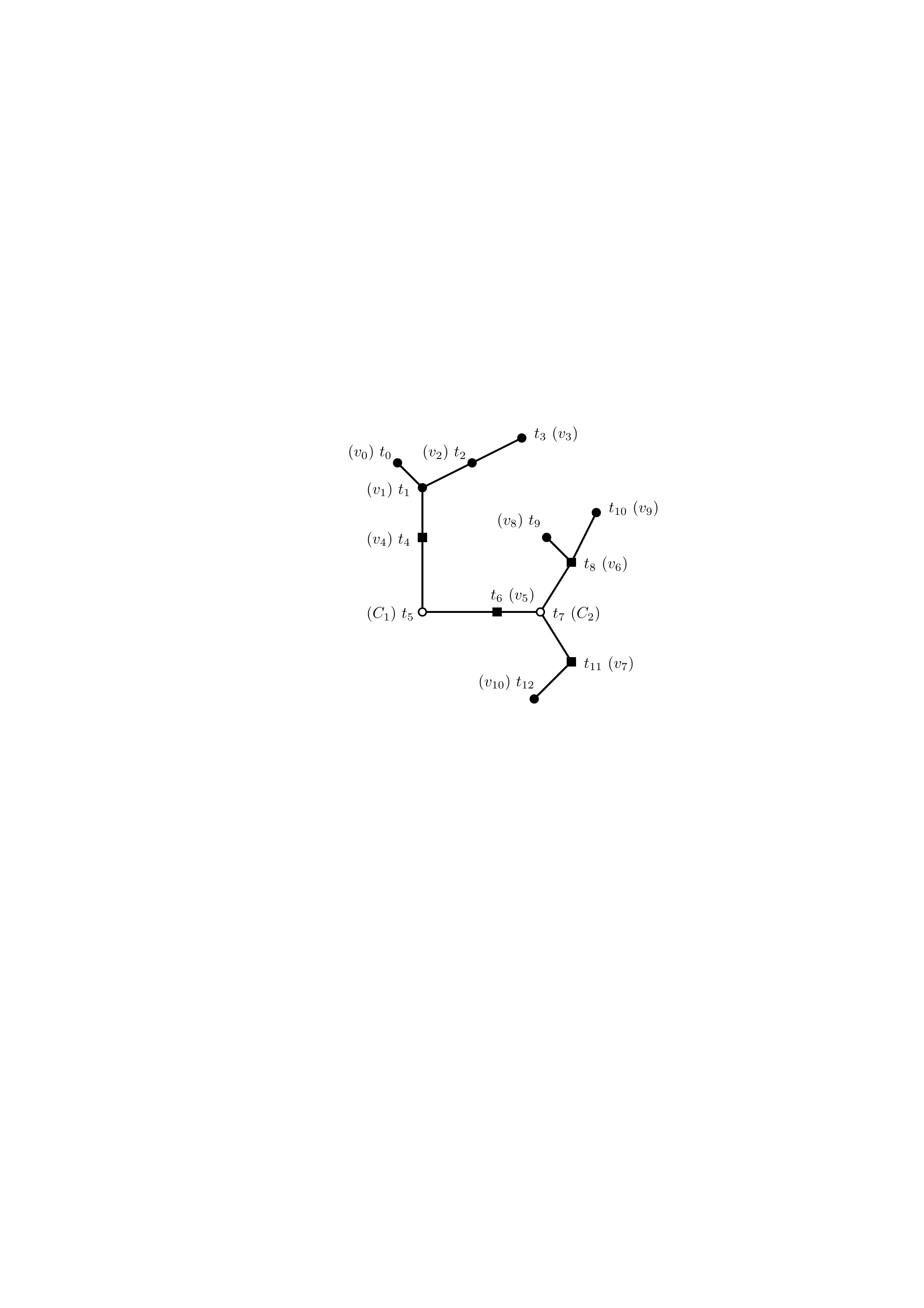}
    }

    \caption{The associate vertex of $t_0$ is $v_0$ and that of $t_4$ and $t_5$ are both $v_4$. \label{fig:cactus-tree}}
\end{figure}

Let $t_r \in V(T)$ be an arbitrary leaf node that we designate as the root of $T$. This establishes a parent child relation on the nodes in $V(T)$. We now map each node $t \in V(T)$ to a vertex $v \in V(G)$. We call $v$ the \emph{associate vertex} of $t$.

For each $t_i \in V(T)$ which represents either a graft or a hinge $v_i \in V(G)$, we assign $v_i$ as the associate vertex of $t_i$. For each $t_j \in V(T)$ which represents a cycle of $G$, let $t_j'$ be its parent. Since $t_j$ represents a cycle, $t_j'$ will always represent a hinge of $G$. The associate vertex of $t_j$ is set as the associate vertex of $t_j'$. If $t_j$ represent a cycle of $G$ and is also the root ($t_j = t_r$), we arbitrarily select a vertex of that cycle and set it as the associate vertex of $t_j$. We call the the associate vertex of $t_r$, say $v_r$, the root of $G$.

\begin{lemma}[Ben-Moshe et al. \cite{ben}]
    Given a cactus $G$ with $|V(G)| = n$, its tree representation $T$ can be found out in $O(n)$ time.
\end{lemma}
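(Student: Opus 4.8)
The plan is to construct $T$ in a single bottom-up pass after first identifying the cactus structure. The goal is a linear-time algorithm, so I must avoid any step that is superlinear.

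Let me think about what we need to construct. The tree $T$ has nodes for grafts, hinges, and cycles. So first I need to classify every vertex of $G$ and find all cycles.

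First I would compute the cycle structure of the cactus. The key fact about a cactus is that no two cycles share an edge, which means every edge belongs to at most one cycle. This is equivalent to saying the biconnected components (blocks) of $G$ are either single edges (bridges) or simple cycles. So the plan is to run a single depth-first search to compute the block-cut tree of $G$. Using the standard DFS-based algorithm with lowpoint values (Hopcroft–Tarjan), I can identify all biconnected components in $O(n + |E(G)|)$ time. Since $G$ is a cactus, $|E(G)| = O(n)$ (each vertex lies on at most... well, more carefully, the number of edges is at most $2n-2$ or so, certainly $O(n)$), so this is $O(n)$.

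Once I have the blocks, the classification is immediate: a vertex is a \emph{graft} if it lies in no cyclic block, a \emph{hinge} if it lies on a cycle and is a cut vertex of degree at least $3$, and the remaining vertices each belong to exactly one cycle. Each cyclic block becomes a cycle-node of $T$; each graft and each hinge becomes its own node. The edges of $T$ follow the incidence rules in the definition: a cycle-node connects to the nodes representing its hinges, and two non-cycle nodes are adjacent exactly when the corresponding vertices of $G$ are adjacent in $G$ (which can be read off directly from the bridge edges and the cut-vertex incidences recorded during the DFS). Building these adjacency lists is a matter of iterating once over the blocks and their boundary vertices, so it costs $O(n)$ total. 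Finally, rooting $T$ at an arbitrary leaf and assigning associate vertices is a second linear traversal: grafts and hinges map to themselves, and each cycle-node inherits the associate vertex of its parent hinge (or, for a root cycle, an arbitrary vertex of the cycle), all computable in one top-down sweep.

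The main obstacle I anticipate is bookkeeping rather than a deep algorithmic difficulty: I must verify that the resulting incidence structure is exactly a tree and that the counting of nodes and edges stays linear. The tree property follows because $G$ is a cactus — the block-cut tree of any connected graph is a tree, and contracting each cyclic block to a single node while keeping cut-vertex/graft nodes preserves acyclicity — but I should argue carefully that the specific adjacency rule in the definition (cycle-to-hinge edges, plus graft/hinge adjacencies inherited from $G$) coincides with the block-cut incidence and therefore yields a tree with $O(n)$ nodes. Since a cactus on $n$ vertices has $O(n)$ cycles and $O(n)$ cut vertices, the total size of $T$ is $O(n)$, and every step above is a constant number of linear-time graph traversals, giving the claimed $O(n)$ bound.
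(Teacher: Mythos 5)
The paper does not actually prove this lemma---it is imported verbatim as a cited result of Ben-Moshe et al.~\cite{ben}---so there is no internal proof to compare against; your proposal supplies the missing argument, and it is correct. Your route is the natural one: a single Hopcroft--Tarjan DFS computes the blocks, which in a cactus are exactly the bridges and the simple cycles, and since a cactus on $n$ vertices has $O(n)$ edges (in fact at most $\lfloor 3(n-1)/2 \rfloor$), the decomposition, the vertex classification, the assembly of $T$, and the rooting/associate-vertex sweep are each $O(n)$. One point in your write-up deserves emphasis, because you got it right where a literal reading of the paper's definition would fail: you connect two non-cycle nodes of $T$ only via \emph{bridge} edges recorded during the DFS. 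The paper's stated rule (``adjacent if and only if the vertices of $G$ that they represent are adjacent'') would, taken literally, join two hinges of the same cycle that happen to be consecutive on that cycle, producing a triangle with the cycle node, so $T$ would not be a tree; restricting non-cycle adjacencies to bridges is the reading under which the contraction of each cyclic block preserves acyclicity and the lemma holds. Your equivalence of ``hinge'' with ``cut vertex of degree at least $3$ on a cycle'' is also legitimate, since in a cactus any cycle vertex of degree at least $3$ lies in at least two blocks and is therefore a cut vertex. The only cosmetic imprecision is the offhand edge count (``$2n-2$ or so''), which does not affect the $O(n)$ bound.
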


\subsection{Circular arcs} \label{sec:cacrs}

A \emph{circular arc} is the arc of a circle between a pair of distinct points. In this section, we describe some operations on a set of circular arcs. Let $C$ be a set of $m$ circular arcs. We assume the midpoints of all arcs of $C$ to be distinct and sorted in the clockwise direction. Let $c_1, c_2, \ldots, c_m$ be the given sorted order of $C$. For $c_i \in C, i < m$, $c_{i + 1}$ is called the \emph{clockwise neighbour} of $C$. For $c_m$, $c_0$ is its clockwise neighbour. The \emph{counterclockwise neighbour} is defined analogously.

We say an arc $c \in C$ is a \emph{super arc} if there exists another arc $c' \in C$ which it completely contains, i.e. $c \supset c'$. We now prove the following.

\begin{lemma} \label{prop:superarcs}
    All super arcs in $C$ can be removed in $O(m)$ time.
\end{lemma}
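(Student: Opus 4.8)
The plan is to reduce super-arc detection to nested-interval detection and then to delete all super arcs in a single circular sweep that maintains a monotone stack of surviving arcs.

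First I would fix the endpoint representation. Writing $\mu_i$ for the (clockwise sorted) midpoint of $c_i$ and $\rho_i$ for its angular half-length, the arc occupies the clockwise interval from $s_i = \mu_i - \rho_i$ to $e_i = \mu_i + \rho_i$ (angles taken modulo the circle). With this notation $c_i \supseteq c_j$ exactly when the clockwise interval of $c_j$ lies inside that of $c_i$, i.e.\ when $s_i \preceq s_j \preceq e_j \preceq e_i$ read along $c_i$. Thus $c_i$ is a super arc iff some other arc's endpoint interval nests inside its own, and after deleting every super arc the survivors form an antichain under $\supseteq$ (if two survivors were nested, the outer one would itself have been a super arc). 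A useful fact I would record first is transitivity: if $c'' \supseteq c' \supseteq c$ then $c'' \supseteq c$, so marking an arc super and removing it can never hide a containment, since any arc that nested the removed one still nests something that remains.

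The core of the argument is a stack-based sweep in clockwise midpoint order. I would maintain a stack $S$ of candidate-minimal arcs in \emph{staircase} form (increasing start and increasing end endpoints, so that the arcs currently on $S$ are pairwise non-nested). When the next arc $c$ is read, I compare it with the top arc $c_t$: if $c_t \supseteq c$ then $c_t$ contains another arc, so $c_t$ is a super arc, and I pop and discard it and repeat; if instead $c \supseteq c_t$ then $c$ is a super arc, so I discard $c$ and move on; otherwise $c$ and $c_t$ are non-nested and $c$ is pushed. Each arc is pushed and popped at most once, giving $O(m)$ total work, provided the staircase invariant guarantees that any containment involving $c$ is witnessed by the current top rather than by some buried element. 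The correctness of this local test, resting on the monotonicity of $S$ together with the transitivity observation above, is the heart of the proof.

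The step I expect to be the main obstacle is reconciling the sweep order with the endpoint order and handling wraparound. Because an arc is symmetric about its midpoint but the half-lengths $\rho_i$ vary, the clockwise order of midpoints need not coincide with the order of the start endpoints $s_i$, so I must argue that nested arcs are nevertheless adjacent on the stack at the moment they are compared; this is precisely where the staircase invariant has to be set up and maintained with care. The second difficulty is the circular (rather than linear) nature of the arcs: arcs that straddle the point where the sweep begins can be nested without being caught in a single pass. I would handle this either by cutting the circle at a point interior to a longest arc and treating that arc separately, or by continuing the sweep for one extra wrap of at most $O(m)$ further comparisons, and then verifying that no super arc escapes detection across the seam. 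Combining the linear sweep with this constant-overhead wraparound fix yields the claimed $O(m)$ bound.
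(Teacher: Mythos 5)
Your stack-based sweep is correct in substance but takes a genuinely different route from the paper. The paper builds no stack: it makes two circular passes, one clockwise and one counterclockwise, and in each pass compares the current arc $a$ only with its current (post-deletion) neighbour $b$, deleting $b$ whenever $a \subset b$. Both arguments ultimately rest on the same key fact, which you isolate via your staircase invariant but the paper leaves entirely implicit: among arcs sorted by midpoint, if $c \subset c'$ then every arc whose midpoint lies between theirs either contains $c$ or is contained in $c'$, so every containment is eventually witnessed by an adjacent (or stack-top) comparison. What the paper's formulation buys is that circularity is a non-issue --- each pass is itself a circular traversal that terminates when the starting arc is revisited, and the two opposite directions jointly catch super arcs lying clockwise and counterclockwise of the arcs they contain --- at the price of asserting completeness without proof. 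Your single monotone-stack sweep makes the invariant explicit and handles both directions at once (pops catch super arcs already on the stack, discards catch the incoming arc), but it inherits the seam problem you flag. Of your two proposed fixes, only the extra-wrap pass is sound: cutting the circle at a point interior to a longest arc does not linearize the instance, because arbitrarily many other arcs may also contain the cut point and straddle the seam (indeed every arc might), so ``treating that arc separately'' does not suffice as stated. With the extra-wrap fix spelled out --- each arc participates in $O(1)$ further comparisons, preserving the $O(m)$ bound --- your argument is complete, and if anything more rigorous than the paper's own.
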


\begin{proof}
    Mark all arcs in $C$ as not visited. Arbitrarily select a pair of consecutive arcs $a, b \in C$. Note that $b$ is the clockwise neighbor of $a$. While $a$ is not marked visited we repeat the following steps.
    
    If $a$ is contained in $b$, replace $b$ by its clockwise neighbor and remove old $b$ from $C$. Else, mark $a$ as visited and update both $a$ and $b$ to their respective clockwise neighbors.

    Repeat the entire process once more in the counterclockwise direction.

    The above steps remove all super arcs in $C$. Since, all arcs in $C$ are traversed once in the clockwise direction and once in the counterclockwise direction, the algorithm takes $O(m)$ time. \qed
\end{proof}

The pseudocode for the algorithm of Lemma \ref{prop:superarcs} is presented as Algorithm \ref{algo:superarcs}.

\begin{algorithm}[ht]
    \caption{Remove Super Arcs} \label{algo:superarcs}
    \begin{algorithmic}[1]
        \State Mark all arcs in $C$ as unvisited

        \State Let $a \in C$
        \State Set $b \leftarrow$ the clockwise neighbor of $a$

        \While{$a$ has not be marked visited}
            \If{$a \subset b$}
                \State Set $b' \leftarrow$ the clockwise neighbor of $b$
                \State Remove $b$ from $C$
                \State Set $b \leftarrow b'$
            \Else
                \State Mark $a$ as visited
                \State Set $a \leftarrow b$
                \State Set $b \leftarrow$ the clockwise neighbor of $b$
            \EndIf
        \EndWhile

        \State Repeat the entire process once more in the counterclockwise direction



    \end{algorithmic}
\end{algorithm}

A set of circular arcs is called \emph{proper} if it does not contain any super arcs. For the remaining part of this section, we consider $C$ to be a set of proper circular arcs. The \emph{minimum piercing set} $\chi$ of $C$ is a set of points of minimum cardinality such that each arc of $C$ contains some point of $\chi$. The cardinality of $\chi$ is called the \emph{piercing number} of $C$. Hsu et al. \cite{hsu} showed that the maximum independent set of a set of $m$ circular arcs can be found in $O(m)$ time. The same technique can be used to show the following result.

\begin{lemma}[Hsu et al. \cite{hsu}]
    The piercing set of $C$ can be found in linear time when the midpoint of the arcs in $C$ are given in sorted order.
\end{lemma}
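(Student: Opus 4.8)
The plan is to reduce the circular problem to the classical interval-piercing problem, which admits a linear-time greedy once the endpoints are sorted, and then to handle the wraparound separately. First I would exploit properness: since no arc of $C$ contains another, ordering the arcs by the clockwise position of their clockwise (trailing) endpoints coincides with ordering them by their counterclockwise (leading) endpoints, so all $2m$ endpoints inherit a single consistent cyclic order. Because the midpoints are given sorted, this endpoint order, together with pointers from each arc to its two endpoints and to its clockwise/counterclockwise neighbours, can be assembled in $O(m)$ time.

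Recall the standard greedy for stabbing intervals on a line: repeatedly take an as-yet-unpierced interval whose trailing endpoint is smallest, place a piercing point exactly there, and discard every interval that this point stabs; using the sorted endpoint order this is a single $O(m)$ sweep. To port this to the circle I would \emph{anchor} the sweep: pick an arbitrary arc $c_1$, place a point at its clockwise endpoint $e_1$, discard all arcs stabbed by $e_1$, and then run the linear greedy clockwise on the remaining arcs, which now behave like intervals on the line obtained by cutting the circle at $e_1$. This yields a valid piercing set $\chi$, say of size $k_1$, in $O(m)$ time.

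For correctness I would track the \emph{forcing} arcs $a_1 = c_1, a_2, \ldots, a_{k_1}$, where $a_i$ is the arc whose clockwise endpoint carries the $i$-th point. By construction each $a_{i+1}$ begins strictly clockwise of the $i$-th point, so $a_1, \ldots, a_{k_1}$ are pairwise disjoint except possibly for the wraparound pair $a_{k_1}, a_1$. If these two are also disjoint, we obtain $k_1$ pairwise non-overlapping arcs; since any piercing set needs a distinct point per disjoint arc, this gives the matching lower bound $\mathrm{OPT} \geq k_1$ and proves the greedy optimal. The hard part will be the single residual case where $a_{k_1}$ and $a_1$ overlap, which only yields the bound $k_1 - 1$: here the anchor $e_1$ may have been a wasteful choice, and I would resolve the off-by-one by re-anchoring the sweep at the clockwise endpoints of arcs stabbing the neighbourhood of the cut and testing whether $k_1 - 1$ points suffice. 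The genuine obstacle is showing that this candidate set of re-anchors is small and that the combined re-sweeps stay within $O(m)$ rather than degrading to quadratic; establishing that the ambiguity is confined to a bounded neighbourhood of the cut is the crux, and it mirrors the maximum-independent-set sweep of Hsu et al., which is exactly the technique the statement invokes.
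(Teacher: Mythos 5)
First, be aware that the paper never proves this lemma: it is imported from Hsu et al.\ \cite{hsu} with the one-line remark that their linear-time maximum-independent-set technique for circular arcs carries over to piercing. So your attempt is not being measured against a written argument; it has to stand on its own, and at present it does not. The parts you do prove are correct and standard: anchoring at the clockwise endpoint $e_1$ of an arbitrary arc $c_1$ reduces the remaining instance to interval piercing, the sweep is linear, and the pairwise-disjoint forcing arcs give $\mathrm{OPT} \in \{k_1-1,\, k_1\}$. That is the easy half; the entire content of the lemma is deciding, in linear time, which of the two values is correct.

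Your proposed repair of that last step is aimed in the wrong direction, and for a reason you can pin down: you hope the ambiguity is ``confined to a bounded neighbourhood of the cut'' so that only a few re-anchors are needed, but properness does not bound the number of arcs through a point. There can be $\Theta(m)$ proper arcs all intersecting $c_1$ (long, pairwise staggered arcs, none containing another), and any one of their clockwise endpoints lying inside $c_1$ is a legitimate candidate for the point of an optimal solution that stabs $c_1$ (normalize any solution by sliding each point clockwise to the first clockwise endpoint it meets). So your candidate set has size $\Theta(m)$, each re-sweep costs $\Theta(m)$, and the scheme degrades to $\Theta(m^2)$ precisely in the case the lemma exists to handle. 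The missing idea is not that the candidate set is small --- it is that each candidate can be tested in $O(1)$ after $O(m)$ preprocessing: compute the clockwise next of every arc, and from it the clockwise $q$-next of every arc for $q = k_1 - 2$, which is exactly the compatibility-graph machinery of Lemma \ref{prop:qnext} and Algorithm \ref{algo:qnext} in this paper. A candidate anchor then achieves $k_1 - 1$ points if and only if its greedy chain of $k_1 - 1$ endpoint-placed points already wraps around the circle, a constant number of lookups per candidate, hence $O(m)$ over all candidates. This jump-structure argument is the technique of Hsu et al.\ that you gestured at but did not supply, and without it the proof is incomplete at its crux.
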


Each arc $c \in C$ has two endpoints. Starting from the interior of $c$, the endpoint encountered first in a clockwise traversal is the \emph{clockwise endpoint} of $c$. The other endpoint is the \emph{counterclockwise endpoint} of $c$.
The \emph{clockwise next} of $c$ is the arc $c'$ which is encountered first in a clockwise traversal starting at the clockwise endpoint of $c$ and which does not intersect $c$. We similarly define the \emph{counterclockwise next} of $c$. See Figure \ref{fig:clockwisenext} for reference.

\begin{figure}[ht]
    \centering
    \includegraphics[width = .45 \textwidth]{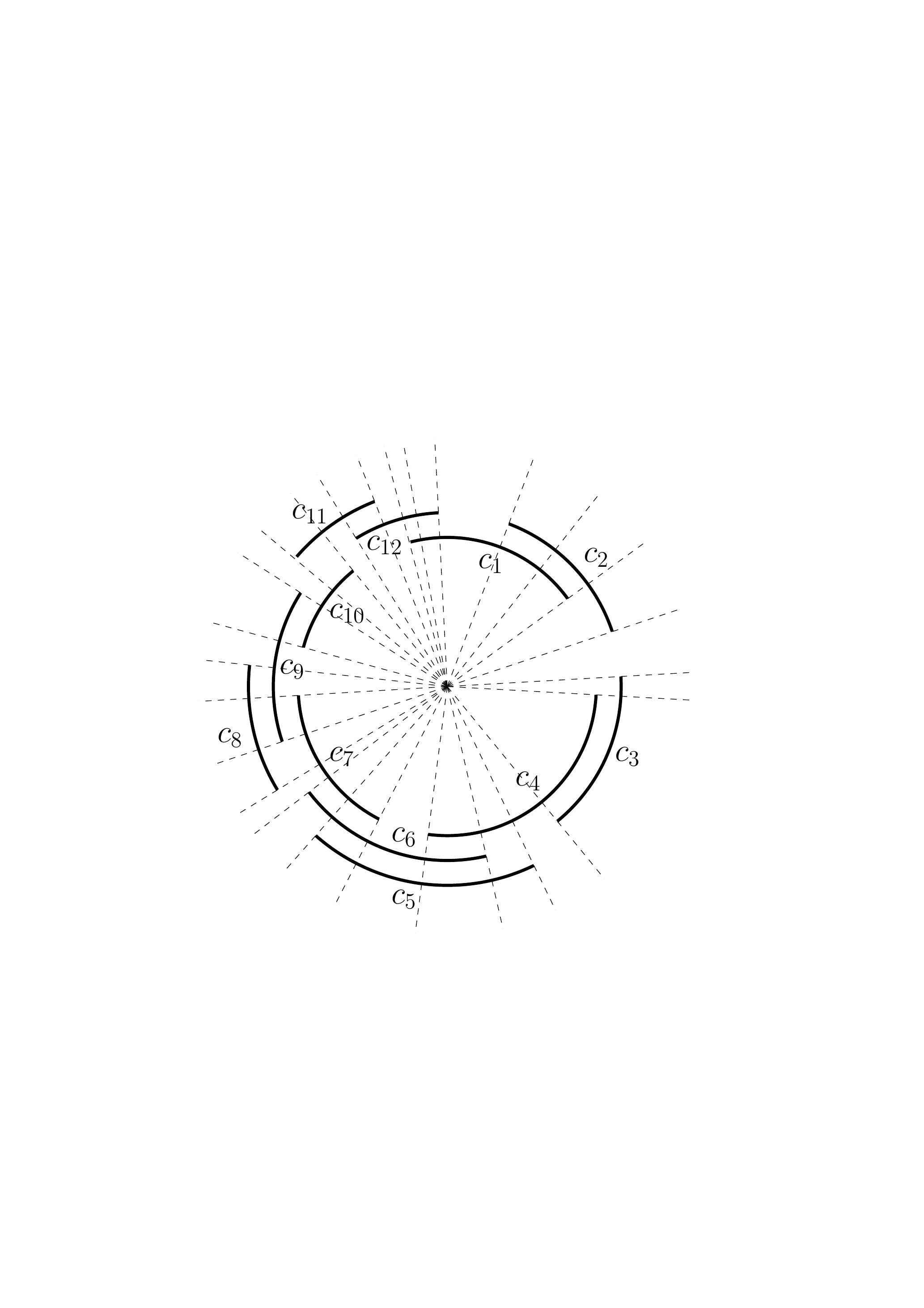}
    \caption{For the circular arc $c_1$, its clockwise neighbour is $c_2$, its clockwise next is $c_3$ and for say $q = 3$ its clockwise $q$-next is $c_8$.} \label{fig:arcs}
    \label{fig:clockwisenext}
\end{figure}

Since the endpoints of $C$ are given to us in sorted order, we make the following claim.

\begin{lemma} \label{prop:next}
    The clockwise (counterclockwise) next of all arcs in $C$ can be found in $O(m)$ time.
\end{lemma}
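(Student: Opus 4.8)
The plan is to exploit the fact that $C$ is proper, so that a single monotone pointer suffices to locate the clockwise next of every arc in amortized constant time. First I would fix notation: for each arc $c_i$ let $s_i$ denote its counterclockwise endpoint and $e_i$ its clockwise endpoint, so that traversing the circle clockwise one enters $c_i$ at $s_i$ and leaves it at $e_i$. Since the midpoints are sorted clockwise as $c_1, \dots, c_m$, I would first argue that properness forces the two endpoint sequences $(s_i)$ and $(e_i)$ to appear in the same clockwise cyclic order as the indices: if some arc began or ended out of order relative to a neighbour, then one of the two arcs involved would have to contain the other, contradicting the absence of super arcs. This structural step is what makes the whole computation linear, and I expect it to be the main thing to get right, in particular the wrap-around cases where an arc spans a large portion of the circle.

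Given this ordering, I would characterize the clockwise next of $c_i$ as the first arc $c_j$, with $j$ taken clockwise after $i$, whose entry point $s_j$ lies strictly clockwise beyond the exit point $e_i$ of $c_i$; every arc encountered before such a $c_j$ still overlaps $c_i$ and is therefore skipped. The key observation is monotonicity: because the exit points $e_i$ advance clockwise as $i$ increases and the entry points $s_j$ advance clockwise as $j$ increases, the index of the clockwise next of $c_i$ never moves backwards when $i$ is advanced by one. Hence a two-pointer sweep works — I would keep one pointer $i$ running over the arcs and a second pointer $j$ that only ever advances clockwise, stopping it at the first arc disjoint from the current $c_i$ and recording that arc as the clockwise next of $c_i$.

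To handle the cyclic wrap-around cleanly I would run the sweep over the doubled sequence $c_1, \dots, c_m, c_1, \dots, c_m$ (equivalently, index everything modulo $m$), so that the next of an arc near the end of the list can still be located among the early arcs. Since the advancing pointer $j$ traverses at most $O(m)$ positions in total and each arc's next is recorded exactly once, the clockwise next of all arcs is computed in $O(m)$ time. The counterclockwise next is obtained by the mirror-image sweep, reversing the roles of the two endpoints and the direction of traversal. The only delicate points, which I would verify carefully, are the correctness of the monotonicity claim in the presence of wrap-around and the boundary case of an arc that intersects every other arc, which has no clockwise next and should simply be reported as such.
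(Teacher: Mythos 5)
Your proposal is correct and matches the paper's own argument: the paper's Algorithm \ref{algo:next} is exactly your two-pointer sweep, with one pointer $a$ visiting each arc once and a second, never-retreating pointer $b$ advanced clockwise past intersecting arcs until the first disjoint arc is recorded as the clockwise next. Your added justification of the monotonicity via properness (and your flagging of the arc-intersecting-everything corner case) is sound and in fact more careful than the paper's terse proof, which leaves these points implicit.
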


\begin{proof}
    Mark all arcs in $C$ as visited. Arbitrarily select an arc $a \in C$ and set $b \leftarrow a$. While $a$ is not marked visited repeat the following.
    
    If $a$ intersect $b$, update $b$ to its clockwise neighbour. Else, mark $a$ as visited and set $b$ as the clockwise neighbour of $a$. Update $a$ to its clockwise neighbour.

    The above steps set the clockwise next for all arcs in $C$ in $O(m)$ time. The counterclockwise next of all arcs can be found similarly. \qed
\end{proof}

The pseudocode for the algorithm of Lemma \ref{prop:next} is presented as Algorithm \ref{algo:next}. The algorithm to find all counterclockwise nexts of the arcs of $C$ is similar.

\begin{algorithm}[ht]
    \caption{Clockwise Next} \label{algo:next}
    \begin{algorithmic}[1]
        \State Mark all arc in $C$ as not visited
        
        \State Let $a \in C$
        \State Set $b \leftarrow a$
        
        \While{$a$ is not marked visited}
            \If{$a \cap b \neq \varnothing$}
                \State Set $b \leftarrow$ the clockwise neighbour of $b$
            \Else
                \State Set $b$ as the clockwise next of $a$
                \State Mark $a$ as visited
                \State Set $a \leftarrow$ the clockwise neighbour of $a$
            \EndIf
        \EndWhile
    \end{algorithmic}
\end{algorithm}

For any positive integer $q$, let $[q]$ denote the set $\{ 1, 2, \ldots, q \}$. We now define the \emph{clockwise $q$-next} of an arc $c_0 \in C$. Let $c_0, c_1, \ldots, c_q$ be a sequence of circular arcs of $C$ such that for any $i \in [q]$, $c_{i}$ is the clockwise next of $c_{i-1}$. The arc $c_q$ then is called the clockwise $q$-next of $c_0$. Note that, for any, $i \in [q]$, the arc $c_{i}$ is the clockwise $i$-next of $c_0$. The counterclockwise $q$-next of $c$ is defined similarly. See Figure \ref{fig:arcs} for reference.

\begin{lemma} \label{prop:qnext}
    The clockwise (counterclockwise) $q$-next of all arcs of $C$ can be found in $O(m)$ time.
\end{lemma}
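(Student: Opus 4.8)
The plan is to view the clockwise-next relation as a function and reduce the computation of all clockwise $q$-nexts to following this function in an amortised linear-time fashion. By Lemma~\ref{prop:next} we can compute, in $O(m)$ time, the map $f \colon C \to C$ sending each arc to its clockwise next; by definition the clockwise $q$-next of an arc $c$ is exactly $f^q(c)$, the $q$-fold iterate of $f$ at $c$. So it suffices to evaluate $f^q$ at every arc of $C$ in $O(m)$ total time. Following the pointers $q$ times per arc costs $O(mq)$, which is too slow, so I would instead exploit the global structure of $f$.

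Since $f$ assigns to each of the $m$ arcs a single successor, its functional graph has out-degree one at every node, and hence decomposes into connected components, each consisting of a unique directed cycle together with a forest of trees whose edges are oriented towards the cycle. First I would identify these cycles by a standard traversal that marks arcs as it walks along $f$; this takes $O(m)$ time and yields, for each cycle, its length $L$ and a cyclic indexing of its arcs. For an arc $c$ lying on a cycle of length $L$ at cyclic index $p$, applying $f$ keeps us on the cycle, so $f^q(c)$ is simply the cycle arc at index $(p+q) \bmod L$, found in $O(1)$ time per cycle arc.

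The arcs not on any cycle lie on the trees (tails) feeding into the cycles. To handle these I would build the reverse pointers of $f$ in $O(m)$ time, root each tree at the cycle arc it attaches to, and run a depth-first search from these roots. During the search I maintain the current root-to-node path in a stack, so that the arc currently visited at depth $d$ has its chain of $f$-ancestors available in positions $0, 1, \dots, d$ of the stack. If $d \ge q$, then $f^q(c)$ is the stack entry at position $d-q$, read off in $O(1)$ time. If $d < q$, the walk of length $q$ overshoots the root after $d$ steps and then continues for $q-d$ further steps along the cycle; since the root is a cycle arc of known index, the answer is the cycle arc at index $(\text{root index} + (q-d)) \bmod L$, again obtained in $O(1)$ time. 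Each arc is thus processed in constant time, for $O(m)$ overall, and the counterclockwise $q$-next is computed identically using the counterclockwise next of Lemma~\ref{prop:next}.

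The main obstacle is the bookkeeping for tail arcs whose length-$q$ walk leaves the tree and enters the cycle, possibly wrapping around it several times; the combination of the depth-first-search stack, which supplies the $q$-th $f$-ancestor within a tree in constant time, with the precomputed cyclic indices and modular arithmetic, which absorbs any number of wraps around the cycle at unit cost, is what keeps every arc's computation at $O(1)$ and the total at $O(m)$.
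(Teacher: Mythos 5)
Your proposal is correct and takes essentially the same approach as the paper: both treat the clockwise-next map as a functional graph whose components are directed cycles with in-trees feeding into them, and both compute every arc's $q$-th iterate in $O(1)$ amortised time via a depth-first traversal of those trees with access to the current next-chain (your ancestor stack versus the paper's sliding deque in Algorithm~\ref{algo:qnext}). Your explicit modular-arithmetic treatment of arcs on, or overshooting onto, the cycles cleanly fills in what the paper delegates to ``a careful implementation.''
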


\begin{proof}
    The pseudocode for Lemma \ref{prop:qnext} is presented as Algorithm \ref{algo:qnext}. The algorithm uses the compatibility forest data structure for intervals in Gavruskin et al. \cite{gavruskin} by extending it to circular arcs and appropriately naming it the compatibility graph data structure.
    
    Let $G_C$ represent the clockwise compatibility graph of $C$ which is a graph with vertices for each arc of $C$ and an edge from $a$ to $b$, $a, b \in C$, if $b$ is the clockwise next of $a$. The graph $G_C$ can be disconnected with each connected component having exactly one directed cycle. Note that there are also no other undirected cycle in that component.

    A careful implementation of Algorithm \ref{algo:qnext} will result in an $O(m)$ time algorithm. The algorithm for the counterclockwise case is similar. \qed
\end{proof}

\begin{algorithm}[ht]
    \caption{Clockwise $q$-Next} \label{algo:qnext}
    \begin{algorithmic}[1]
        \State Initialize all arcs of $C$ as unvisited
        \For{each connected component $G'$ of $G$}
            \State Let $r$ be an arc on the directed cycle of $G'$
            \State Remove the edge connecting $r$ to its clockwise next to convert $G'$ into a tree $T'$ rooted at $r$
            \State Set $a \leftarrow r$
            \State Initialize a dequeue $Q$ of size $q$ such that the $i^{\text{th}}$ element in $Q$ is the clockwise $i$-next of $a$
            \While{$r$ is not marked visited}
                \If{$a$ has an unvisited child $b$}
                    \State Update $Q$ by inserting $a$ as its first element and deleting its last element
                    \State Set $a \leftarrow b$
                \Else
                    \State Let $b$ be the last element in $Q$
                    \State Set $b$ as the clockwise $q$-next of $a$ and marking $a$ visited
                    \State Update $Q$ by removing its first element and inserting the clockwise next of $b$ as its last element
                    \State Set $a \leftarrow $ the parent of $a$
                \EndIf
            \EndWhile
        \EndFor
    \end{algorithmic}
\end{algorithm}

\section{Feasibility Test} \label{sec:feasibility}

In this section, we present an algorithm which solves the decision version of the weighted $k$-center problem in time linear in the size of the cactus. The problem is commonly referred to as the \emph{feasibility test} in literature. Given a cactus network $G$, let $\lm^* \in \R_{\geq 0}$ be the optimal cost of the weighted $k$-center problem in $G$. For an input $\lm \in \R_{\geq 0}$, the feasibility test asks whether $\lm \geq \lm^*$, or not. Note that the optimal cost $\lm^*$ is an unknown value. If $\lm \geq \lm^*$, then we say $\lm$ is feasible, otherwise, $\lm < \lm^*$ and we say $\lm$ is infeasible.

For a given $\lm$, our algorithm works by placing a minimum number of centers $X$, such that every vertex in $G$ is covered by some center in $X$ with cost $\lm$. If $|X| \leq k$, then $\lm$ is feasible; otherwise $|X| > k$, and $\lm$ is infeasible. The algorithm uses the tree representation $T$ of $G$ in order to fix a processing order on the vertices of $G$. Recall that $T$ is rooted at the node $t_r$. We process the nodes of $T$ in a bottom-up order, visiting the leaves first, and then visiting the nodes whose children have been processed.

\begin{figure}[ht]
    \centering
    \includegraphics[width = \textwidth]{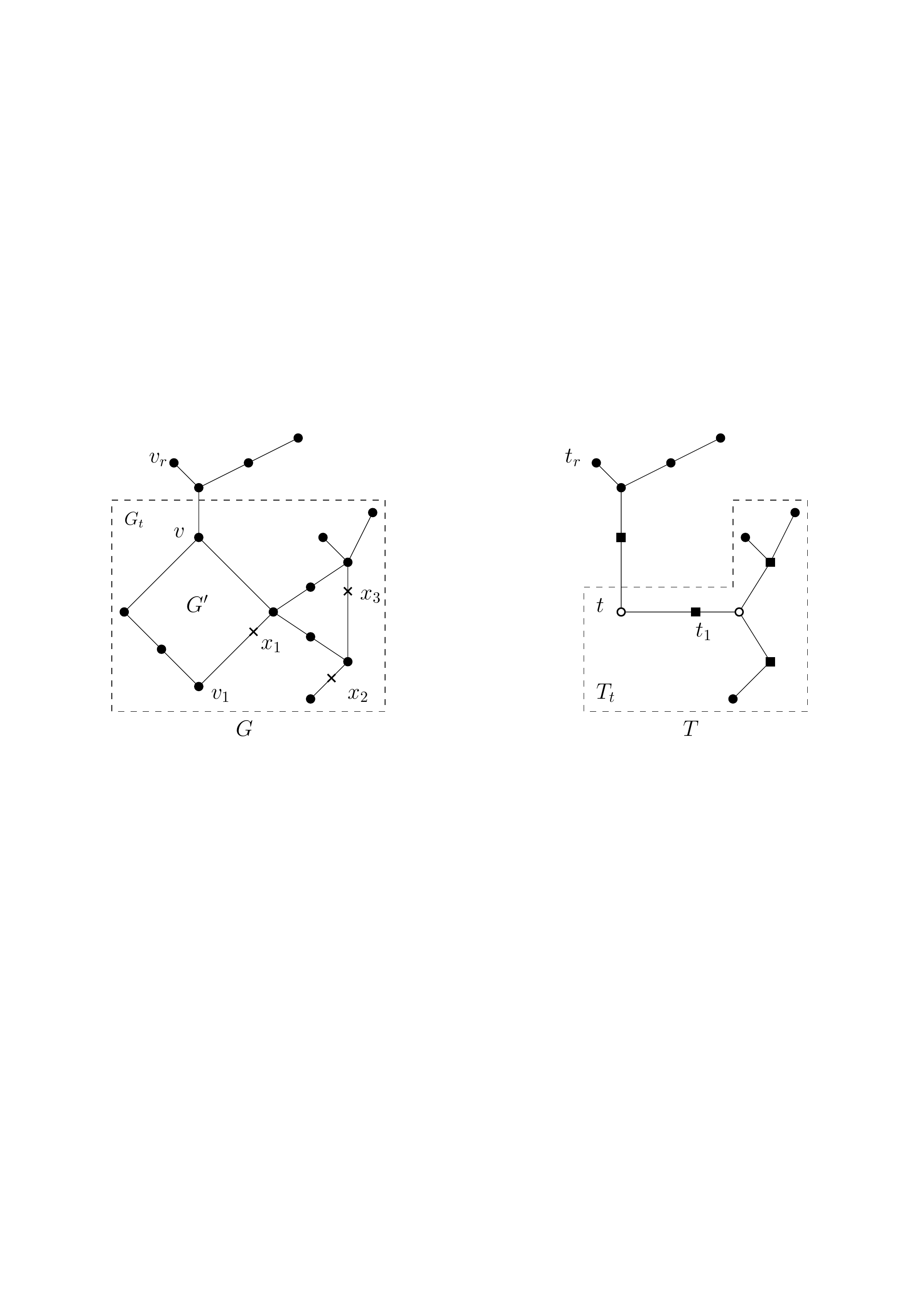}
    \caption{The subgraphs $T_t$ and $G_t$. Vertex $v$ and $v_1$ in $V(G)$ are the associate of $t$ and $t_1$ in $V(T)$, respectively. Node $t_1$ is a child of $t$, $t \in V(T)$ represents the cycle $H$ of $G$, and $X_t = \{x_1, x_2, x_3\}$.}
    \label{fig:ftest}
\end{figure}

For $t \in V (T)$, let $T_t$ denote the subtree of $T$ rooted at $t$ and let $G_t$ denote the subgraph of $G$ that $T_t$ represents. For each node $t \in V (T)$, we store two values $\cl(t)$ and $\cv(t)$, defined as follows.

Consider an intermediate step in our feasibility test algorithm where all vertices in $G_t$ have already been processed. The algorithm may have placed some centers inside $G_t$. Let these centers be represented by the set $X_t$. The \emph{uncovered} vertices of $G_t$ are those vertices which cannot be covered by $X_t$ with cost $\lm$.

Let $v_t \in V(G)$ be the associate vertex of $t \in V(T)$. The function $\cv(t)$ represents the maximum distance that a center can be placed outside $G_t$ from $t$, that covers all uncovered vertices of $G_t$ with cost $\lm$. Let $V' \subseteq V(G_t)$ be the uncovered vertices of $G_t$. We define $\cv(t)$ as follows.

\begin{align*}
    \cv(t) = \begin{cases}
        \min \left \{ \lm/w(v) - d(v, v_t) \mid \forall v \in V' \right \} & \text{ if } V' \neq \varnothing \\
        \infty & \text{ otherwise }
    \end{cases}
\end{align*}

The function $\cl(t)$ stores the distance of the closest center inside $G_t$ from $v_t$. We define $\cl(v)$ as follows.

\begin{align*}
    \cl(t) = \begin{cases}
        \min \left \{ d(x, v_t) \mid \forall x \in X_t \right \} & \text{ if } X_t \neq \varnothing \\
        \infty & \text{ otherwise }
    \end{cases}
\end{align*}

Recall that we traverse the nodes of $T$ bottom-up. We also maintain a variable $count$, which counts the number of centers placed till now. Let $t$ be the node of $T$ that we are currently at and let $v$ be its associate vertex. In Section \ref{sec:hinge_graft} and \ref{sec:cycle} we show how to place centers inside $G_t$ depending on whether $t$ represents a hinge/graft or a cycle of $G$.

\subsection{$t$ represents a hinge or a graft of $G$} \label{sec:hinge_graft}

Let $V_t \subseteq V(T)$ be the set of children of $t$. For each $t_i \in V_t$, let $v_i$ be its corresponding vertex.

Let $V_1 \subseteq V_t$, be such that $\forall t_i \in V_1$, $\cv(t_i) < d(v_i, v_t)$. Observe that for all such $t_i \in V_1$, its is necessary to place a center on the edge $(v_i, v_t)$ in order to cover the remaining uncovered vertices in $G_{t_i}$. We do so by placing a center on the edge $(v_t, v_i)$ at a distance $\cv(t_i)$ from $v_i$ and incrementing $count$ by one. Since all vertices in $T_{t_i}$ are now covered, we set $\cv(t_i)$ to $\infty$.

As per definition, we now set $\cl(t)$ as the minimum of all distances of all centers placed in $G_t$, from $v_t$. Let $x$ be such a center i.e. $x$ is closest to $v_t$. If $G_t$ has no centers placed, we set $\cl(t)$ to $\infty$.

Now, let $V_2 \subseteq V_t \setminus V_1$ be such that $\forall t_j \in V_2$, $\cv(t_j) \geq d(v_t, v_j) + \cl(t)$. Observe that for all such $t_j \in V_2$, all vertices in $G_{t_j}$ are either covered by the centers placed inside $G_{t_j}$ or by the center $x$, if it exists. Therefore, we set $\cv(t_j)$ as $\infty$.

Finally, we update $\cv(t)$. We define the parameter $\delta$ as
\begin{equation*}
    \delta = \min \{\cv(t_i) - d(v_t, v_i) \mid \text{for all child } t_i \text{ of } t\}
\end{equation*}

If $\cl(t) \leq \frac{\lm}{w(v_t)}$ then we set $\cv(t) = \delta$, otherwise $\cv(t) = \min \{ \lm / $ $w(v_t), \delta \}$.

\subsection{$t$ represents a cycle of $G$} \label{sec:cycle}

Let $H$ be the cycle of $G$ that $t$ represents and let $m = |V(H)|$. For a vertex $v_j \in V(H)$, let $\nr(v_j)$ be the distance of the nearest center in $G_t$ to $v_j$. The definition of $\nr(\cdot)$ is similar to that of $\cl(\cdot)$. Where $\cl(\cdot)$ is defined on the nodes in $V(T)$, $\nr(\cdot)$ instead is define on the vertices of the cycles in $G$. Observe that the only information required to compute $\nr(v_j), \forall v_j \in V(H)$ is $\cl(t_i), \forall t_i \in V_t$. Therefore, it is not difficult to show that $\forall v_j \in V(H), \nr(v_j)$ can be computed in $O(m)$ time.

If $v_j \neq v_t$ then $v_j$ is uncovered if $\nr(v_j) > \cv(t_j)$, where $t_j \in V(T)$ represents $v_j$. If $v_j = v_t$ or is not a hinge then $v_j$ is uncovered if $\nr(v_j) > \lm / w(v_j)$. Our objective is to cover all the uncovered vertices in $H$ by placing zero or more centers in $H$ and at most one outside $G_t$.

We convert this problem of placing centers in $H$ to an equivalent piercing set problem on a set of circular arcs $C$ as follows. For $i \in [m]$, if $v_i$ is a hinge and if $v_i \neq v_t$, let $t_i \in V(T)$ be the node that represents $v_i$. We set $l_i = 2 \times \cv(t_i)$. Otherwise, we set $l_i = 2 \lm / w(v_i)$. We insert into $C$ a circular arc of length $l_i$ centered at $v_i$.


Our objective is to place a set of piercing points so that the overall number of centers placed is minimum. Let $C' \subseteq C$ be the subset of circular arcs which do not intersect $v_t$. Let $q$ and $q'$ be the size of the minimum piercing set of $C$ and $C'$, respective. Observe that $q - 1 \leq q' \leq q$. If $q' = q - 1$, then instead of placing a center $x$ at $v_t$, it can be shifted outside $G_t$, within a permissible distance.

We therefore find a set of piercing points $\chi$ such that either (a) all uncovered vertices of $H$ are covered by $\chi$, and $\exists x \in \chi$ which is placed as close as possible to $v_t$; (b) not all uncovered vertices of $H$ are covered by $\chi$ and one more center $x$ is needed to be placed outside $G_t$ to cover them. In the later case, the $\chi$ is selected such that $x$ can be placed as far away as possible from $v_t$. If $q' = q$, then Case (a) follows, otherwise $q' = q - 1$ and Case (b) follows. Given $H$ in cyclic order, we show how to compute the set $\chi$ in $O(m)$ time.


\begin{figure}[ht]
    \centering
    \includegraphics[width = .5 \textwidth]{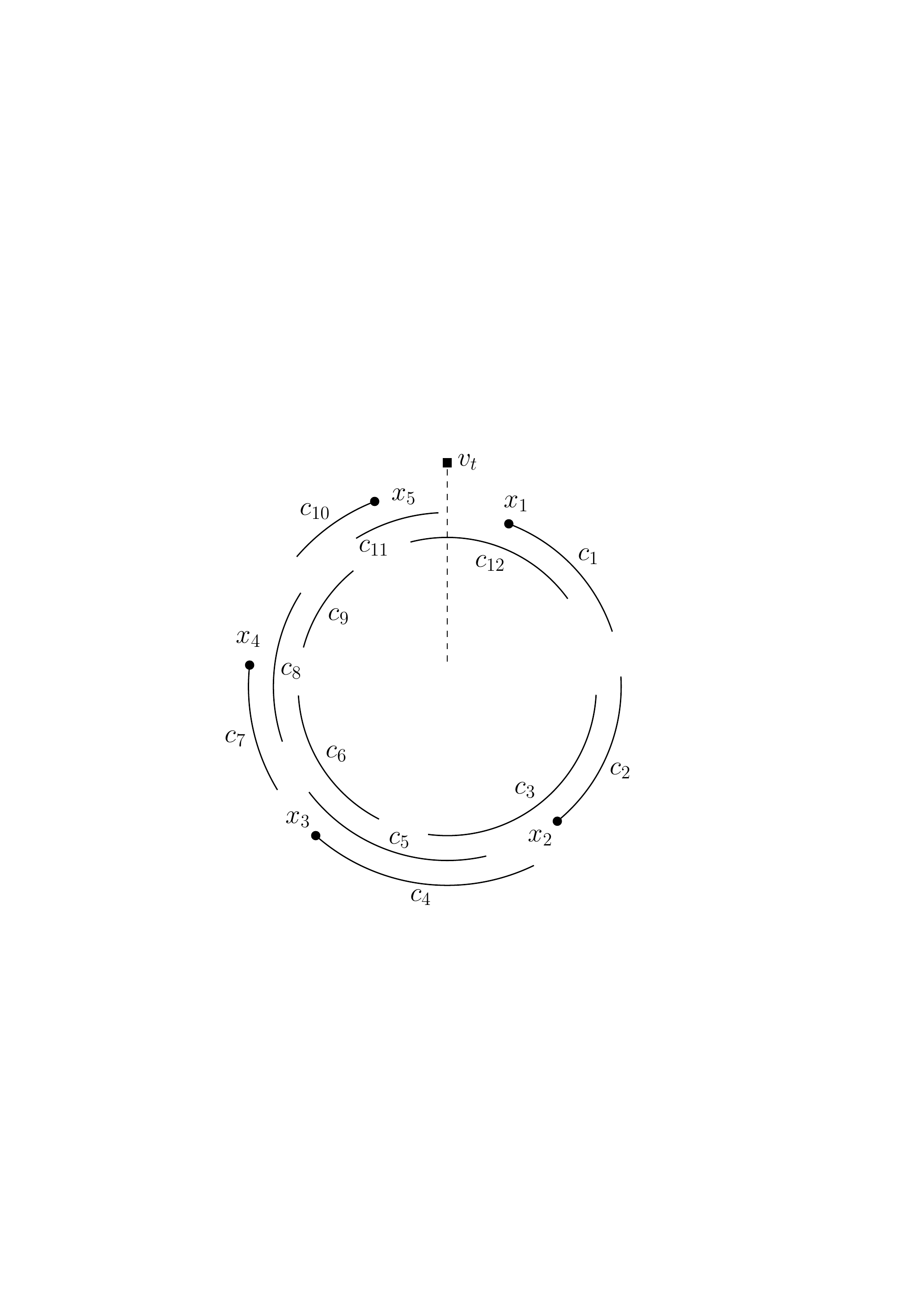}
    \caption{The set $\chi(c_1)$ for $q$/$q'$ = 5.}
\end{figure}

\subsubsection{Case (a)}

In this case, we need to find a minimum piercing set $\chi$ of $C$ such that one of the piercing points $x \in \chi$ is closest to $v_t$. We check for all possible configurations of $\chi$ and chose the one with the closest point to $v_t$. Let $c_1$ be the first arc in the clockwise direction from $v_t$ which does not overlap with it. Let $c_j$ be the clockwise next of $c_1$. For each $c_i \in \{c_1, c_2, \ldots, c_{j-1}\}$, we construct a set of piercing points $\chi(c_j)$ as follows.

We place the first piercing point, $x_1$, at the counterclockwise endpoint of $c_i$. We place the next piercing point, $x_2$, at the clockwise endpoint of the clockwise neighbour of $c_i$, say $c_{i + i_1'}$. We place the next piercing point, $x_3$, at the clockwise endpoint of the clockwise next of $c_{i + i_1'}$, say $c_{i + i_2'}$. We continue placing points in this way and place the $q$-th piercing point $x_{q}$ on the clockwise endpoint of the clockwise $(q-1)$-next of $c_i$, say $c_{i + i_{q - 1}'}$. The set $\chi(c_i)$ is a valid minimum piercing set of $C$ if the clockwise next of $c_{i + i'_{q-1}}$ overlaps with $x_1$.

Note that for any $c_i \in \{c_1, c_2, \ldots, c_{j-1}\}$ the closest point of $\chi(c_i)$ to $v_t$ is either $x_1$ or $y_{q}$. Therefore, we are happy with only computing the first and the $q$-{th} piercing point for each $\chi(c_i)$, which can be done in $O(1)$ time. We select $\chi$ as one of the $\chi(c_i)$ which is a valid  piercing set and which has a point closest to $v_t$. We compute each point in $\chi$ and place a center in $H$ at its corresponding position. We increment $count$ by $q$ and update $\cl(t)$ and $\cv(t)$ accordingly.

\subsubsection{Case (b)}

In this case, we find a set of $q'$ centers on $H$, such that the center outside $G_t$, covering the remaining uncovered vertices in $G_t$ through $v_t$, can be as far away as possible from $v_t$. For each $c_i \in \{c_1, c_2, \ldots, c_{j-1}\}$, we construct a set of piercing points $\chi(c_j)$ as shown earlier where $q$ is replaced by $q'$.

As noted earlier, for any $c_i \in \{c_1, c_2, \ldots, c_{j-1}\}$ the closest point of $\chi(c_i)$ to $v_t$ is either $x_1$ or $x_{q'}$. Therefore, we are happy with only computing the first and the $q'$-{th} piercing point for each $\chi(c_i)$, which can be done in $O(1)$ time.

Let $c_{i - i'_0}$ be the counterclockwise next of $c_i$ and $c_{i + i'_{q'}}$ the clockwise next of $c_{i + i'_{q'-1}}$ and let $v_1$ and $v_2$ be the vertices of $H$ that they respectively represent. Let $V'$ be the set of centers in $G_t$ that are uncovered after placing the centers corresponding to $\chi(c_i)$. Then the distance of the farthest center from $v_t$ which covers the remaining uncovered vertices of $G_t$ from outside is given by $\min \{ \alpha(v_1) - d(v_1, v_t), \alpha(v_2) - d(v_2, v_t) \}$, where $\alpha(v_l)$ ($l = \{1, 2\}$) is $\cv(v_l)$, if $v_l$ is a hinge, and $\lm/v_l$, otherwise. Let us denote this value as the \emph{cost} of $\chi(c_i)$.

We select $\chi$ as one of the $\chi(c_i)$ which has the largest cost. We compute each point in $\chi$ and place a center in $H$ at its corresponding position. We increment $count$ by $q'$ and update $\cl(t)$ and $\cv(t)$ accordingly. If $t = t_r$, we place one more center at $v_r$ and increment $count$ by 1.

\subsection{Time complexity}

\begin{theorem}
    The feasibility test for the weighted $k$-center problem on cactus takes $O(n)$ time, where $n$ is the number of vertices in the cactus.
\end{theorem}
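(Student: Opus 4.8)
The plan is to give an aggregate (charging) analysis over the nodes of the tree representation $T$, showing that the work the algorithm performs at each node is linear in a quantity whose sum over the whole tree is $O(n)$. First I would invoke the cited lemma of Ben-Moshe et al.\ to build $T$ in $O(n)$ time and to fix the bottom-up processing order; this preprocessing, together with assigning the associate vertices and computing the edge lengths $d(v_i, v_t)$ needed locally at each node, is a one-time $O(n)$ cost. The remainder of the argument is then a per-node cost bound followed by a summation.

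For a node $t$ that represents a hinge or a graft (Section \ref{sec:hinge_graft}), let $V_t$ be its set of children. Each of the steps performed at $t$ — identifying $V_1$ and placing the forced centers on the edges $(v_t, v_i)$, computing $\cl(t)$ as a minimum over the placed centers, identifying $V_2$, and finally computing $\delta$ and updating $\cv(t)$ — is a single scan over the children of $t$. Hence the cost incurred at $t$ is $O(|V_t|)$.

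For a node $t$ that represents a cycle $H$ with $m = |V(H)|$ vertices (Section \ref{sec:cycle}), I would bound the cost by $O(m)$ as follows. Computing $\nr(v_j)$ for every $v_j \in V(H)$ from the children's values $\cl(t_i)$ takes $O(m)$, and marking the uncovered vertices is another $O(m)$ scan. Building the arc set $C$ takes $O(m)$ and — crucially — because $H$ is given in cyclic order, the arc midpoints are already sorted, so every circular-arc subroutine we invoke runs in linear time: removing super arcs (Lemma \ref{prop:superarcs}), computing the piercing numbers $q$ and $q'$ (the piercing-set lemma), and precomputing the clockwise next and the clockwise $(q-1)$-next, resp.\ $(q'-1)$-next, of all arcs (Lemmas \ref{prop:next} and \ref{prop:qnext}). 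The key point for both Case (a) and Case (b) is that once these nexts are tabulated, each candidate $\chi(c_i)$ needs only its first and last piercing points, each obtainable in $O(1)$ from the precomputed $q$-next; since there are at most $m$ candidate arcs, selecting $\chi$ and placing the centers costs $O(m)$ in total.

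It remains to sum these bounds. The children counts telescope: $\sum_{t} |V_t| = |V(T)| - 1 = O(n)$, since every node is the child of at most one parent. For the cycle nodes the total is $\sum_{t \text{ a cycle}} O(m_t)$; because $G$ is a cactus no two cycles share an edge, and since a cycle has as many vertices as edges, $\sum_{t} m_t \le |E(G)| = O(n)$ (a hinge shared by several cycles is charged against the distinct cycle edges incident to it). The root contributes at most one extra center in $O(1)$ time. Adding the $O(n)$ preprocessing yields the overall $O(n)$ bound. The main obstacle I anticipate is the cycle analysis: one must verify that the sorted-order precondition of the circular-arc lemmas is genuinely supplied by the cyclic ordering of $H$, and that the per-candidate work in Cases (a) and (b) is truly $O(1)$ rather than $O(q)$ — that is, that precomputing all $q$-nexts once, instead of re-walking a chain of nexts for each candidate, is exactly what keeps a single cycle node at $O(m)$ and prevents the summation from degrading to something superlinear.
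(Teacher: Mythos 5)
Your proposal is correct and follows essentially the same route as the paper's proof: a bottom-up traversal of $T$ charging $O(|V_t|)$ to each hinge/graft node and $O(m)$ to each cycle node via the linear-time circular-arc subroutines of Section \ref{sec:cacrs}, with the per-node costs summing to $O(n)$. Your write-up is in fact more careful than the paper's --- it makes explicit the telescoping of children counts, the edge-disjointness of cycles in a cactus, and the role of precomputed $q$-nexts in keeping per-candidate work at $O(1)$ --- but these are details the paper's argument implicitly relies on rather than a different approach.
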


\begin{proof}
    We traverse each node of $T$ only once. If the node $t$ represents either a hinge or a graft, then the time required to process it is linear in the number of children of $t$. Otherwise, $t$ represents a cycle, and we convert our problem to a problem on circular arcs. This step is again linear in the number of vertices in the cycle. Finally, finding $\chi, \cl(t)$ and $\cv(t)$ also take time linear in the number of vertices in the cycle. Therefore, the running time of the entire algorithm is $O(n)$.  \qed
\end{proof}

\section{Parametric Pruning on Paths} \label{sec:arrangements}

Let $L$ be a set of $m$ lines in $\R^2$ and let $I$ represent the set of $n \choose 2$ pairwise intersection points of the lines in $L$. Consider a feasibility test, similar to the one define in Section \ref{sec:feasibility}, which takes as input a value $\gamma$ and returns \emph{feasible} if $\gamma \geq \gamma^*$, where $\gamma^*$ is a predefined threshold. It returns \emph{infeasible} otherwise.

Let $p_1$ is the highest point in $I$ whose $y$-coordinate value is infeasible and let $p_2$ be the lowest point in $I$ whose $y$-coordinate value is feasible. We have the following result by Chen and Wang \cite{chen}.

\begin{lemma}[Chen and Wang \cite{chen}] \label{lem:lines}
    Given $L$, both $p_1$ and $p_2$ can be found in $O \left((m + \tau) \log m \right)$, where $\tau$ is the running time of the feasibility test.
\end{lemma}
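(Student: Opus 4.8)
The plan is to reduce the task to a single parametric search over the arrangement of $L$, using the monotone feasibility test as the oracle that compares the unknown threshold $\gamma^*$ against candidate $y$-coordinates. First I would record the structural observation that the test is monotone: a value $\gamma$ is feasible exactly when $\gamma \ge \gamma^*$. Consequently, if the intersection $y$-coordinates are listed in increasing order, every value below $\gamma^*$ is infeasible and every value at or above it is feasible, so $p_1$ is the intersection point realizing the largest intersection $y$-value strictly below $\gamma^*$, and $p_2$ the one realizing the smallest intersection $y$-value that is at least $\gamma^*$. Thus it suffices to locate $\gamma^*$ among the $\binom{m}{2}$ values $y(\ell_a \cap \ell_b)$ tightly enough to read off its predecessor and successor.

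The engine is Megiddo's parametric search with Cole's speedup, applied to the generic algorithm $\mathcal{A}$ that \emph{sorts the $m$ lines by their $x$-coordinate along the horizontal line $y = \gamma^*$}. The only data-dependent operations of $\mathcal{A}$ are comparisons of the form ``is $\ell_a$ left of $\ell_b$ at height $\gamma^*$?'', and the order of $\ell_a$ and $\ell_b$ flips precisely at $y(\ell_a \cap \ell_b)$; hence each comparison is resolved by deciding whether $\gamma^* \ge y(\ell_a \cap \ell_b)$, which is exactly one call to the feasibility test on the value $y(\ell_a \cap \ell_b)$. I would run $\mathcal{A}$ as a sorting network of depth $O(\log m)$ (equivalently, Cole's merge-sort schedule). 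At each of the $O(\log m)$ parallel stages there are $O(m)$ independent comparisons, each carrying its associated intersection $y$-value; I compute the weighted median of these values and invoke the test once on it. Monotonicity lets that single answer settle every comparison whose value lies on the decided side, and Cole's weighting guarantees that a constant fraction of the total comparison weight is resolved per call, so that only $O(\log m)$ oracle calls are needed in all. Throughout, I maintain the shrinking interval $(y_{\mathrm{lo}}, y_{\mathrm{hi}})$, where $y_{\mathrm{lo}}$ is the largest tested infeasible value and $y_{\mathrm{hi}}$ the smallest tested feasible value, so that $\gamma^* \in (y_{\mathrm{lo}}, y_{\mathrm{hi}}]$.

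For the cost, the simulation of the network together with the $O(\log m)$ weighted-median selections on $O(m)$ values contributes $O(m \log m)$, while the $O(\log m)$ oracle calls contribute $O(\tau \log m)$, giving the claimed $O((m + \tau)\log m)$. The final step is to extract the exact points $p_1$ and $p_2$ rather than an interval. Here I would use the inversion-counting subroutine: ordering the lines along a horizontal line and counting inversions against their slope order yields, in $O(m \log m)$ time, the number of intersection points below any given height, and with a little care lets me select the predecessor and successor of $\gamma^*$ among all $\binom{m}{2}$ values; these are the $y$-coordinates of $p_1$ and $p_2$.

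I expect the main obstacle to be the bookkeeping that keeps the number of oracle calls at $O(\log m)$ rather than $O(m \log m)$, that is, correctly implementing Cole's weighting so that a single test per stage resolves a constant fraction of the remaining comparisons. A second delicate point is the final extraction: the parametric search alone only brackets $\gamma^*$ by values that the network happened to compare, and pairs of lines that are never directly compared could still have intersection $y$-values inside $(y_{\mathrm{lo}}, y_{\mathrm{hi}})$. Verifying that the inversion-count selection recovers the genuine predecessor and successor of $\gamma^*$, and hence the true $p_1$ and $p_2$, is the part that needs the most care.
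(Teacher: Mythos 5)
The first thing to note is that the paper does not prove this lemma at all --- it is imported as a black box from Chen and Wang \cite{chen} --- so your proposal can only be judged against the standard derivation of such a statement, not against anything in this paper. Judged that way, your main engine is right: monotonicity of the test reduces the problem to finding the predecessor and successor of $\gamma^*$ among the $\binom{m}{2}$ intersection ordinates, and Cole-style parametric search applied to sorting the lines along $y=\gamma^*$ localizes $\gamma^*$ with $O(m\log m)$ simulation work plus $O(\log m)$ oracle calls. Your instinct that Cole's weighting is where the complexity lives is also correct: plain Megiddo over a depth-$O(\log m)$ sorting network uses $O(\log^2 m)$ tests and gives only $O(m\log m + \tau\log^2 m)$, which is weaker than the claimed bound.

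The genuine gap is the final extraction step, which you flag but do not actually repair. Inversion counting tells you \emph{how many} arrangement vertices lie below a given height; it does not let you \emph{select} the predecessor and successor of $\gamma^*$ among the $\binom{m}{2}$ ordinates --- that is essentially the slope-selection problem, and solving it from scratch is far heavier machinery than this lemma needs. The missing (and standard) observation is that the parametric search already hands you the left-to-right order of the lines at heights just below $\gamma^*$, and the first crossing met when sweeping a horizontal line downward (or upward) from $\gamma^*$ must be between two lines \emph{adjacent} in that order: any line lying strictly between them near the crossing would itself cross one of them at a height even closer to $\gamma^*$. Hence $p_1$ and $p_2$ are realized by the $m-1$ adjacent pairs. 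The adjacent-pair ordinates outside the residual interval $\bigl(y_{\mathrm{lo}}, y_{\mathrm{hi}}\bigr]$ are classified for free by monotonicity, and a binary search over the ones inside costs $O(\log m)$ further feasibility tests; the largest value classified infeasible and the smallest classified feasible give $p_1$ and $p_2$. This closes the gap while keeping the total at $O\bigl((m+\tau)\log m\bigr)$.
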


Let $P$ be a path of size $m$ with weights on its vertices and lengths on its edges as shown in Section \ref{sec:intro}. For $u, v \in V(P)$, let $\ct(u, v)$ represent the cost of covering $u$ and $v$ by a single center in $A(P)$ and let $\lct(u, v)$ represent the cost of covering $u$ by a center placed at $v$. Let $R_0(P) = \{\lct(u, v) \mid \forall (u, v) \in V(P) \times V(P)\}$, $R_1(P) = \{\ct(u, v) \mid \forall (u, v) \in V(P) \times V(P)\}$ and $R(P) = R_0(P) \cup R_1(P)$.

Let $v_0 \in V(P)$ be the leftmost vertex of $P$. For each vertex $v_i \in V(P)$, we place a point $q_i$ on the $x$-axis of an $xy$-coordinate system at $x$-coordinate $d(v_0, v_i)$. We place for every vertex $v_i \in V(P)$, three lines $l^+_i, l^-_i$ and $l^{\infty}_i$, with slopes $+w(v_i), -w(v_i)$ and $\infty$ (vertical line), respectively, at point $q_i$. Let $L_P = \left\{l^+_i, l^-_i, l^{\infty}_i \mid \forall v_i \in V(P) \right\}$ and let $I_P$ represent the set of pairwise intersection points of the lines in $L_P$.

\begin{figure}[ht]
    \centering
    \includegraphics[width = .8 \textwidth]{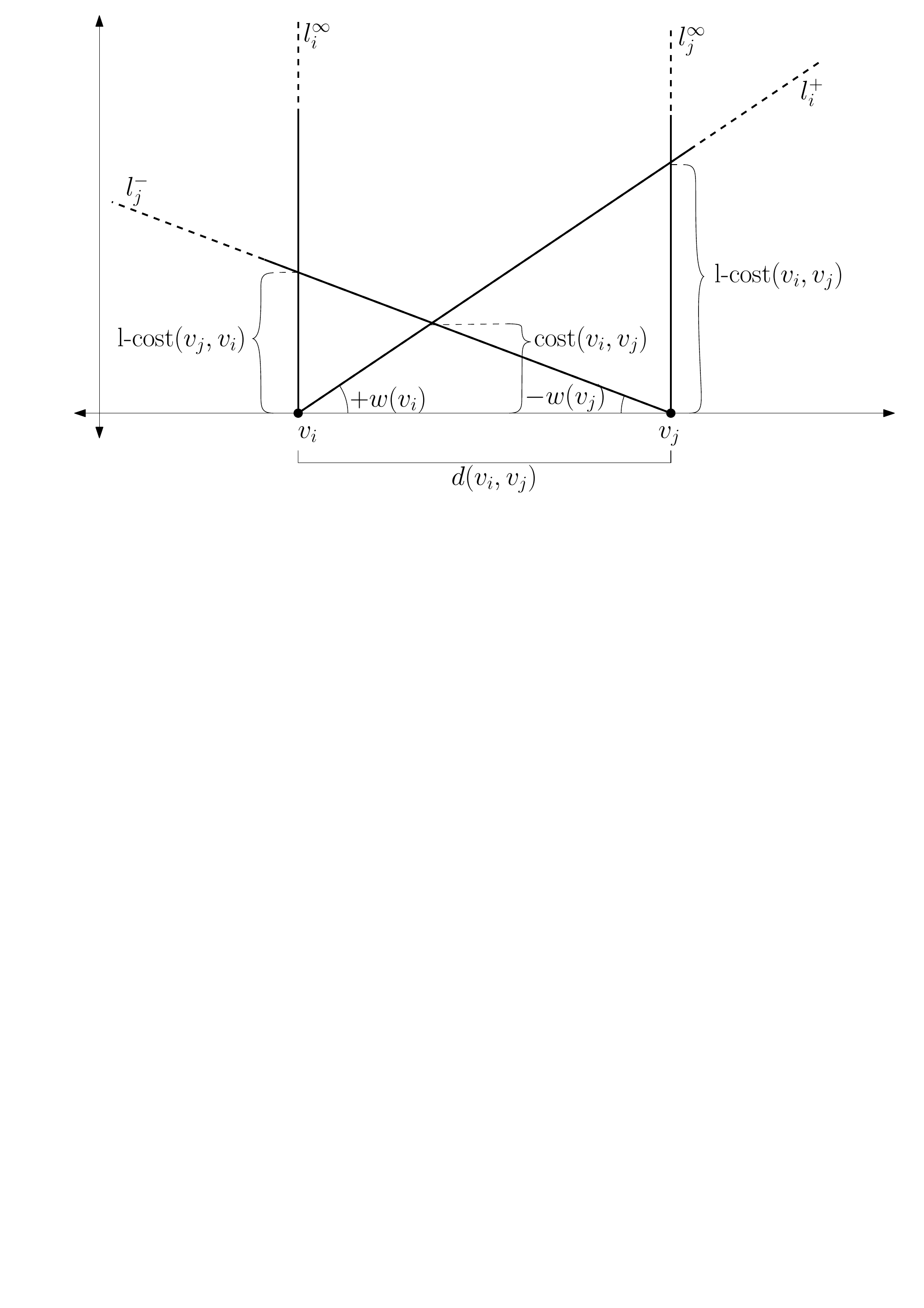}
    \caption{Representing values of $R_0(P) \cup R_2(P)$ in $\R^2$.}
    \label{fig:costs}
\end{figure}

Observe that all values in $R(P)$ are represented as the $y$-value of the points in $I_P$. See Figure \ref{fig:costs} for reference. Therefore, with respect to a feasibility test with predefined threshold $\gamma^*$, we can use Lemma \ref{lem:lines} to find $\lm_1, \lm_2 \in \R_{\geq 0}$ such that $\lm_1 < \gamma^* \leq \lm_2$ and no value in $R(P)$ lies inside the range $(\lm_1, \lm_2)$. The same result can be achieved if instead of a single path we have multiple paths. Given multiple paths $P_1, P_2, \ldots, P_l$ we concatenate the paths by joining their end points sequentially to get a path $\P$. We can then perform the same operations on this resultant path to get $\lm_1, \lm_2 \in \R_{\geq 0}$ such that $\lm_1 < \gamma^* \leq \lm_2$ and $\forall \lm \in \bigcup_{i = 1}^l R(P_i), \lm \notin (\lm_1, \lm_2)$. We therefore make the following claim.

\begin{lemma} \label{lem:paths}
    Given a collection of path networks $P_1, P_2, \ldots, P_l$, and a feasibility test with threshold $\gamma^*$, we can find $\lm_1$ and $\lm_2$ in $O((m + \tau) \log m)$ where $m$ is the size of $\P$ and $\tau$ is the running time of the feasibility test.
\end{lemma}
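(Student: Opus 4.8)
The plan is to reduce the multi-path statement to the single-path construction already described, and then apply Lemma~\ref{lem:lines} essentially verbatim. First I would concatenate $P_1, P_2, \ldots, P_l$ into the single path $\P$ by identifying consecutive endpoints, which takes $O(m)$ time and leaves $\P$ with $m$ vertices. The crucial structural observation is that concatenation does not alter any intra-path distance: for $u, v$ lying in the same original path $P_i$, the unique $\P$-path between them is exactly the $P_i$-segment joining them, so $d_{\P}(u,v) = d_{P_i}(u,v)$, and hence $\ct$ and $\lct$ agree on such pairs. This gives the inclusion $\bigcup_{i=1}^l R(P_i) \subseteq R(\P)$; the set $R(\P)$ may contain additional values coming from pairs of vertices drawn from different $P_i$, but these extra values will turn out to be harmless.

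Next I would build the line set $L_{\P}$ of $3m$ lines exactly as defined above, placing $l_i^+, l_i^-, l_i^{\infty}$ through each $q_i$; this also takes $O(m)$ time. As already argued for a single path, every value of $R(\P) = R_0(\P) \cup R_1(\P)$ is realized as the $y$-coordinate of some intersection point in $I_{\P}$: a value $\lct(v_i, v_j)$ appears as the intersection of $l_i^{+}$ or $l_i^{-}$ with the vertical line $l_j^{\infty}$, and a value $\ct(v_i, v_j)$ appears as the intersection of the increasing line through one vertex with the decreasing line through the other at their balance point. Combining this with the inclusion above yields $\bigcup_{i=1}^l R(P_i) \subseteq R(\P) \subseteq \{\, y(p) \mid p \in I_{\P} \,\}$.

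With this encoding in hand, I would invoke Lemma~\ref{lem:lines} on $L_{\P}$, using the supplied feasibility test with threshold $\gamma^*$, to obtain the highest intersection point $p_1$ whose $y$-coordinate is infeasible and the lowest intersection point $p_2$ whose $y$-coordinate is feasible; I then set $\lm_1 = y(p_1)$ and $\lm_2 = y(p_2)$. Because the feasibility test is monotone in its argument (it reports \emph{feasible} precisely when $\gamma \geq \gamma^*$), infeasibility means $y(p_1) < \gamma^*$ and feasibility means $y(p_2) \geq \gamma^*$, so $\lm_1 < \gamma^* \leq \lm_2$. Moreover $p_1$ and $p_2$ are consecutive in the $y$-order of $I_{\P}$: any intersection point with $y$-value strictly inside $(\lm_1, \lm_2)$ would be either infeasible (contradicting the maximality of $p_1$) or feasible (contradicting the minimality of $p_2$). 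Hence $(\lm_1, \lm_2) \cap I_{\P} = \varnothing$, and by the inclusion no value of $\bigcup_{i=1}^l R(P_i)$ lies in $(\lm_1, \lm_2)$. Since $|L_{\P}| = 3m = O(m)$, Lemma~\ref{lem:lines} runs in $O((m + \tau)\log m)$, which is the claimed bound.

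The step I expect to require the most care is the inclusion $\bigcup_i R(P_i) \subseteq R(\P)$ together with the claim that the surplus cross-path intersection points do no damage. The first part rests on distance preservation under concatenation, which must be checked against the exact way endpoints are joined. The second part is what makes the reduction legitimate at all: the extra $y$-values contributed by inter-path pairs only add more forbidden points inside $I_{\P}$, so the interval $(\lm_1, \lm_2)$ returned by Lemma~\ref{lem:lines} can only become narrower, never spuriously include a value of some $R(P_i)$. Everything else is a direct instantiation of the single-path encoding and of Chen and Wang's result.
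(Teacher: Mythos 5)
Your proposal is correct and follows essentially the same route as the paper: concatenate the paths into $\P$, encode all values of $R(\P)$ as $y$-coordinates of intersection points of the lines $l_i^+, l_i^-, l_i^{\infty}$, and invoke Lemma~\ref{lem:lines} with the given feasibility test. In fact, you make explicit two points the paper leaves implicit --- that intra-path distances (and hence $\bigcup_i R(P_i) \subseteq R(\P)$) are preserved under concatenation, and that the extra cross-path intersection values can only shrink the interval $(\lm_1, \lm_2)$ and are therefore harmless --- so your write-up is a faithful, more detailed version of the paper's argument.
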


\section{$k$-Center on Cactus Stems} \label{sec:stem}

In this section, we define a \emph{cactus stem}, which is a generalization of the (path) stem defined by Frederickson \cite{frederickson1991optimal} for trees and present two algorithm to compute the weighted $k$-center on it. Let $S'$ be a cactus whose tree representation is a path $P'$. Assume the nodes of $P'$ to be ordered from left to right.

The structure of $S'$ is an alternating sequence of paths and cycles. Let $P_1, C_1, P_2, C_2, \ldots, P_m, C_m$ be the left to right ordering of this sequence with respect to the left to right ordering of $P'$, where $P_i$ and $C_j$ represent a path and a cycle of $S'$, respectively. Note that the path $P_1$ and cycle $C_m$ may not exist.

We call a vertex a \emph{pendant vertex} if its degree is 1. Let $S$ be the graph generated from $S'$ by attaching any number of pendant vertices to the vertices in $V(S')$. We call $S$ a \emph{cactus stem}. For this section we assume $|V(S)| = n$.

Let $s$ and $t$ be the leftmost and rightmost nodes of $P'$ and let $g_0$ and $h_0$ be their respective associate vertices in $S'$. We call $g_0$ and $h_0$ the \emph{leftmost} and \emph{rightmost} vertices of $S$, respectively. Let $g_i$ and $h_i$ represent the left and right hinges of a cycle $C_i$ in $S$. If the left hinge of $C_1$ does not exist then we set $g_1$ as $g_0$. Similarly, if the right hinge of cycle $C_m$ does not exist we set $h_m$ as $h_0$.

The \emph{minor hemisphere} of $C_i$ is the shortest path between $g_i$ and $h_i$ in $C_i$. The \emph{major hemisphere} of $C_i$ is the other path between $g_i$ and $h_i$ in $C_i$. If both paths are of the same length, then we arbitrarily select one of them as the major hemisphere and the other as the minor hemisphere. See Figure \ref{fig:cactusstem}.

\begin{figure}[ht]
    \centering
    \includegraphics[width = \textwidth]{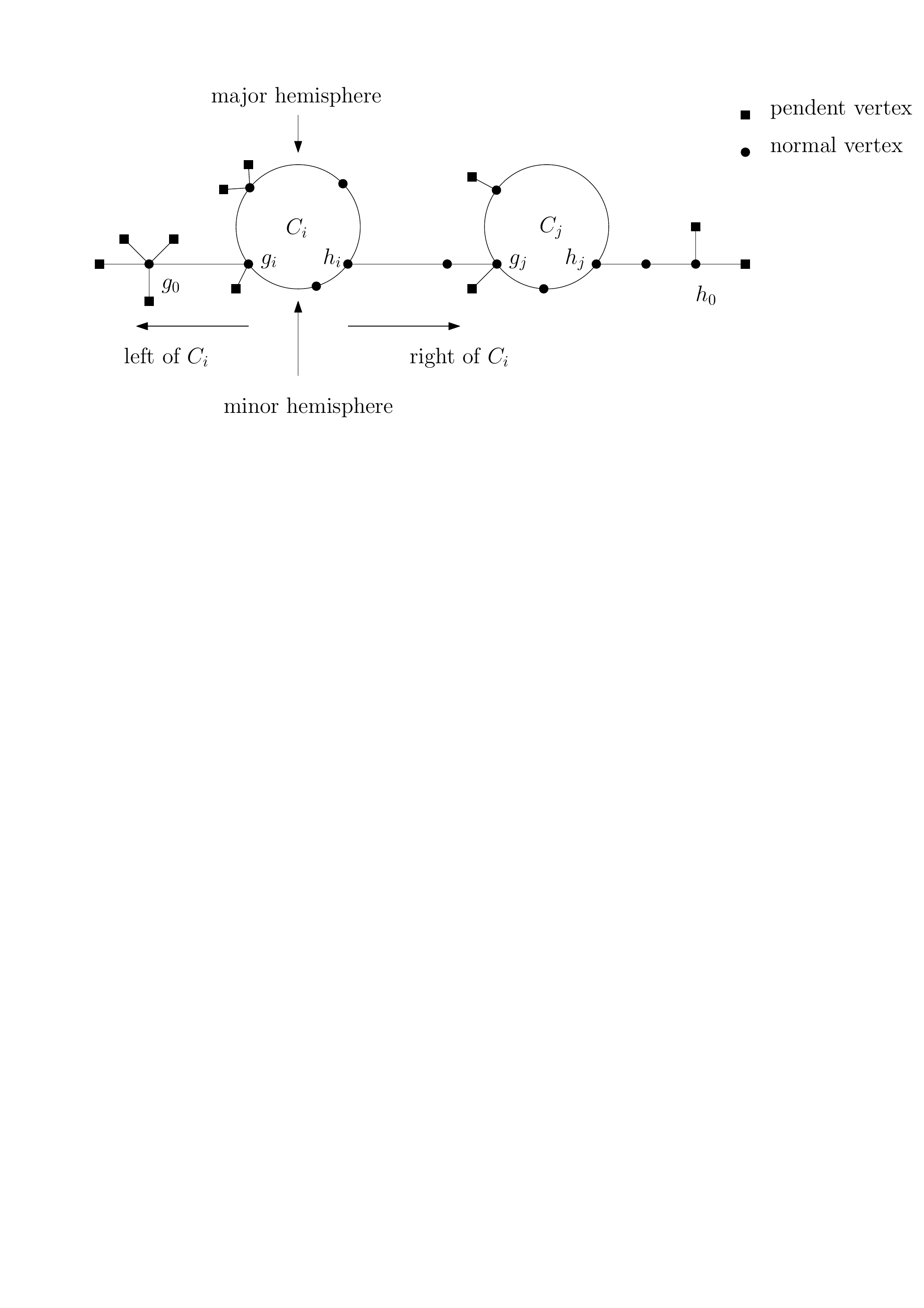}
    \caption{A cactus stem.} \label{fig:cactusstem}
\end{figure}

A vertex $u \in V(S)$ is said to be to the \emph{left} of $C_i$ if no path between $u$ and $g_0$ passes through $C_i$. Similarly, a vertex $v \in V(S)$ is said to be to the \emph{right} of $C_i$ if no path between $v$ and $h_0$ passes through $C_i$. Let $\C_i$ be the graph formed by attaching all pendant vertices adjacent to $V(C_i)$ to $C_i$. Observe that $u, v \notin V(\C_i)$.

Recall that $A(S)$ represents the set of all points on all edges of $S$. For $x, y \in A(S)$, let $\pi_{xy}$ denote a shortest path between $x$ and $y$ in $S$. For $u, v \in V(S)$, let $x \in A(S)$ be such that $w(u) \cdot d(x, u) = w(v) \cdot d(x, v)$ and for any two shortest paths $\pi_{x u}$ and $\pi_{x v}$ in $S$, $\pi_{x u} \cap \pi_{x v} = \varnothing$. For such a triple $(u, v, x)$, we define $\ct(u, v, x)$ as the cost of covering $u$ and $v$ by a center placed at $x$ i.e. $\ct(u, v, x) = w(u) \cdot d(x, u) = w(v) \cdot d(x, v)$. We define $R_S$ to be the set of all possible values of $\ct(u, v, x)$.

Let $X^* \subset A(S)$ be a set of $k$ optimal centers of the cactus stem $S$ with respect to the weight $k$-center problem and let $\lm^*$ be its optimal cost. We have the following lemma. 

\begin{lemma} \label{lem:cost}
    There exists a triple $(u, v, x)$ such that $\lm^* = \ct(u, v, x)$.
\end{lemma}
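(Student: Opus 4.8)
The plan is to argue by a local perturbation (rigidity) argument at the optimum. I would start from the optimal placement $X^*$ achieving cost $\lm^*$ and fix, for every vertex, an assignment to a nearest center, so that the cost contributed by a center $x_j \in X^*$ is $c_j = \max\{ w(v) \cdot d(x_j, v) \mid v \text{ assigned to } x_j\}$ and $\lm^* = \max_j c_j$. Call a center \emph{tight} if $c_j = \lm^*$, and for a tight center call a vertex \emph{critical} if it realizes $c_j = \lm^*$. At least one tight center exists, since $\lm^*$ is attained.

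First I would establish that some tight center is \emph{immovable}: that there is no arbitrarily small displacement of it along any incident edge that strictly decreases $c_j$. Indeed, if every tight center admitted such an improving displacement, I could apply all these displacements simultaneously; since the non-critical constraints are all strictly below $\lm^*$, a sufficiently small simultaneous perturbation keeps them below $\lm^*$, while strictly lowering the cost of every tight center. This would produce a placement of $k$ centers with cost strictly less than $\lm^*$, contradicting optimality. Hence I fix an immovable tight center $x := x_j$.

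Next I would read off the two vertices from the immovability of $x$. Moving $x$ a distance $\epsilon$ along a fixed incident edge strictly decreases $d(x, u)$ for exactly those critical vertices $u$ whose shortest path $\pi_{xu}$ leaves $x$ along that edge, and does not decrease it for the others. If all critical vertices of $x$ had shortest paths beginning along a single common direction out of $x$, then an infinitesimal move in that direction would strictly decrease $c_j$, contradicting immovability. Therefore there exist two critical vertices $u, v$ whose shortest paths from $x$ leave $x$ along two distinct directions, so that no single infinitesimal move can decrease the weighted distance to both. By criticality, $w(u) \cdot d(x, u) = w(v) \cdot d(x, v) = \lm^*$; choosing the shortest paths that leave $x$ along these two distinct directions makes them meet only at $x$, so $(u, v, x)$ is an admissible triple and $\ct(u, v, x) = \lm^*$, as required.

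The main obstacle I anticipate is verifying the internal disjointness $\pi_{xu} \cap \pi_{xv} = \{x\}$ when $x$ lies on a cycle $C_i$ of $S$, since two shortest paths leaving $x$ along different edges of the same cycle could in principle reconverge at the antipodal hinge. I would handle this by exploiting the cactus-stem structure, namely the alternating decomposition into paths and cycles and the minor/major hemisphere split of each $C_i$. When $x$ sits on a cycle, the two pulling directions can be chosen so that one critical vertex lies strictly to the left of $C_i$ (or on one hemisphere) and the other strictly to the right (or on the other hemisphere), forcing the two shortest paths to separate at $x$ and never rejoin. In the remaining degenerate case, where both candidate directions run around the same cycle, I would argue that the balance point is still pinned by an opposite pair of critical vertices along that cycle, yielding two half-paths that are internally disjoint. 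This case analysis on cycles is the one genuinely delicate step; the perturbation argument itself is routine.
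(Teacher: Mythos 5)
Your proposal is correct and follows essentially the same route as the paper's own proof: a local-optimality (perturbation) argument at a tight center of the optimal solution, deducing the existence of a second critical vertex from the impossibility of sliding the center toward $u$, and deducing path-disjointness from the impossibility of a common improving direction. You are in fact more careful than the paper on the two points it glosses over --- the simultaneous perturbation needed when several centers are tight, and the cactus-specific danger that two shortest paths leaving $x$ in different directions may reconverge at a hinge --- so, modulo completing your sketched cycle case (e.g., by pairing, for each of the two cycle directions, a critical vertex none of whose shortest paths leave in that direction), your write-up subsumes the published argument.
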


\begin{proof}
    Since $\lm^*$ is the optimal cost, their exists a center $x \in X^*$ that determines $\lm^*$. Let $u \in V(S)$ be such that $w(u) \cdot d(x, u) = \lm^*$ and the closest center to $u$ in $X^*$ is $x$. Since $x$ determines $\lm^*$, $u$ always exists. But that alone is not enough to determine $\lm^*$ since we can move $x$ towards $u$ and reduce the weighted distance of $x$ to $u$. Therefore, there exists another vertex $v \in V(S)$, such that $w(v) \cdot d(x, v) = \lm^*$. Also, no path $\pi_{u x}$ and $\pi_{v x}$ overlap, since again $x$ can be moved towards a particular direction which reduces the weighted distance of $x$ to both $u$ and $v$.  \qed
\end{proof}

We call the set $R_S$ the set of \emph{candidate values} of $\lm^*$. Based on $u, v$ and $x$ the candidate values in $R_S$ can be divided into the following types.

\begin{itemize}
    \item[] \tb{type 1:} $x$ does not lie inside any cycle of $S$,
    \item[] \tb{type 2:} $x$ lies on the minor hemisphere of some cycle $C_i$ in $S$ and \begin{itemize}
        \item[] \tb{(a):} $u$ and $v$ both belong to $V(\C_i)$,
        \item[] \tb{(b):} either $u$ or $v$ belongs to $V(
            C_i)$, but not both,
        \item[] \tb{(c):} neither $u$ nor $v$ belong to $V(\C_i)$.
    \end{itemize}
    \item[] \tb{type 3:} $x$ lies on the major hemisphere of some cycle $C_j$ in $S$ and \begin{itemize}
        \item[] \tb{(a):} $u$ and $v$ both belong to $V(\C_j)$,
        \item[] \tb{(b):} either $u$ or $v$ belongs to $V(
            \C_j)$, but not both,
        \item[] \tb{(c):} neither $u$ nor $v$ belong to $V(\C_j)$.
    \end{itemize}
\end{itemize}

In addition to $R_S$, we define another set of costs $R_S'$ as the set $\{ \lct(u, v) \mid \forall (u, v) \in V(S) \times V(S) \}$.
In the following sections we present our algorithm to compute the weighted $k$-center of $S$. The algorithm works in two stages. In Stage 1, we look at all types of candidate values in $R_S$ except type 3(c) as defined above. In Stage 2, we check for all candidate values in $R_S$ of type 3(c).

\subsection{Stage 1} \label{sec:stage1}

For a cycle $C_i$ of $S$, let $A_i$ represent its minor hemisphere and $B_i$ its major hemisphere. For a given cactus stem $S$, let $S'$ be the graph left after removing all pendant vertices of $S$. See Figure \ref{fig:stem-skeleton}.

\begin{figure}[ht]
    \centering
    \includegraphics[width = .8 \textwidth]{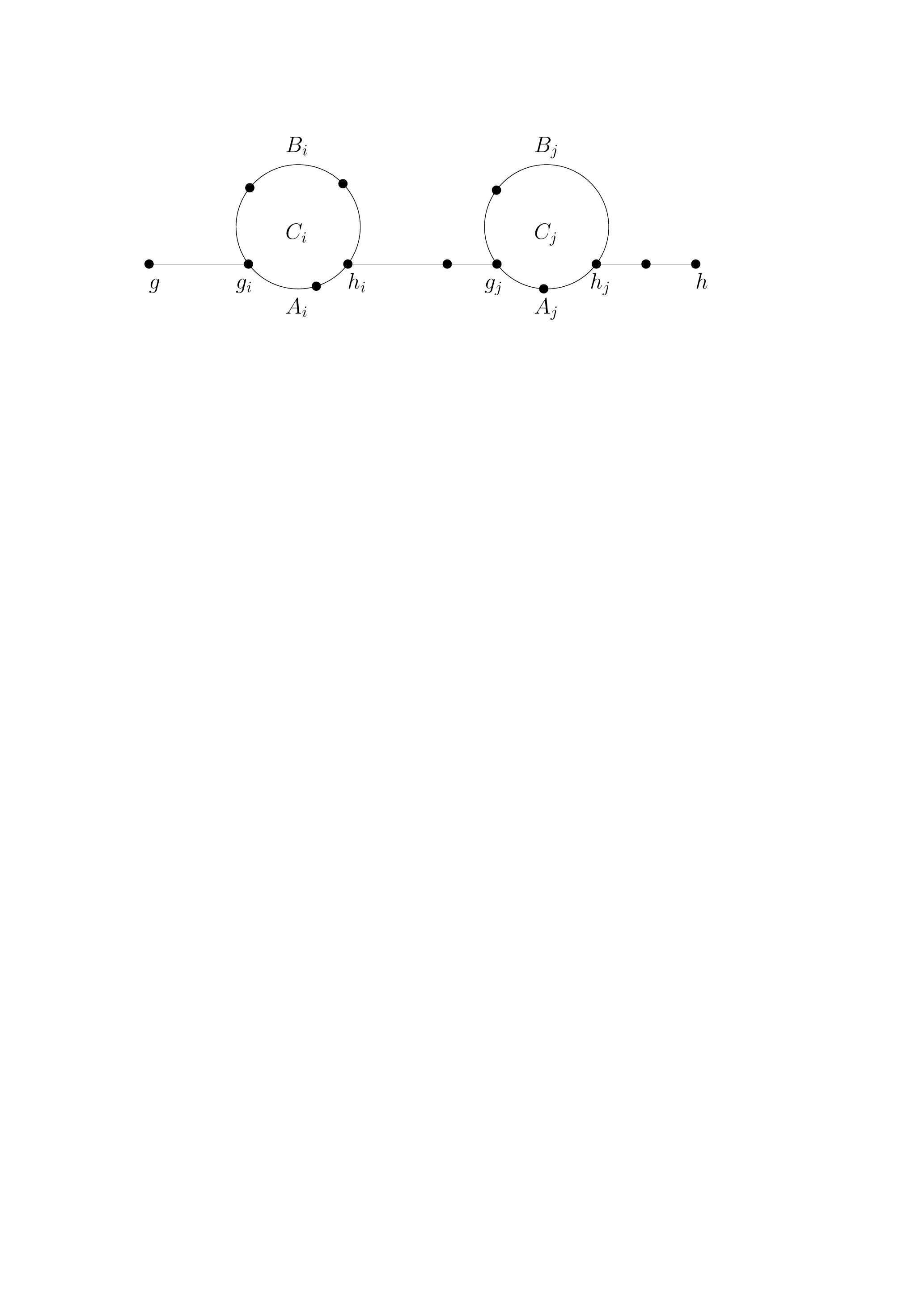}
    \caption{The graph $S'$ assuming $S$ to be the graph represented in Figure \ref{fig:cactusstem}.} \label{fig:stem-skeleton}
\end{figure}

We convert $S'$ into a tree $T_{S'}$ by removing $B_i$ from each cycle $C_i$ in $S'$ and attaching a path to $g_i$ and $h_i$ as shown in Figure \ref{fig:2-tree}. Note that Figure \ref{fig:2-tree} only shows the overall structure of $T_{S'}$ and some vertices are not depicted. The tree $T_{S}$ is generated by attaching all pendant vertices of $S$ into their respective locations in $T_{S'}$. See Figure \ref{fig:2-tree-pendant}.

\begin{figure}[ht]
    \centering
    \includegraphics[width = .9 \textwidth]{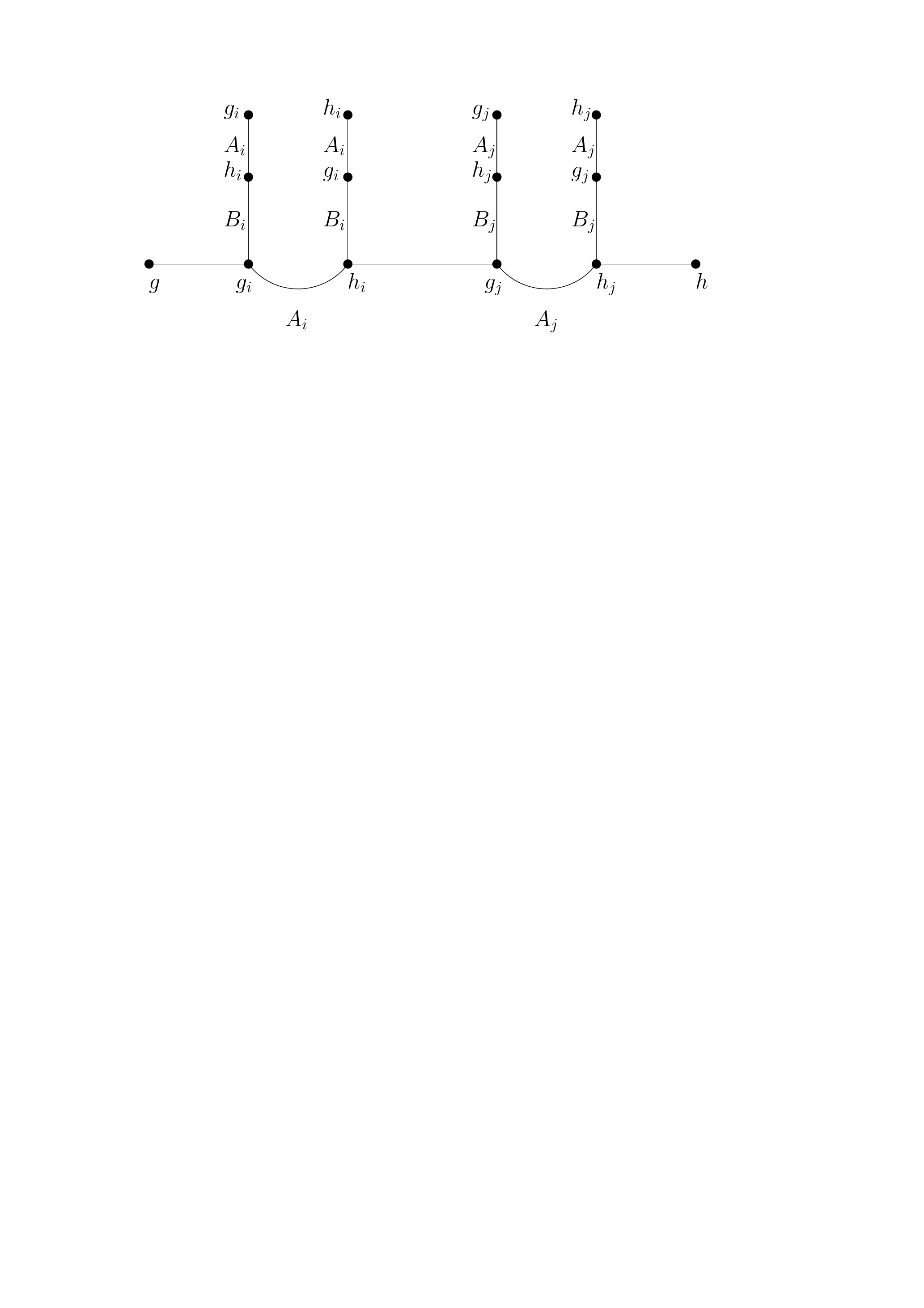}
    \caption{The structure of $T_{S'}$.} \label{fig:2-tree}
\end{figure}

\begin{figure}[ht]
    \centering
    \includegraphics[width = \textwidth]{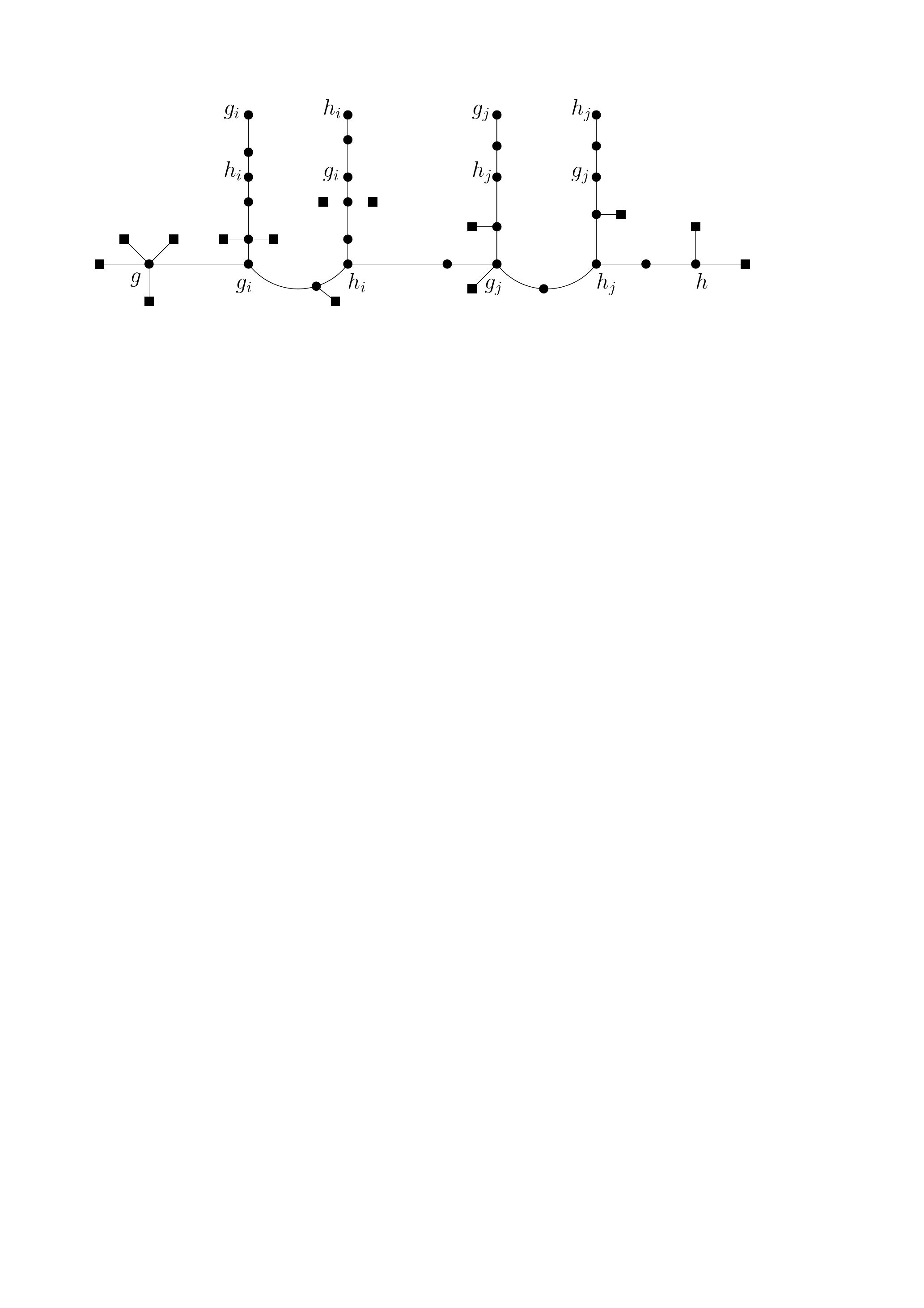}
    \caption{The tree $T_S$.} \label{fig:2-tree-pendant}
\end{figure}

Based on the vertices in $T_{S}$ we generate points on $x$-axis of an $xy$-coordinate system as follows. Fix two points $p_0$ and $q_0$ with respect to $g_0$ and $h_0$ arbitrarily, such that $p_0$ is to the left of $q_0$ and the distance between them is $d_{T_{S}}(g_0, g_0)$. For each vertex $u_i \in V(T_{S'})$, we place a point $p_i$ to the right of $p_0$ such that the distance between them is $d_{T_{S}}(g_0, u_i)$, and a point $q_i$ to the left of $q_0$ such that the distance between them is $d_{T_{S}}(u_i, h_0)$. For each vertex $v_j \in V(T_S) \setminus V(T_{S'})$, attached to a vertex $u_{j'} \in V(T_{S'})$, we place four points: one each to the left of $p_{j'}$ and $q_{j'}$ and one each to their right, at a distance $d_{T_S}(u_j, v_{j'})$. See Figure \ref{fig:2-tree-path} for reference.
We denote by $P_{S}$ the path represented by the left to right ordering of the points placed.

\begin{figure}[ht]
    \centering
    \includegraphics[width = .9 \textwidth]{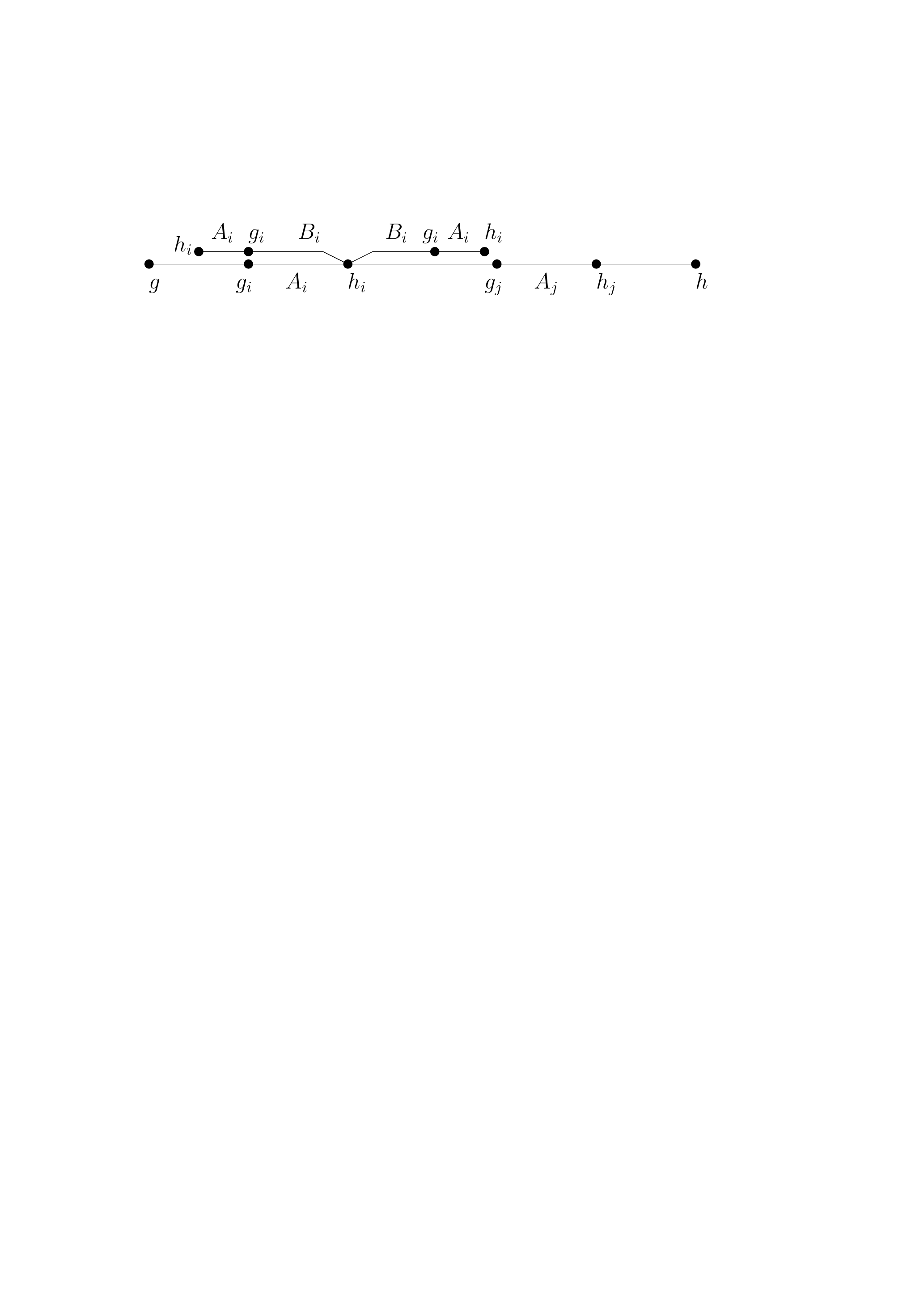}
    \caption{Some of the vertices of $S'$ mapped to a real line $\R$.} \label{fig:2-tree-path}
\end{figure}

Since, a single vertex of $S$ is represented by a constant number of nodes in $P_S$ we have $|V(P_S)| = O(n)$. Recall that $R_0(P_S) = \{\lct(u, v) \mid \forall (u, v) \in V(P_S) \times V(P_S)\}$ and $R_1(P_S) = \{\ct(u, v) \mid \forall (u, v) \in V(P_S) \times V(P_S)\}$. Notice that $R_0(P_S) = R_S'$ and $R_1(P_S)$ contains all types of candidate values of $R_S$ except type 3(c). We can use Lemma \ref{lem:paths} and the feasibility test of Section \ref{sec:feasibility} to find $\lm_1$ and $\lm_2$ such that $\lm_1 < \lm^* \leq \lm_2$ and no value in $R_0(P_S) \cup R_1(P_S)$ lies inside the range $(\lm_1, \lm_2)$. This step takes $O(n \log n)$ time since in our case both $m$ and $\tau$ equals $O(n)$. We can therefore make the following claim.

\begin{theorem}
    $\lm^* = \lm_2$ is the optimal cost of the $k$-center problem in $S$ if $\lm^*$ is not a type 3(c) candidate value of $R_S$. This cost can be computed in $O(n \log n)$ time, where $n$ is the size of the $S$.
\end{theorem}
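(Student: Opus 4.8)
The plan is to reduce the computation of $\lm^*$ (under the assumption that it is not a type 3(c) value) to an application of the parametric pruning machinery of Lemma~\ref{lem:paths} combined with the linear-time feasibility test of Section~\ref{sec:feasibility}. The key structural observation, which the construction of $P_S$ is designed to supply, is that every candidate value of $R_S$ of type 1, type 2(a)--(c), and type 3(a)--(b) appears as a value in $R_0(P_S) \cup R_1(P_S)$. I would first verify this claim carefully, since it is the crux of the whole argument. By Lemma~\ref{lem:cost}, we know $\lm^* = \ct(u, v, x)$ for some triple $(u, v, x)$ where the two shortest paths $\pi_{xu}$ and $\pi_{xv}$ are disjoint; the only configuration not captured by the tree $T_S$ (and hence by $P_S$) is precisely type 3(c), where $x$ sits on a major hemisphere and both covered vertices lie outside the cycle's subgraph. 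So under the hypothesis of the theorem, $\lm^*$ is guaranteed to be one of the values encoded in $P_S$.

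The main work is to confirm that the flattening of $S$ into the path $P_S$ preserves the relevant distances. First I would argue that $R_0(P_S) = R_S'$ by construction, since each $\lct(u,v)$ is a one-sided weighted distance that the placement of the $p_i$ and $q_i$ points reproduces exactly. Then I would show that $R_1(P_S)$ contains every two-sided covering cost $\ct(u,v,x)$ of the admissible types: the two points $p_i$ (measuring distance from $g_0$) and $q_i$ (measuring distance to $h_0$) placed for each vertex allow a single center on the line to simultaneously realize the distance to a left vertex and a right vertex with the correct cost, exactly as the center $x$ would in $S$. The subtlety is that a cycle admits two routing choices between its hinges; the construction in $T_{S'}$ retains the minor hemisphere $A_i$ inside the tree and replaces the major hemisphere $B_i$ by a dangling path, which is why major-hemisphere centers covering two external vertices (type 3(c)) escape the encoding while all other types are retained. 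I expect verifying this correspondence case-by-case over the types — particularly that distances along the retained minor hemispheres and the attached pendant vertices map to the correct $x$-coordinates — to be the principal obstacle, as it requires checking that no spurious short-cut or omitted path alters a covering cost.

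Once the correspondence is established, the remainder is a direct invocation of the earlier results. Applying Lemma~\ref{lem:paths} to $P_S$ with the feasibility test of Section~\ref{sec:feasibility} as the oracle (threshold $\gamma^* = \lm^*$) yields values $\lm_1, \lm_2$ with $\lm_1 < \lm^* \leq \lm_2$ and no element of $R_0(P_S) \cup R_1(P_S)$ strictly inside $(\lm_1, \lm_2)$. Because every admissible candidate value lies in this set, and because $\lm^*$ itself is an admissible candidate value (by the type hypothesis together with Lemma~\ref{lem:cost}), $\lm^*$ cannot lie strictly inside $(\lm_1, \lm_2)$; combined with $\lm_1 < \lm^* \leq \lm_2$, this forces $\lm^* = \lm_2$. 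For the running time I would note that $|V(P_S)| = O(n)$, so in Lemma~\ref{lem:paths} we have $m = O(n)$, and the feasibility test runs in $\tau = O(n)$ by the theorem of Section~\ref{sec:feasibility}; hence the total cost is $O((m + \tau)\log m) = O(n \log n)$, as claimed.
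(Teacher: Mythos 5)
Your proposal follows the same route as the paper: it relies on the construction of $P_S$ capturing $R_S'$ and all candidate values of $R_S$ except type 3(c), then applies Lemma~\ref{lem:paths} with the linear-time feasibility test to obtain $\lm_1 < \lm^* \leq \lm_2$ with no admissible candidate inside $(\lm_1, \lm_2)$, forcing $\lm^* = \lm_2$ in $O((n+n)\log n) = O(n \log n)$ time. This matches the paper's argument (which likewise asserts the $P_S$ correspondence without a case-by-case verification), so the proposal is correct and essentially identical in approach.
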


\begin{definition}
    The \emph{$\lm$-cover} of a vertex $v \in V(S)$, denoted by $\Phi_{\lm} (v)$, is defined as
    \begin{equation*}
        \Phi_{\lm} (v) = \{y \in A(S) \mid w(v) \times d(y, v) \leq \lm\}.
    \end{equation*}
    The subscript $\lm$ from $\Phi_{\lm}(\cdot)$ is usually omitted if $\lm = \lm^*$.
\end{definition}

\begin{figure}[ht]
    \centering
    \includegraphics[width = .75 \textwidth]{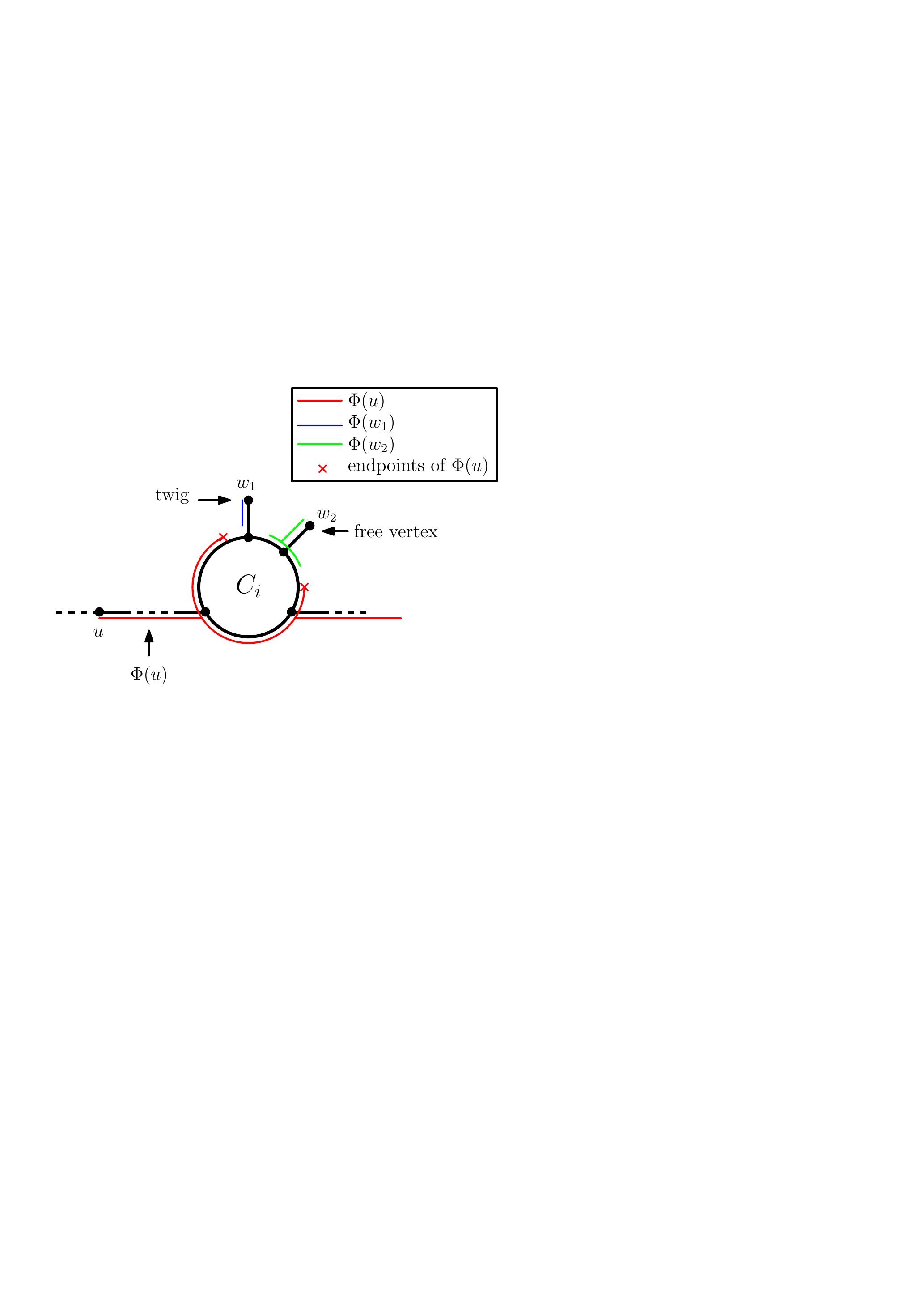}
    \caption{Visualization of various definition.} \label{fig:lambda_cover}
\end{figure}

The region $\Phi_{\lm} (v)$ can also be defined as all possible location in $S$ where a center can be placed with weighted distance to $v$ at most $\lm$.

\begin{definition}
    A point $y \in A(S)$ is an \emph{endpoint} of $\Phi_{\lm} (v)$ if $w(v) \times d(y, v) = \lm$. Again, the subscript $\lm$ is usually omitted if $\lm = \lm^*$.
\end{definition}

The endpoints of $\Phi_{\lm} (v)$ are all possible location in $S$ where a center can be placed with weighted distance to $v$ equal to $\lm$. See Figure \ref{fig:lambda_cover} for reference. We have the following results.

\todo{Check these two lemmas.}\begin{lemma} \label{lem:endpoint}
    For $u \in V(S)$ and $e \in E(S)$, $\Phi(u)$ has exactly one endpoint on $e$ if and only if $\Phi_{\lm}(u)$ has exactly one endpoint on $e$, for any $\lm \in (\lm_1, \lm_2)$.
\end{lemma}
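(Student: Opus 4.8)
The plan is to reduce the statement to a one-dimensional threshold condition on $\lm$ that depends only on two values in $R_0(P_S)$. Fix the edge $e=(a,b)$ and parametrise its points by arc length $s\in[0,l(e)]$, writing $y_s$ for the point at distance $s$ from $a$. Since any shortest path in $S$ from an interior point of $e$ to the fixed vertex $u$ must leave $e$ through either $a$ or $b$, we have $d_S(y_s,u)=\min\{d(a,u)+s,\; d(b,u)+l(e)-s\}$, so $g(s):=w(u)\cdot d_S(y_s,u)$ is a concave, piecewise-linear ``tent'' with boundary values $g(0)=\lct(u,a)$ and $g(l(e))=\lct(u,b)$. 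By definition the endpoints of $\Phi_{\lm}(u)$ on $e$ are exactly the boundary points of the sublevel set $\{s:g(s)\le\lm\}$ that lie in the interior of $e$.

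First I would establish the purely combinatorial fact that, for every $\lm$, the number of such endpoints equals $1$ if and only if $\lm$ lies strictly between $\lct(u,a)$ and $\lct(u,b)$. Because $g$ is a tent, its sublevel set is empty, all of $e$, a single sub-interval meeting one end, or two sub-intervals meeting both ends; the first two cases give $0$ endpoints and the last gives $2$, while the single-sub-interval case gives exactly one, and this case occurs precisely when exactly one of the boundary values $g(0),g(l(e))$ is at most $\lm$, i.e.\ when $\lm$ separates them. The key observation here is that the interior maximum of $g$ (the ``peak'', which can arise when $e$ lies on a cycle and $u$ is reached around it) is irrelevant to this count: it only governs the transition between the two-interval and whole-edge regimes, both of which contribute a number of endpoints different from $1$. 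This is exactly the step that renders the peak value harmless even though it may be a type 3(c) quantity not captured by $R_1(P_S)$.

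With this characterisation in hand, both critical values $\lct(u,a)$ and $\lct(u,b)$ belong to $R_S'=R_0(P_S)$, so neither lies in the open interval $(\lm_1,\lm_2)$. Hence the open interval $\bigl(\min\{\lct(u,a),\lct(u,b)\},\ \max\{\lct(u,a),\lct(u,b)\}\bigr)$ has no endpoint strictly inside $(\lm_1,\lm_2)$, so the predicate ``$\lm$ lies strictly between $\lct(u,a)$ and $\lct(u,b)$'' is constant for $\lm\in(\lm_1,\lm_2)$. Since $\lm_1<\lm^*\le\lm_2$ and no $R_0(P_S)$-value lies in $(\lm_1,\lm_2)$, the same predicate takes this common value at $\lm=\lm^*$ as well, and this is precisely the asserted equivalence between $\Phi(u)=\Phi_{\lm^*}(u)$ and $\Phi_{\lm}(u)$.

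The step I expect to be the main obstacle is the boundary case $\lm^*=\lm_2$, in which $\lm^*$ sits at the right end of the half-open range and may coincide with one of the critical values $\lct(u,a),\lct(u,b)\in R_0(P_S)$. There the endpoint count at $\lm^*$ is governed by the convention for a point $y$ with $w(u)\,d(y,u)=\lm$ that falls on a vertex of $e$ or on the peak, and one must verify that this convention makes the count ``continuous from the side of $(\lm_1,\lm_2)$'', so that the value at $\lm^*$ agrees with the constant value just below it. This delicacy — exactly what the marginal note beside the lemma is flagging — has to be resolved by a careful tie-breaking definition of \emph{endpoint} rather than by any further structural property of the cactus.
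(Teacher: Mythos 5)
Your proposal takes essentially the same route as the paper's own proof: both reduce the claim to the observation that $\Phi_{\lm}(u)$ has exactly one endpoint on $e=(u_e,v_e)$ precisely when $\lm$ lies strictly between $\lct(u,u_e)$ and $\lct(u,v_e)$, and both then use that these two values belong to $R_S'=R_0(P_S)$ and hence lie outside $(\lm_1,\lm_2)$, so the condition holds at $\lm^*$ if and only if it holds throughout $(\lm_1,\lm_2)$. The only difference is one of rigor: the paper asserts the strict-inequality characterization in a single line, whereas you derive it from the concave ``tent'' shape of the weighted distance along $e$ and explicitly flag the vertex/peak ties that can occur when $\lm^*=\lm_2$ coincides with a critical value --- a boundary case the paper's proof passes over silently.
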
    

\begin{proof}
    Let an endpoint of $\Phi(u)$ lie on the edge $e = (u_e, v_e) \in E(S), u_e, v_e \in V(S)$. Note that the values  $\lct(u, u_e)$ and $\lct(u, v_e)$ both belong to $R_S'$ and therefore, do not lie in the range $(\lm_1,\lm_2)$. Let us assume, without loss of generality, that $\lct(u, u_e) \leq \lct(u, v_e)$.

    Since, $\Phi(u)$ has exactly one endpoint on $e$, we have $\lct(u, u_e) < \lm^* < \lct(u, v_e)$. This implies that for any $\lm \in (\lm_1, \lm_2)$, $\lct(u, u_e) < \lm < \lct(u, v_e)$ and therefore $\Phi_{\lm}(u)$ also has exactly one endpoint on $e$. The other direction can be proved similarly.  \qed
\end{proof}

\begin{lemma}
    For any two vertices $u, v \in V(S)$, let $e \in E(S)$ be an edge such that it does not belong to the major hemisphere of any cycle of $S$ in between $u$ and $v$. The regions $\Phi(u)$ and $\Phi(v)$ overlap on $e$ if and only if $\Phi_{\lm}(u)$ and $\Phi_{\lm}(v)$ overlap on $e$, for all $\lm \in (\lm_1, \lm_2)$.
\end{lemma}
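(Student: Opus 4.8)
The plan is to reduce the statement about overlap to a single threshold value and then argue, exactly as in Lemma \ref{lem:endpoint}, that this threshold is a candidate cost and therefore cannot lie in the open gap $(\lm_1, \lm_2)$. First I would record how the weighted distance functions behave along $e$. Writing $e = (u_e, v_e)$ and parametrising $e$ by arc length, consider $f(y) = w(u)\cdot d(y,u)$ and $g(y) = w(v)\cdot d(y,v)$ for $y \in e$. The hypothesis that $e$ lies on no major hemisphere of a cycle between $u$ and $v$ is exactly what forces every shortest path from a point of $e$ to $u$ (resp. to $v$) to leave the enclosing cycle, if any, through a fixed hinge; consequently $d(\cdot, u)$ and $d(\cdot, v)$ are affine, hence monotone, along $e$, since on a minor hemisphere one route to the hinge is always the short one and there is no antipodal breakpoint. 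Therefore each of $\Phi_\lm(u) \cap e$ and $\Phi_\lm(v)\cap e$ is a single sub-segment of $e$, and the two regions overlap on $e$ if and only if $\lm \ge \mu$, where
\[
\mu = \min_{y \in e} \max\{\,w(u)\cdot d(y,u),\ w(v)\cdot d(y,v)\,\}.
\]

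The key step is to show that $\mu \in R_0(P_S) \cup R_1(P_S)$, so that $\mu \notin (\lm_1, \lm_2)$. There are two cases for where the minimum defining $\mu$ is attained. If it is attained at an endpoint $w \in \{u_e, v_e\}$ of $e$, then $\mu = \max\{\lct(u, w), \lct(v, w)\}$; both arguments of the maximum lie in $R_S' = R_0(P_S)$, and since $\mu$ equals one of them, $\mu \in R_0(P_S)$. Otherwise the minimum is attained at an interior point $y^* \in e$ with $f(y^*) = g(y^*)$; there $f$ is increasing and $g$ is decreasing (they approach $u$ and $v$ from opposite sides of $y^*$), so the shortest paths $\pi_{y^* u}$ and $\pi_{y^* v}$ leave $y^*$ in opposite directions and satisfy $\pi_{y^* u}\cap \pi_{y^* v} = \varnothing$, whence $\mu = \ct(u, v, y^*) \in R_S$. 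Because $y^* \in e$ and $e$ meets no major hemisphere between $u$ and $v$, the point $y^*$ is either outside every cycle or on a minor hemisphere; hence this candidate is of type 1 or type 2, never type 3(c), and so it is one of the values collected in $R_1(P_S)$.

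Finally I would conclude as in Lemma \ref{lem:endpoint}. Since $\mu \in R_0(P_S) \cup R_1(P_S)$ and no value of this set lies in $(\lm_1, \lm_2)$, either $\mu \le \lm_1$ or $\mu \ge \lm_2$. In the first case $\lm \ge \mu$ holds for every $\lm \in (\lm_1, \lm_2)$ and also for $\lm^*$ (as $\lm^* > \lm_1$), so both sides of the claimed equivalence assert overlap; in the second case $\lm < \mu$ for every $\lm \in (\lm_1, \lm_2)$ and for $\lm^*$, so neither overlaps, the degenerate equality $\mu = \lm_2 = \lm^*$ being resolved by the same non-degeneracy convention used in Lemma \ref{lem:endpoint}, namely that the boundary of the overlap region is taken strictly interior to $e$. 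In either case the overlap status at $\lm^*$ agrees with that at every $\lm \in (\lm_1, \lm_2)$, which is the claim.

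I would expect the main obstacle to be the second step combined with the monotonicity assertion of the first: one must argue carefully that the hypothesis on major hemispheres simultaneously guarantees unimodality of the distance functions along $e$, so that overlap is governed by the single threshold $\mu$, and rules out the type 3(c) case for the interior crossing point $y^*$. The endpoint sub-case and the closing interval argument are then routine and mirror Lemma \ref{lem:endpoint}.
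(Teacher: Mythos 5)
Your proof follows the same skeleton as the paper's: reduce the overlap question to witness values --- endpoint costs $\lct(\cdot,\cdot)\in R_0(P_S)$ and balanced interior costs $\ct(u,v,y^*)\in R_1(P_S)$, the latter barred from being type 3(c) by the hypothesis on $e$ --- and conclude that no such value lies in $(\lm_1,\lm_2)$, so the overlap status is the same at every $\lm$ in the gap and at $\lm^*$. Your threshold $\mu=\min_{y\in e}\max\{f(y),g(y)\}$ is a compact repackaging of the paper's three cases. However, your first step asserts something false: the hypothesis only keeps $e$ off the major hemispheres of cycles \emph{between} $u$ and $v$; it does not prevent $e$ from lying on the major hemisphere of a cycle that is not between them (say, entirely to the right of both), nor on a cycle containing $u$ or $v$ itself. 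In those situations $d(\cdot,u)$ along $e$ has an antipodal breakpoint, is not monotone, and $\Phi_{\lm}(u)\cap e$ can be two segments, one at each end of $e$. This flaw is harmless for the reduction to $\mu$ (the overlap on $e$ is exactly $\{y\in e:\max(f(y),g(y))\le\lm\}$, so ``overlap iff $\lm\ge\mu$'' needs no structure at all; what is true and useful is that $f$ and $g$ are \emph{concave} along any edge, being minima of the two affine route-lengths through $u_e$ and $v_e$), but it is not harmless where you next invoke it.

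The genuine gap is in the interior case: you infer $\pi_{y^*u}\cap\pi_{y^*v}=\varnothing$ from the fact that the two shortest paths leave $y^*$ in opposite directions along $e$. In a cactus this implication fails: when $e$ lies on a cycle, paths leaving $y^*$ in opposite directions can rejoin at a hinge of that cycle and coincide thereafter, in which case $\ct(u,v,y^*)$ is not a legitimate candidate value at all (the defining disjoint-paths condition fails) and the conclusion $\mu\in R_1(P_S)$ has no basis. The missing argument is cactus-specific: if both shortest paths exit the cycle through the same hinge $g_i$, then $f$ and $g$ are both increasing functions of $d(\cdot,g_i)$, which is concave along $e$, so $\max(f,g)$ attains its minimum at an endpoint of $e$, contradicting the assumption that the minimum is attained only in the interior. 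Hence an interior-only minimizer forces the two paths to leave through different hinges (the cycle is between $u$ and $v$, or contains one of them), and only then do you get disjointness and, via the hypothesis on $e$, the exclusion of type 3(c). To be fair, the paper's own proof merely asserts the existence of such a balanced point with disjoint paths, but it does not route that assertion through a claim that is false in general, as your monotonicity claim is. With concavity substituted for affineness and the same-hinge argument added, your proof goes through; the degenerate boundary case $\mu=\lm_2=\lm^*$ that you flag is an issue the paper's proof shares and need not count against you.
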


\begin{proof}
    Let $e = (u_e, v_e)$, where $u_e, v_e \in V(S)$. Let $\al^*$ denote the overlapping region of  $\Phi(u)$ and $\Phi(v)$ on $e$. Similarly, let $\al$ denote the overlapping region of $\Phi_{\lm}(u)$ and $\Phi_{\lm}(v)$ on $e$. We need to show that $\al^*$ is nonempty if and only if $\al$ is nonempty.

    If $\al$ is empty then either $\Phi(u)$ or $\Phi(v)$ does not intersect $e$. Let us assume it is $\Phi(u)$. Then, $\lct(u, u_e), \lct(u, v_e) \leq \lm^*$.  Since, $\lct(u, u_e)$, $\lct(u, v_e) \notin (\lm_1, \lm_2)$ this implies $\lct(u, u_e)$, $\lct(u, v_e) \leq \lm$, for any $\lm \in (\lm_1, \lm_2)$. Therefore $\al$ is empty.
    
    If $\al^*$ is nonempty and contains $u_e$ then $\lct(u, u_e), \lct(v, u_e) \geq \lm^*$. Since, $\lct(u, u_e)$, $\lct(v, u_e) \notin (\lm_1, \lm_2)$ this implies $\lct(u, u_e)$, $\lct(v, u_e) \geq \lm$, for any $\lm \in (\lm_1, \lm_2)$. Therefore, $\al$ is also nonempty and contains $u_e$. The same can be proved with respect to the vertex $v_e$.

    If $\al^*$ is nonempty and does not contain either $u_e$ or $v_e$ then, there exists a point $y_e \in \al^*$ such that $w(u) \cdot d(u, y_e) = w(v) \cdot d(y_e, v)$ and for any two shortest paths $\pi_{y_e u}$ and $\pi_{y_e v}$ in $S$, $\pi_{y_e u} \cap \pi_{y_e v} = \varnothing$. This implies that $\ct(u, v, y_e) \in R(S)$. Since, $e$ does not belong to the major hemisphere of any cycle of $S$ in between $u$ and $v$, $\ct(u, v, y_e)$ is not a type 3(c) candidate value, as defined earlier, and so does not lie in the range $(\lm_1, \lm_2)$. Since $y_e \in \al^*$, $\ct(u, v, y_e) \geq \lm^*$, which in turn implies for any $\lm \in (\lm_1, \lm_2)$, $\ct(u, v, y_e) \geq \lm$. Therefore, $\al$ is also nonempty and contains $y_e$. \qed
\end{proof}

\subsection{Stage 2} \label{sec:stage2}

Now we show how to find $\lm^*$ if it is a type 3(c) candidate value of $R_S$. First we present a straightforward algorithm with running time $O(n^2 \log n)$. Later in Section \ref{sec:stage2_improve_1} and \ref{sec:stage2imp} we present two techniques by which we can  improve the running time of this algorithm further.

For each cycle $C_i$ of $S$ we construct a path $P_{C_i}$ as follows. Recall that $g_i$ and $h_i$ are the left and right hinge vertices of $C_i$, respectively. For each vertex $u_j$ to the left of $C_j$, we place a point $p_j$ at the $x$-axis of an $xy$-coordinate system with $x$-coordinate $-d(u_j, g_i)$. Similarly, for each vertex $v_{j'}$ to the right of $C_i$, we place a point $q_{j'}$ on the $x$-axis at the $x$-coordinate $\Delta_i + d(h_i, v_{j'})$, where $\Delta_i$ is the length of the major hemisphere of $C_i$. The left to right ordering of these points represent the path $P_{C_i}$.

Notice that the set $R' = \bigcup_i R(P_{C_i})$ contains all type 3(c) candidate values of $R_S$. Therefore we can use Lemma \ref{lem:paths} to find $\lm_1'$ and $\lm_2'$ such that $\lm_1' < \lm^* \leq \lm_2'$ and no value in $R'$ lies inside the range $(\lm_1', \lm_2')$. We update the previous $\lm_1$ and $\lm_2$ values to $\lm_1 \gets \max \{\lm_1, \lm_1'\}$ and $\lm_2 \gets \min \{\lm_2, \lm_2'\}$. Since now no value in $R_S$ lies inside the range $(\lm_1, \lm_2)$, we have $\lm^* = \lm_2$.

Let $\P$ represent the concatenation of the paths $P_{C_i}$, for all cycles $C_i$ in $S$. Since each path $P_{C_i}$ can be of size $O(n)$, the size of $\P$ can be $O(n^2)$, and the running time of our algorithm is $O(n^2 \log n)$.

\subsection{An Improvement to Stage 2} \label{sec:stage2_improve_1}

One idea to improve the running time of Stage 2 is to not search for $\lm^*$ over all possible type 3(c) candidate values but over the onces which are absolutely necessary. We achieve this by performing a feasibility test on $S$ and placing the centers optimally. Whenever we are unsure of the placement of a new center, we perform an additional feasibility test.

We perform the feasibility test on $S$ by first choosing a $\lm \in (\lm_1, \lm_2)$ (from Stage 1). The test processes the vertices of $S$ from left to right. At any intermediate step, the feasibility test may place centers on $S$. If the centers are placed on edges which do not belong to the major hemisphere of any cycle of $S$ then we continue to place those centers. This is because if any such center $x_i$ determines the optimal cost $\lm^*$, then that cost does not belong to the range $(\lm_1, \lm_2)$, since then $\lm^*$ is not a type 3(c) candidate value. Therefore, the vertices covered by $x_i$ will remain the same in the optimal case.

If the centers are placed on edges which do belong to the major hemisphere of some cycle $C_j$ of $S$, then we perform some extra steps. Let $x_l$ be such a center. There is a possibility that $x_l$ determines the optimal cost $\lm^*$. Let $u^*, v^* \in V(S)$ be two critical vertices of $x_l$ i.e. $\ct(u^*, v^*, x_l) = \lm^*$. The value $\ct(u^*, v^*, x_l)$ is a type 3(c) candidate value if neither $u^*$ nor $v^*$ belong to $V(\C_j)$. This implies that $\ct(u^*, v^*, x_l)$ may lie inside the range $(\lm_1, \lm_2)$, and we cannot provide the same guarantees as we did for the previous case. We therefore try to further reduce the range $(\lm_1, \lm_2)$ as follows.

We generate the path $P_{C_j}$ as mentioned earlier and use Lemma \ref{lem:paths} to find $\lm_1'$ and $\lm_2'$ such that $\lm_1' < \lm^* \leq \lm_2'$ and no value in $R(P_{C_j})$ lies inside the range $(\lm_1', \lm_2')$. We update the previous $\lm_1$ and $\lm_2$ values to $\lm_1 \gets \max \{\lm_1, \lm_1'\}$ and $\lm_2 \gets \min \{\lm_2, \lm_2'\}$.

This ensures that $\ct(u, v, x_l) \notin (\lm_1, \lm_2)$ for any $u, v \in V(S)$. We update our choice of $\lm$ to the new range $(\lm_1, \lm_2)$ and continue placing centers using our feasibility test algorithm. The center $x_l$ placed after this step will ensure that the vertices which covered by it will remain same in the optimal case. We continue the process until all $k$ centers have been placed. We make the following claims.

\begin{lemma}
    After completion of Stage 2, $\lm^* = \lm_2$.
\end{lemma}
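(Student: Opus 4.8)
The plan is to show that after Stage 2's improved procedure terminates, the value $\lm_2$ equals the optimal cost $\lm^*$. The core invariant I would maintain throughout the feasibility-test-driven center placement is that $\lm^* \in (\lm_1, \lm_2]$ at all times, and moreover that every center placed so far is placed correctly, in the sense that the set of vertices it covers in our construction coincides with the set it covers under some fixed optimal solution. This invariant, combined with the fact that eventually no candidate value of $R_S$ lies strictly inside $(\lm_1, \lm_2)$, will force $\lm^* = \lm_2$ by the characterization in Lemma \ref{lem:cost}.

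First I would argue the invariant is preserved for centers placed on edges \emph{not} belonging to the major hemisphere of any cycle. As observed in the text, if such a center $x_i$ determines the optimal cost, then $\ct(\cdot,\cdot,x_i)$ is not a type 3(c) candidate value, so by the Stage 1 guarantee it cannot lie in $(\lm_1,\lm_2)$; hence the cover structure of $x_i$ is insensitive to the exact choice of $\lm \in (\lm_1,\lm_2)$ and agrees with the optimal placement. I would formalize this by appealing to Lemma \ref{lem:endpoint} and the overlap lemma, which guarantee that the endpoints of the $\Phi_\lm(\cdot)$ regions and their pairwise overlaps on such edges are combinatorially identical for all $\lm \in (\lm_1, \lm_2)$, so the greedy placement decisions made by the feasibility test are the same as in the optimal solution.

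Next I would handle the centers placed on a major hemisphere of some cycle $C_j$. Here the potential obstacle is that the critical pair $(u^*, v^*, x_l)$ may be a type 3(c) value lying inside the current range, so the cover of $x_l$ is not yet pinned down. The resolution, which I would make precise, is that before committing $x_l$ we invoke Lemma \ref{lem:paths} on the path $P_{C_j}$ to obtain refined bounds $\lm_1', \lm_2'$ excluding all values of $R(P_{C_j})$ from the open interval, and then tighten $\lm_1 \gets \max\{\lm_1,\lm_1'\}$, $\lm_2 \gets \min\{\lm_2,\lm_2'\}$. Since $R(P_{C_j})$ captures exactly the type 3(c) candidate values arising from cycle $C_j$, after this refinement $\ct(u,v,x_l) \notin (\lm_1,\lm_2)$ for every $u,v \in V(S)$, restoring the invariant so that $x_l$'s cover now matches the optimal one. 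I would note that the monotone shrinking of the interval guarantees $\lm^*$ never escapes $(\lm_1,\lm_2]$, since each application of Lemma \ref{lem:paths} preserves $\lm_1' < \lm^* \leq \lm_2'$.

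Finally, I would close the argument: the feasibility test places at most $k$ centers, and when the procedure halts every placed center has a cover matching an optimal solution, so the centers realize optimality. By Lemma \ref{lem:cost}, $\lm^*$ equals $\ct(u,v,x)$ for some triple, hence is some candidate value in $R_S$; every type-3(c) cycle that could contribute the determining value has had its $R(P_{C_j})$ excluded from $(\lm_1,\lm_2)$, and Stage 1 already excluded all other types, so no element of $R_S$ lies in the open interval $(\lm_1,\lm_2)$. Combined with $\lm_1 < \lm^* \leq \lm_2$, this forces $\lm^* = \lm_2$. The step I expect to be the main obstacle is proving rigorously that the cycles whose major hemispheres receive a center are the \emph{only} ones whose type 3(c) values can be the determining cost — i.e. that refining the range lazily, only at cycles actually touched by a center, suffices rather than refining over every cycle; this requires arguing that an untouched cycle's major hemisphere contributes no center and therefore cannot supply the optimal critical triple.
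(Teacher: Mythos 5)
Your proposal is correct and follows essentially the same route as the paper: Stage 1 excludes all non-type-3(c) candidate values, lazy refinement via Lemma \ref{lem:paths} at each cycle whose major hemisphere receives a center excludes the remaining relevant type 3(c) values, and the characterization of Lemma \ref{lem:cost} then forces $\lm^* = \lm_2$. The step you flag as the main obstacle --- that only cycles actually touched by the feasibility test's centers can supply the determining triple $\ct(u,v,x)$ --- is precisely the assertion with which the paper's own proof opens (and which it does not justify further), so your invariant-based writeup is if anything more explicit about the structure of the argument than the paper itself.
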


\begin{proof}
    The only type 3(c) candidate values of $R(S)$ that we need to consider are $\ct(u, v, x)$, where $x$ lies on the major hemisphere of a cycle where our feasibility test wants to place a center. Since after completion of Stage 2 no such value lies inside the range $(\lm_1, \lm_2)$, therefore $\lm_2 = \lm^*$. \qed
\end{proof}

\begin{lemma}
    The time required by Stage 2 is at most $O(k n \log n)$.
\end{lemma}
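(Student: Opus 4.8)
The plan is to split the running time of the improved Stage~2 into two parts: the cost of the underlying left-to-right feasibility sweep over $S$, and the cost of the occasional invocations of Lemma~\ref{lem:paths} that shrink the interval $(\lm_1, \lm_2)$. First I would pin down the cost of a single shrinking step. When a center $x_l$ is about to be placed on the major hemisphere of a cycle $C_j$, we build the path $P_{C_j}$ and apply Lemma~\ref{lem:paths} to it. The path $P_{C_j}$ has $O(n)$ vertices, and the feasibility test used as the oracle runs in time $\tau = O(n)$ by the theorem of Section~\ref{sec:feasibility}. Hence one such step costs $O((m + \tau)\log m) = O(n \log n)$.

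Next I would bound the number of shrinking steps. Each one is triggered by the sweep attempting to place a single center on some major hemisphere, and the sweep places at most $k$ centers in total (it halts once $count$ reaches $k$). Therefore at most $k$ shrinking steps can occur, even if several of them target the same cycle and are redundant. Multiplying the per-step cost by this count shows that the shrinking steps contribute at most $O(k n \log n)$.

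It remains to account for the sweep itself, which costs $O(n)$ per pass. After each shrink we resume placing centers in the updated interval: the centers on non-major-hemisphere edges cover a fixed set of vertices throughout $(\lm_1, \lm_2)$ (this is exactly the stability established by the Stage~1 lemmas on $\Phi_{\lm}$, e.g.\ Lemma~\ref{lem:endpoint}), and each major-hemisphere placement is individually fixed by its own shrink, so resuming rather than restarting is correct. Even charging a full $O(n)$ sweep to each of the at most $k$ shrinks, this adds only $O(kn)$. Combining, the total is $O(k n \log n) + O(kn) = O(k n \log n)$, as claimed.

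The delicate point is the second step: the count of expensive invocations must be tied to the number of centers, hence to $k$, rather than to the number of cycles of $S$, which can be $\Theta(n)$ and would yield a worse $O(n^2 \log n)$ bound. The argument rests on performing one shrink per center placed on a major hemisphere, so that the global cap of $k$ on the number of centers directly caps the number of $O(n \log n)$ invocations of Lemma~\ref{lem:paths}.
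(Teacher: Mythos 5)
Your proof is correct and follows essentially the same route as the paper: a single invocation of Lemma \ref{lem:paths} on the $O(n)$-size path $P_{C_j}$ costs $O(n \log n)$, and the number of such invocations is at most $k$ because each one is triggered by the placement of one center on a major hemisphere. The additional bookkeeping you supply — charging the $O(n)$ sweep passes to the shrinks for an extra $O(kn)$, and noting that resuming the sweep is justified by the stability of the partition on non-major-hemisphere edges — is left implicit in the paper, but it does not alter the core counting argument.
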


\begin{proof}
    Whenever a center needs to be placed on the major hemisphere of some cycle $C_j$ of $S$ we generate the path $P_{C_j}$ and compute appropriate $\lm_1'$ and $\lm_2'$. Similar to Stage 1, this step takes $O(n \log n)$ time since $|V(P_{C_j})| = O(n)$. Since we are placing $k$ centers in $S$, such a step can repeat at most $k$ time. Therefore, the running time is $O(k n \log n)$. \qed
\end{proof}

\subsection{Further Improving Stage 2} \label{sec:stage2imp}

\subsubsection{Preliminaries}

Consider a feasibility test on $S$ with $\lm = \lm^*$ and root $h_0$, the rightmost vertex of $S$. Let $X^* \subset A(S)$ be the set of $k$ centers placed by the feasibility test. The set $X^*$ is an optimal solution to the weighted $k$-center problem in $S$.

Since we assume $\lm^*$ to be a type 3(c) candidate value of $R_S$, there exists $u^*, v^* \in V(S)$ and $x^* \in X^*$, such that $\lm^* = \ct(u^*, v^*, x^*)$. Also, $x^*$ lies on the major hemisphere of some cycle $C_{i^*}$ and $u^*, v^* \notin V(\C_{i^*})$. We consider the triplet $(u^*, v^*, x^*)$ for which $C_{i^*}$ is the leftmost possible cycle of $S$.

We assume $\Phi(u^*)$ and $\Phi(v^*)$ to completely cover the minor hemisphere of cycle $C_{i^*}$ as shown in Figure \ref{fig:vertex_cycle_map} and show some properties.

\begin{figure}[ht]
    \centering
    \includegraphics[width = .85 \textwidth]{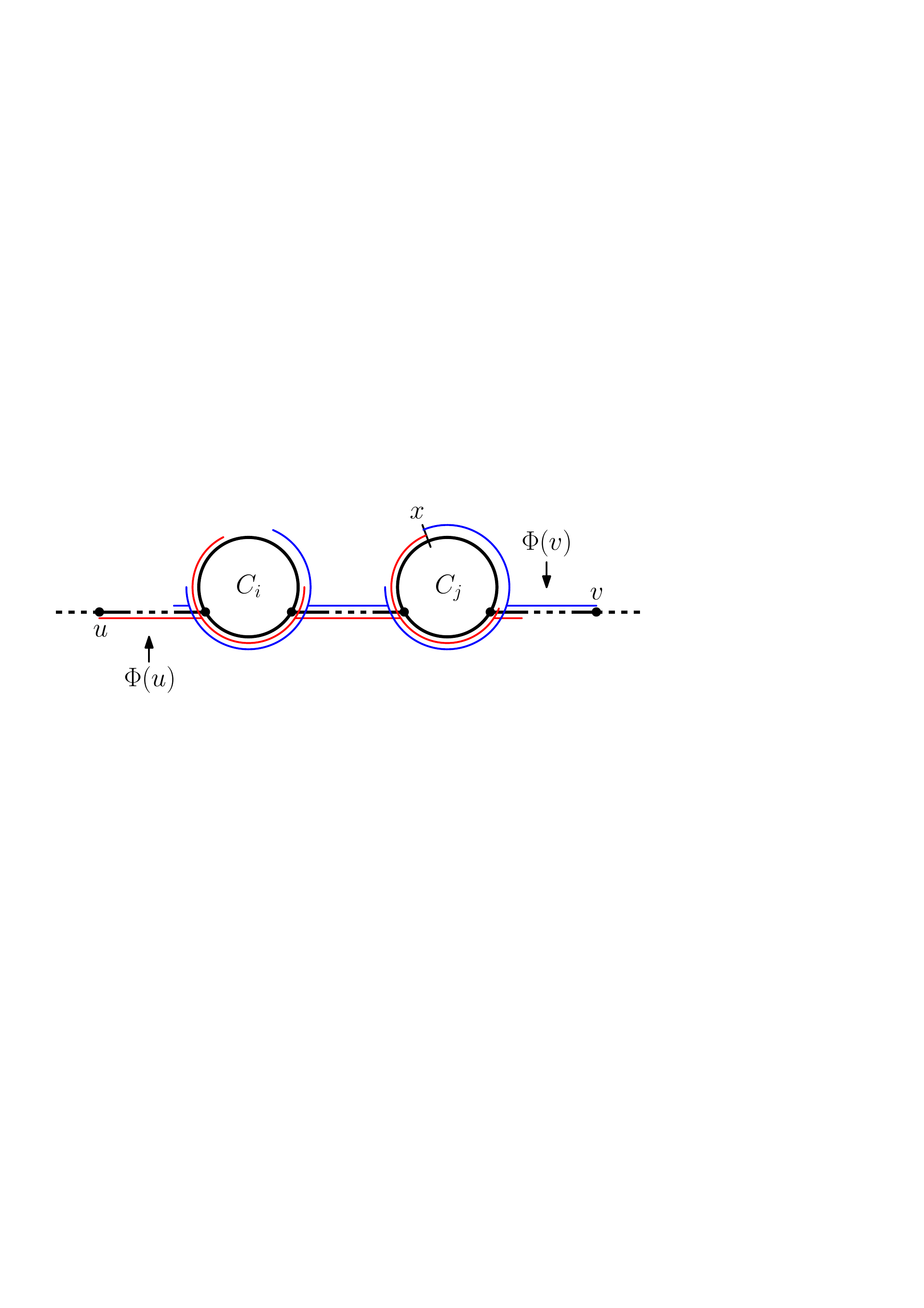}
    \caption{The $\lm$-regions of $u$ and $v$ where $\lm = \ct(u, v, x)$.} \label{fig:vertex_cycle_map}
\end{figure}

Arguing similar to the proof of Lemma \ref{lem:endpoint} we can show that $\Phi(v)$, for some $v \in V(S)$, completely covers the minor hemisphere of a cycle $C_i$ of $S$ if and only if $\Phi_{\lm_2} (v)$ also covers it completely.

\begin{definition}
    A \emph{twig} is any pendant (degree one) vertex $u$, adjacent to a vertex $v$, for which $v \notin \Phi(u)$, $u, v \in V(S)$.
\end{definition}

A twig is any pendant vertex $u$ of $S$ for which a center $x$ placed at its adjacent vertex has a weighted distance of more than $\lm^*$ from $u$. Arguing similar to the proof of Lemma \ref{lem:endpoint} we can also show that $u$ is a twig if $v \notin \Phi_{\lm_2} (u)$.

\begin{definition}
    A vertex $u$ is a \emph{free} vertex if $\Phi(u) \cap \Phi(v) = \varnothing$, for all twigs $v$ in $V(S)$.
\end{definition}

A free vertex is any vertex of $S$ which cannot be simultaneously covered along with a twig by a single center. For reference, see Figure \ref{fig:lambda_cover}.

\begin{definition}
    The \emph{$\lm$-intersection} of a cycle $C_i$, denoted by $\Theta_{\lm}(C_i)$, is defined as
    \begin{equation*}
        \Theta_{\lm}(C_i) = \bigcap_{v \in V'_i} \Phi_{\lm} (v),
    \end{equation*}
    where $V_i'$ represent the set of all free vertices in $V(\C_i)$.
Again, the subscript $\lm$ from $\Theta_{\lm}(\cdot)$ is omitted if $\lm = \lm^*$.
\end{definition}

\begin{figure}[ht]
    \centering
    \includegraphics[width = .75 \textwidth]{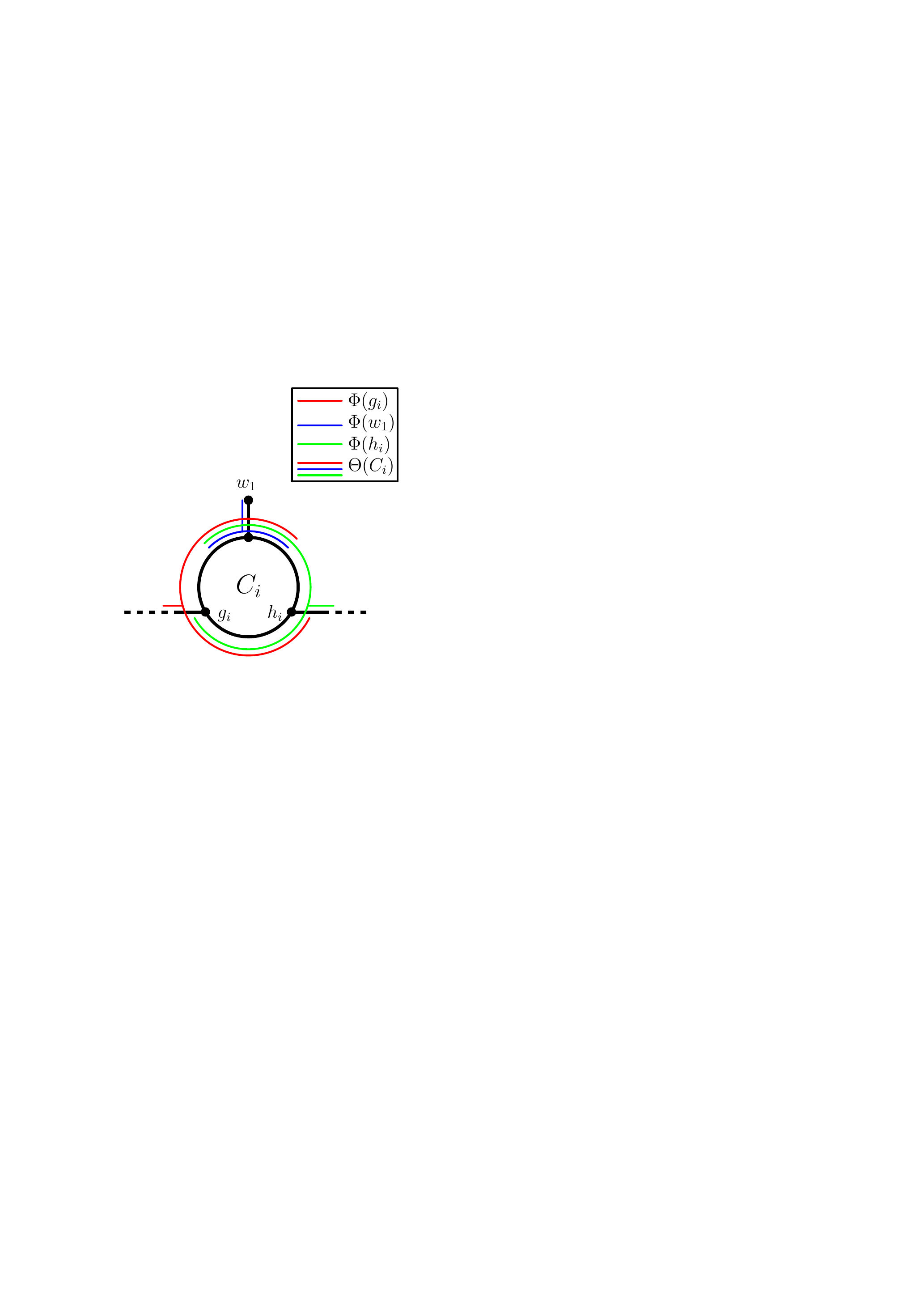}
    \caption{$\lm^*$-intersection of $C_i$.} \label{fig:theta}
\end{figure}

See Figure \ref{fig:theta} for reference. We are now ready to make the following claims.

\begin{lemma} \label{lem:intersection_region}
    For vertex $u^*$ and cycle $C_{i^*}$ the following two conditions hold: \emph{(a)} $\Phi(u^*) \cap \Theta(C_{i^*}) \neq \varnothing$ and \emph{(b)} $h_{i^*} \notin \Theta(C_{i^*})$.
\end{lemma}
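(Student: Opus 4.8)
The plan is to exhibit the optimal center $x^*$ itself as the witness for part (a), and to prove part (b) by contradiction using the choice of $C_{i^*}$ as the leftmost cycle carrying a type 3(c) critical triple. Throughout I orient the cycle so that, since $u^*, v^* \notin V(\C_{i^*})$ and the shortest paths $\pi_{x^* u^*}$ and $\pi_{x^* v^*}$ are disjoint, one of them leaves $C_{i^*}$ through the left hinge $g_{i^*}$ and the other through the right hinge $h_{i^*}$; without loss of generality $\pi_{x^* u^*}$ exits through $g_{i^*}$ and $\pi_{x^* v^*}$ through $h_{i^*}$, so that $d(x^*, v^*) = d(x^*, h_{i^*}) + d(h_{i^*}, v^*)$ and moving along the major hemisphere toward $h_{i^*}$ strictly decreases the weighted distance to $v^*$ while strictly increasing that to $u^*$.

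For part (a), first note that $x^* \in \Phi(u^*)$ is immediate, because $\ct(u^*, v^*, x^*) = \lm^*$ gives $w(u^*)\,d(x^*, u^*) = \lm^* \le \lm^*$. It therefore suffices to show $x^* \in \Theta(C_{i^*})$, i.e. that $x^*$ covers every free vertex of $\C_{i^*}$ with cost at most $\lm^*$. I would argue this from optimality of $X^*$ together with the twig/free structure: a center covering a free vertex covers no twig (by definition of a free vertex), whereas every twig forces a dedicated center on its own pendant edge. Consequently the free vertices of $\C_{i^*}$ cannot be served by twig-centers and must all be covered by centers sitting on $\C_{i^*}$; combining this with the standing assumption that $\Phi(u^*)$ and $\Phi(v^*)$ already cover the minor hemisphere completely, the only center of $X^*$ positioned to cover these free vertices while also reaching $u^*$ through $g_{i^*}$ is $x^*$ on the major hemisphere. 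Hence $x^* \in \bigcap_{w \in V_{i^*}'} \Phi(w) = \Theta(C_{i^*})$, giving $\Phi(u^*) \cap \Theta(C_{i^*}) \ne \varnothing$.

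For part (b), suppose for contradiction that $h_{i^*} \in \Theta(C_{i^*})$, so a single center at the right hinge already covers all free vertices of $\C_{i^*}$. Then the role of $x^*$ on the major hemisphere is no longer forced: I would relocate the free-vertex coverage to $h_{i^*}$ and re-examine which pair determines $\lm^*$. Since a center at $h_{i^*}$ covers $v^*$ with strict slack (as $d(h_{i^*}, v^*) < d(x^*, v^*)$) while still covering every free vertex of $\C_{i^*}$, the only vertex whose coverage is lost relative to $x^*$ is $u^*$, which lies to the left through $g_{i^*}$. Then $u^*$ must be covered by some center further to the left, so the critical triple realizing $\lm^*$ can be taken with its center on a cycle strictly to the left of $C_{i^*}$, contradicting the choice of $C_{i^*}$ as the leftmost cycle carrying a type 3(c) critical triple. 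This forces $h_{i^*} \notin \Theta(C_{i^*})$.

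The main obstacle is part (b), and specifically making the relocation to $h_{i^*}$ rigorous: one must verify that, after pulling the center to $h_{i^*}$, rerouting the coverage of the minor hemisphere and of the non-free vertices of $\C_{i^*}$ creates no newly uncovered vertex inside $\C_{i^*}$ (this is exactly where the assumption that $\Phi(u^*)$ and $\Phi(v^*)$ cover the minor hemisphere is essential), and that the resulting leftward critical triple is genuinely of type 3(c) rather than a lower type, for otherwise the contradiction with leftmost-minimality does not close. A secondary subtlety, used in part (a), is justifying that the free vertices of $\C_{i^*}$ are all served by the \emph{single} center $x^*$ rather than split among several centers on the cycle; this again leans on the disjointness of $\pi_{x^* u^*}$ and $\pi_{x^* v^*}$ together with the minor-hemisphere coverage assumption.
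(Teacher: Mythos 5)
Your proposal does not close either part; in both places the step you defer is exactly the step that constitutes the paper's proof. For part (a), the entire content of the lemma is that \emph{all} free vertices of $\C_{i^*}$ are covered by the single center $x^*$, so that $x^* \in \Theta(C_{i^*})$. You assert this (``the only center of $X^*$ positioned to cover these free vertices \ldots is $x^*$'') and then explicitly flag it as a ``secondary subtlety,'' but never prove it: nothing in your argument rules out two or more centers on $C_{i^*}$ splitting the free vertices among themselves. The paper rules this out with a short contradiction you are missing: if some free vertex were covered by a center $x' \neq x^*$, then since free vertices are non-twig, $x'$ may be taken to lie on $C_{i^*}$ itself; but under the standing assumption, $\Phi(u^*) \cup \Phi(v^*)$ covers \emph{all} of $C_{i^*}$, so $x'$ would also cover $u^*$ or $v^*$, and then the triple $(u^*, v^*, x^*)$ would no longer determine $\lm^*$ (the center $x^*$ could be shifted to lower its cost), a contradiction.

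For part (b), your relocation argument breaks at the first move. In $X^*$ the only center covering $u^*$ is $x^*$ (that is what criticality means), and $d(h_{i^*}, u^*) > d(x^*, u^*)$, so after pulling $x^*$ to $h_{i^*}$ the vertex $u^*$ is simply uncovered: the modified center set is infeasible at cost $\lm^*$, and your assertion that ``$u^*$ must be covered by some center further to the left'' has no basis. Moreover, the minimality of $C_{i^*}$ (leftmost cycle carrying a type 3(c) critical triple) is defined relative to the \emph{fixed} solution $X^*$, so even a repaired exchange argument producing a critical triple of some \emph{other} solution would not contradict that choice. The paper's argument is not an exchange argument at all but an algorithmic one, and it uses a property of $X^*$ that your proof never invokes: $X^*$ is the output of the greedy feasibility test rooted at the rightmost vertex $h_0$, which always pushes a center as far toward the root as it can. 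If $h_{i^*} \in \Theta(C_{i^*})$, that test would have placed the center serving the free vertices of $\C_{i^*}$ at $h_{i^*}$ (or deferred it outside the cycle) instead of on the major hemisphere; then $x^*$ would not lie on the major hemisphere and $\ct(u^*, v^*, x^*)$ would not be a type 3(c) value, a contradiction. Plain optimality of $X^*$, which is all your proof relies on, is too weak to yield either part.
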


\begin{proof}
    First we show that all vertices in $V_{i^*}'$ are covered by the center $x^*$. Let us assume that there is one more center $x'$ covering some vertices in $V_{i^*}'$ which are uncovered by $x^*$. Since all vertices in $V_{i^*}'$ are non-twig vertices, $x'$ can be placed on $C_{i^*}$ (and not inside any of its incident edges). But $\Phi(u) \cup \Phi(v)$ completely overlaps with all of $C_{i^*}$ and as a result $x'$ will also cover either $u^*$ or $v^*$. This implies that $\ct(u^*, v^*, x^*)$ will not determine $\lm^*$, which is a contradiction.

    Since all vertices in $V_{i^*}'$ and the vertex $u^*$ are covered by $x^*$, this implies that $\Theta(C_{i^*})$ is non-empty and intersects with $\Phi(u^*)$ which proves our first condition. If  $\Theta(C_{i^*})$ contains $h_{i^*}$ then the feasibility test would have placed $x^*$ on $h_{i^*}$ instead of on the major hemisphere of $C_{i^*}$. This would lead to $\ct(u^*, v^*, x^*)$ not being a type 3(c) candidate value, another contradiction. This proves our second condition. \qed
\end{proof}

\begin{lemma} \label{lem:cycle}
    No cycle $C_j$ between $u^*$ and $C_{i^*}$ satisfies both conditions of Lemma \ref{lem:intersection_region} simultaneously.
\end{lemma}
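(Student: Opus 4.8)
The plan is to argue by contradiction, leaning on the defining property of the triple $(u^*, v^*, x^*)$: that $C_{i^*}$ is the \emph{leftmost} cycle of $S$ carrying an optimal center on its major hemisphere whose two critical vertices both lie outside the cycle. Suppose some cycle $C_j$ lying strictly between $u^*$ and $C_{i^*}$ satisfies both conditions of Lemma \ref{lem:intersection_region}, i.e. $\Phi(u^*) \cap \Theta(C_j) \neq \varnothing$ and $h_j \notin \Theta(C_j)$. Since $C_j$ sits to the right of $u^*$ and to the left of $C_{i^*}$, the shortest path from $u^*$ to $x^*$ enters $C_j$ through its left hinge $g_j$ and leaves through its right hinge $h_j$; in particular $u^*$ lies to the left of $C_j$ and $v^*$ to its right, mirroring the orientation forced at $C_{i^*}$ (where condition (b) refers to the right hinge $h_{i^*}$).

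First I would pin down a witness point. By condition (a) fix $x' \in \Phi(u^*) \cap \Theta(C_j)$: it covers $u^*$ and, being in $\Theta(C_j)$, every free vertex of $\C_j$. I would then show $x'$ may be chosen on the major hemisphere of $C_j$. The mechanism is that each $\lm^*$-cover, restricted to the cycle, is an arc, so $\Theta(C_j)$ is itself an arc of $C_j$; condition (b) says this arc avoids $h_j$, while condition (a) forces it to reach far enough toward $g_j$ to meet $\Phi(u^*)$. Taking $x'$ at the endpoint of $\Theta(C_j)$ nearest $u^*$ then lands it on the major hemisphere, with its shortest path to $u^*$ exiting through $g_j$ and disjoint from a shortest path reaching the right side through $h_j$.

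Next I would exhibit a critical partner for $u^*$ on the right side of $C_j$. Re-running the covering argument of Lemma \ref{lem:intersection_region} at $C_j$, the free vertices of $\C_j$ are non-twig vertices, so any center covering them can be taken on $C_j$; condition (b) forbids that center from sitting at $h_j$, and condition (a) forces it to simultaneously reach $u^*$. Hence in the optimal solution $X^*$ a single center plays the role of $x'$, covering $u^*$ together with the free vertices of $\C_j$ from the major hemisphere. Because this center lies on the $u^*$–$x^*$ path strictly to the left of $C_{i^*}$, it is strictly closer to $u^*$ than $x^*$, contradicting that $u^*$ is a critical vertex of $x^*$ (for which $x^*$ must be $u^*$'s nearest center). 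Equivalently, $x'$ together with $u^*$ and a suitable right-side vertex $v'$ forms a type 3(c) triple realizing $\lm^*$ on a cycle to the left of $C_{i^*}$, contradicting the minimality of $C_{i^*}$.

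The main obstacle is the second step: turning the region-level conditions (a) and (b) into a statement about an \emph{actual} optimal center and certifying the type 3(c) classification. Concretely, I expect the delicate points to be (i) showing that condition (b) really forces the covering point onto the \emph{major} rather than the minor hemisphere, and (ii) producing the second critical vertex $v' \notin V(\C_j)$ with a shortest path disjoint from $\pi_{x' u^*}$, so that $\ct(u^*, v', x') = \lm^*$ is genuinely type 3(c). Both hinge on the arc structure of the $\lm^*$-covers on a cycle and on transferring the overlap behaviour between $\lm^*$ and $\lm_2$ via Lemma \ref{lem:endpoint} and the overlap lemma following it; I would set up those arc computations carefully before drawing the contradiction.
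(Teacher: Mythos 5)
Your proof has the right outer shape (contradiction via who covers the free vertices $V_j'$ of $\C_j$), but there is a genuine gap at precisely the step you flag as the ``main obstacle,'' and it cannot be closed within your framework. You treat $X^*$ as an arbitrary optimal solution and try to deduce, from conditions (a) and (b) alone, that ``in the optimal solution $X^*$ a single center covers $u^*$ together with the free vertices of $\C_j$.'' That inference is false: condition (a), $\Phi(u^*) \cap \Theta(C_j) \neq \varnothing$, is an existential statement about \emph{points} --- some point could cover $u^*$ and all of $V_j'$ simultaneously --- and forces nothing about where a given solution actually puts its centers. An optimal solution may perfectly well place the center responsible for $V_j'$ inside $\Theta(C_j) \setminus \Phi(u^*)$ while $u^*$ remains critically covered by $x^*$; this uses the same number of centers at the same cost $\lm^*$, so no contradiction with optimality, nor with the leftmost choice of $C_{i^*}$, can be extracted. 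Indeed, for such a solution the triple $(u^*, v^*, x^*)$ still determines $\lm^*$ and $C_j$ satisfies both conditions, so the lemma is simply not true for arbitrary optimal solutions. It is a property of the specific $X^*$ defined in this subsection: the output of the feasibility test run with $\lm = \lm^*$, rooted at $h_0$, processing $S$ left to right. Any correct proof must invoke that greedy behavior, which your proposal never does.

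That is exactly how the paper argues. It splits on who covers $V_j'$. If $x^*$ covers all of $V_j'$, then since every path from $x^*$ (which lies to the right of $C_j$) into $\C_j$ passes through $h_j$, the point $h_j$ dominates $x^*$ for every vertex of $\C_j$, so $h_j \in \Theta(C_j)$, contradicting (b); note that your normalization ``any center covering them can be taken on $C_j$'' --- borrowed from the proof of Lemma \ref{lem:intersection_region} --- cannot be applied to $x^*$ here, since $x^*$ must also cover $u^*$ and $v^*$. Otherwise, the test must place a center $x'$ on $C_j$ at the moment it processes $C_j$, when $u^*$ is still uncovered; by greediness and condition (a) it places $x'$ inside $\Phi(u^*) \cap \Theta(C_j)$, so $x'$ covers $u^*$, and then the later-placed $x^*$ is no longer critical for $u^*$, i.e.\ $\ct(u^*, v^*, x^*)$ does not determine $\lm^*$ --- contradiction. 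This form of the contradiction also sidesteps the two secondary difficulties you correctly identify but leave unresolved: it needs no strict inequality ($w(u^*) \cdot d(u^*, x') \leq \lm^*$ may be an equality, so ``strictly closer than $x^*$'' is unjustified), and it needs no second critical vertex $v' \notin V(\C_j)$ for $x'$ (such a partner need not exist, so an appeal to the minimality of $C_{i^*}$ as a type 3(c) cycle does not go through).
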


\begin{proof}
    Let us assume there exists a cycle $C_j$, in between vertex $u^*$ and cycle $C_{i^*}$ which satisfies both conditions of Lemma~\ref{lem:intersection_region} simultaneously. Since $u^*$ is covered by $x^*$ which is to the right of $C_j$, all vertices in $V_j'$ are either covered by $x^*$ or by at most one center placed on $C_j$. This condition holds because $\Theta(C_j)$ is non-empty and $V_j$ has no twigs.

    If all vertices in $V_j'$ are covered by $x^*$ then $h_j \in \Theta(C_j)$, which is a contradiction. If $h_j \notin \Theta(C_j)$, then one center say $x'$ needs to be placed on $C_j$ to cover the vertices in $V_j'$. But since $\Phi(u^*)$ intersects $\Theta(C_j)$ the feasibility test, while processing $C_j$, will place $x'$ in such a way that it also covers $u^*$. This is a contradiction since $\ct(u^*, v^*, x^*)$ will not determine $\lm^*$ then. \qed
\end{proof}

From Lemma \ref{lem:intersection_region} and \ref{lem:cycle} we can make the following observation.

\begin{observation}
    Let $u, v \in V(S)$ and $x$ belongs to the major hemisphere of a cycle $C_i$ such that $\Phi(u)$ completely overlaps with the minor hemisphere of $C_i$. We consider only those type 3(c) candidate values $\ct(u, v, x)$ where $C_i$ is the first cycle to the right of $u$ which satisfies both conditions in Lemma \ref{lem:intersection_region}.
\end{observation}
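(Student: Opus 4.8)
The plan is to deduce the statement directly from Lemmas~\ref{lem:intersection_region} and~\ref{lem:cycle}, since those two lemmas already isolate the structural property we need. Recall the standing assumptions of this subsection: $\lm^*$ is a type 3(c) candidate value, so there is a triple $(u^*, v^*, x^*)$ with $\lm^* = \ct(u^*, v^*, x^*)$, with $x^*$ on the major hemisphere of $C_{i^*}$ and $u^*, v^* \notin V(\C_{i^*})$, and we have fixed this triple so that $C_{i^*}$ is the leftmost cycle for which such a triple exists. Since the two shortest paths $\pi_{x^* u^*}$ and $\pi_{x^* v^*}$ are disjoint and leave $\C_{i^*}$ through its two hinges, exactly one of $u^*, v^*$ lies to the left of $C_{i^*}$; I would rename that vertex $u^*$ (and the other $v^*$). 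With this choice $\Phi(u^*)$ reaches $C_{i^*}$ through the hinge $g_{i^*}$ and, as assumed in the setup, completely covers the minor hemisphere, so $(u^*, v^*, x^*)$ matches the hypothesis of the observation with $u = u^*$, $v = v^*$ and $C_i = C_{i^*}$.

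First I would invoke Lemma~\ref{lem:intersection_region} applied to $u^*$ and $C_{i^*}$: it gives $\Phi(u^*) \cap \Theta(C_{i^*}) \neq \varnothing$ and $h_{i^*} \notin \Theta(C_{i^*})$, i.e.\ $C_{i^*}$ satisfies both conditions of that lemma relative to $u^*$. Next I would invoke Lemma~\ref{lem:cycle}, which states that no cycle lying strictly between $u^*$ and $C_{i^*}$ satisfies both of these conditions simultaneously. Because the tree representation of $S$ is a path, the cycles to the right of $u^*$ are linearly ordered, so ``the first cycle to the right of $u^*$ satisfying both conditions'' is well defined; combining the two lemmas shows that this first cycle is exactly $C_{i^*}$.

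It then follows that the optimal type 3(c) value $\lm^* = \ct(u^*, v^*, x^*)$ is realized by a triple in the restricted family described in the statement---namely one in which $C_i$ is the first cycle to the right of $u$ satisfying both conditions of Lemma~\ref{lem:intersection_region}. Hence, when searching for $\lm^*$ among type 3(c) candidate values, it is sound to examine only this restricted family and discard every other type 3(c) value.

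I expect the main obstacle here to be bookkeeping rather than any genuine analytic difficulty: essentially all of the content is carried by Lemmas~\ref{lem:intersection_region} and~\ref{lem:cycle}. The one point that needs care is the reduction of the generic pair $(u^*, v^*)$ to the asymmetric hypothesis of the observation---choosing which of the two critical vertices plays the role of $u$ (the one to the left, whose $\lm$-cover contains the minor hemisphere) so that both ``first cycle to the right of $u$'' and ``$\Phi(u)$ covers the minor hemisphere'' hold at once. I would also make explicit why the leftmost-cycle choice of $C_{i^*}$ is what lets Lemma~\ref{lem:cycle} apply, so that $C_{i^*}$ is not merely some cycle satisfying both conditions but the first one to the right of $u^*$.
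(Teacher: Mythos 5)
Your proposal is correct and takes essentially the same route as the paper: the paper states this observation as an immediate consequence of Lemmas \ref{lem:intersection_region} and \ref{lem:cycle} (it offers no further proof), and your argument is precisely that combination---Lemma \ref{lem:intersection_region} shows $C_{i^*}$ satisfies both conditions, Lemma \ref{lem:cycle} rules out any earlier cycle, so $C_{i^*}$ is the first such cycle to the right of $u^*$. Your extra bookkeeping (renaming the left critical vertex as $u^*$ and noting that the path structure of the stem linearly orders the cycles, making ``first cycle to the right'' well defined) only makes explicit what the paper leaves implicit.
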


Consider an $xy$-coordinate system $\R^2$. For each vertex $v_i \in V(S)$, we place a point $p_i$ on the $x$-axis of $\R^2$ at the $x$-coordinate $d(v_i, h_0)$. For each such point $p_i$ we place a line $l_i$ on it with slope $w(v_i)$. Let $L_S$ be the set of these lines. We have the following definition.

\begin{definition}
    The \emph{rank} of a vertex $v_i \in V(S)$ is the respective position of line $l_i$ in a left to right ordering of the intersections of all lines in $L_S$ with the horizontal line $y = \lm^*$.
\end{definition}

Note that the ranks of the vertices do not change if we replace $\lm^*$ with $\lm_2$.

Let $u, v \in V(S)$ be such that $\rank(u) \leq \rank(v)$. Let $C_i$ be a cycle to the right of both $u$ and $v$. From the definition of rank we have the following observation.

\begin{observation} \label{obs:rank}
    $\Phi(u) \cap \A(C_i) \subseteq \Phi(v) \cap \A(C_i)$.
\end{observation}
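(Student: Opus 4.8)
The plan is to prove Observation \ref{obs:rank} directly from the definition of rank together with the structural setup of $S$. I would begin by unpacking the definition. Recall that for each vertex $v_i \in V(S)$ we place the point $p_i$ on the $x$-axis at $x$-coordinate $d(v_i, h_0)$ and the line $l_i$ through $p_i$ with slope $w(v_i)$. The rank of $v_i$ is the left-to-right position of the point where $l_i$ meets the horizontal line $y = \lm^*$. A short computation shows that this intersection occurs at $x$-coordinate $d(v_i, h_0) - \lm^*/w(v_i)$ (taking the line to be $y = w(v_i)(x - d(v_i,h_0))$, so that the point at height $\lm^*$ sits to the left of $p_i$ by exactly $\lm^*/w(v_i)$). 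Thus $\rank(u) \le \rank(v)$ is equivalent to
\begin{equation*}
    d(u, h_0) - \frac{\lm^*}{w(u)} \;\le\; d(v, h_0) - \frac{\lm^*}{w(v)}.
\end{equation*}
The quantity $\lm^*/w(v_i)$ is precisely the radius of the $\lm^*$-cover $\Phi(v_i)$, i.e. the maximum weighted-distance budget along any path out of $v_i$.

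Next I would translate this inequality into a statement about where the covers reach on the cycle $C_i$, which lies to the right of both $u$ and $v$. Since $C_i$ is to the right of $u$, every path from $u$ into $\A(C_i)$ passes through $h_0$'s side of the stem and in particular through the hinge $g_i$; the same holds for $v$. For any point $y \in \A(C_i)$, a shortest path $\pi_{yu}$ decomposes (up to the cyclic routing inside $C_i$) as the portion from $y$ to $g_i$ inside $\C_i$ plus the portion from $g_i$ back toward $h_0$ and then to $u$; because both $u$ and $v$ sit on the same ``left'' side relative to $C_i$, the tail of this path from $g_i$ outward is common structure, and the distances satisfy $d(y, u) = d(y, g_i) + d(g_i, u)$ with an analogous identity for $v$ (using that $C_i$ is to the right of $u$ and $v$, so no shortest path from them revisits the cycle). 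I would then observe that $y \in \Phi(u) \cap \A(C_i)$ means $w(u)\,d(y,u) \le \lm^*$, equivalently $d(y, g_i) \le \lm^*/w(u) - d(g_i, u)$. The right-hand side is exactly a reach-into-the-cycle budget that, after substituting $d(g_i,u) = d(u,h_0) - d(g_i,h_0)$, reduces to $\big(\lm^*/w(u) - d(u,h_0)\big) + d(g_i,h_0)$.

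Putting these together is the crux: the rank inequality $d(u,h_0) - \lm^*/w(u) \le d(v,h_0) - \lm^*/w(v)$ says precisely that $\lm^*/w(u) - d(u,h_0) \le \lm^*/w(v) - d(v,h_0)$, so the reach budget of $u$ into $\A(C_i)$ is no larger than that of $v$. Hence every point $y$ reachable from $u$ within budget $\lm^*$ is also reachable from $v$, giving $\Phi(u) \cap \A(C_i) \subseteq \Phi(v) \cap \A(C_i)$, which is the claim. I would phrase the final step as: the set $\Phi(u) \cap \A(C_i)$ is exactly the set of points in $\A(C_i)$ within cycle-distance at most $B_u := \lm^*/w(u) - d(g_i,u)$ of $g_i$ (measured along $C_i$ from the hinge), and similarly for $v$ with budget $B_v$; the rank inequality is equivalent to $B_u \le B_v$, and these reachable sets are monotone (nested) in the budget since they are just the portions of the cycle within a given distance of a common entry point $g_i$.

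The main obstacle I anticipate is the cycle geometry: a point $y$ on $C_i$ can be reached from $g_i$ along either hemisphere, so $d(y, g_i)$ inside $\A(C_i)$ is the minimum of the two arc-lengths, and I must confirm that the nested structure survives this two-way routing. The key point to make rigorous is that both $u$ and $v$ enter $C_i$ through the \emph{same} hinge $g_i$ (because both are to the left of $C_i$), so the only difference between their covers on $\A(C_i)$ is the scalar budget $B_u$ versus $B_v$; with a common entry point, ``all points within cycle-distance $B$ of $g_i$'' is manifestly monotone nondecreasing in $B$, and $B_u \le B_v$ follows from the rank hypothesis. I would make sure to invoke the fact that $C_i$ being to the right of both $u$ and $v$ forces every shortest path from them to $\A(C_i)$ to factor through $g_i$, so the additive decomposition of distances is valid and no path ``short-circuits'' through the cycle from the right.
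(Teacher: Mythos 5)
Your overall strategy is the right one, and indeed the only substantive proof available: the paper itself offers nothing beyond ``from the definition of rank,'' so the argument you supply --- reduce the observation to comparing two scalar reach budgets at the common entry hinge $g_i$, using the cactus property that every shortest path from $u$ or $v$ into $\A(C_i)$ factors through $g_i$, and then noting that balls around a common point of the cycle are nested by radius --- is exactly what is needed. The cycle-geometry half of your proof is sound: $d(y,u) = d_{C_i}(y,g_i) + d(g_i,u)$ for $y \in \A(C_i)$, the two-way routing around the cycle is absorbed into $d_{C_i}(\cdot, g_i)$, the terms $d(g_i,h_0)$ cancel when comparing budgets, and $B_u \le B_v$ does give the claimed inclusion.

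However, your translation from rank to the budget ordering contains two sign errors that happen to cancel. First, with your own line equation $y = w(v_i)\,(x - d(v_i,h_0))$ and positive slope, the intersection with $y = \lm^*$ is at $x = d(v_i,h_0) + \lm^*/w(v_i)$, i.e.\ to the \emph{right} of $p_i$, not at $d(v_i,h_0) - \lm^*/w(v_i)$ as you assert. Second, when you pass from ``$d(u,h_0) - \lm^*/w(u) \le d(v,h_0) - \lm^*/w(v)$'' to ``$\lm^*/w(u) - d(u,h_0) \le \lm^*/w(v) - d(v,h_0)$,'' you negate both sides without reversing the inequality. The two mistakes compensate, so you land on the correct correspondence (rank ascending equals budget ascending), but neither intermediate step is valid as written. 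This matters because under the literal convention of the paper's Definition (points at $x$-coordinate $+d(v_i,h_0)$, slope $+w(v_i)$), rank orders vertices by $d(v_i,h_0) + \lm^*/w(v_i)$, and with \emph{that} ordering the observation is in fact false: taking $d(u,h_0)=10$, $\lm^*/w(u)=1$ and $d(v,h_0)=1$, $\lm^*/w(v)=5$ gives $\rank(v) < \rank(u)$, yet $\Phi(v)$ reaches into any cycle to the right of both while $\Phi(u)$ does not. The clean fix is to adopt the convention the paper itself uses later in Section \ref{sec:stage2imp} --- place $p_i$ at distance $d(v_i,h_0)$ to the \emph{left} of the origin --- so the intersection with $y = \lm^*$ lands at $\lm^*/w(v_i) - d(v_i,h_0)$, rank ascending is literally budget ascending, and the rest of your argument then goes through verbatim.
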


\subsubsection{Finding the first cycle.}

We present an algorithm to find for each vertex $v \in V(C)$ the first cycle to its right which satisfies Lemma \ref{lem:intersection_region}, if it exists. Let $\mathcal C$ be the set of cycles in $S$ ordered from left to right. We compute $\Theta_{\lm_2}(C_i)$ for all cycle $C_i$ in $\mathcal C$. We remove from $\mathcal C$ any cycle $C_j$ for which either $\Theta_{\lm_2}(C_j) = \varnothing$ or $h_j \in \Theta_{\lm_2}(C_j)$. This is because for any point $x \in A(C_j)$ a type 3(c) candidate value $\ct(u, v, x)$ will never satisfy Lemma \ref{lem:intersection_region}.

We process the vertices in $V(S)$ with increasing rank and perform the following step. Let $v_i$ be the vertex we are currently processing. Find the leftmost cycle $C_j \in \mathcal C$ which is to the right of $u_i$. If no cycle exists we skip $u_i$ and continue to the next higher ranked vertex. This implies that no cycle exists to the right of $u_i$ which satisfies Lemma \ref{lem:intersection_region}.

If cycle $C_j$ exists, then we check if $\Phi_{\lm_2}(u_i) \cap \Theta_{\lm_2}(C_j) \neq \varnothing$. If the condition holds, we correctly identify $C_j$ as the first vertex to the right of $u_i$ which satisfies Lemma \ref{lem:intersection_region}. We continue to the next higher ranked vertex.

Otherwise, if the condition is not satisfied, we remove $C_j$ from $\mathcal C$. This is because from Observation \ref{obs:rank}, for any vertex $v_j$ with $\rank(v_j) \geq \rank(v_i)$,  $\Phi_{\lm_2}(u_i) \cap \Theta_{\lm_2}(C_j) = \varnothing$. We continue to the next cycle in $\mathcal C$ to the right of $C_j$. We stop when either $\mathcal C$ is empty or when all vertices in $V(S)$ have been processed.

When a vertex is successfully mapped to a cycle we charge the operation on the vertex and when a cycle is unsuccessfully compared for mapping with a vertex we charge the operation on the cycle. Since, each vertex and cycle is charged exactly once the running time of this procedure is $O(n \log n)$. The $\log n$ factor comes in since we have to identify the first cycle in $\mathcal C$ to the right of a particular vertex which takes $O(\log n)$ time.

\subsubsection{The improved algorithm.}

Here is the final algorithm which improves upon the running time of Stage 2 considerably. Consider an $xy$-coordinate system $\R^2$. For each vertex $v_i \in V(S)$, we place a point $p_i$ on the $x$-axis of $\R^2$ at a distance of $d(v_i, h_0)$ to the left of the origin. Again, for each vertex $v_i \in V(S)$, let $C_{j_1}$ be the first cycle to the right of $v_i$ which satisfies Lemma \ref{lem:intersection_region} and let $C_{j_2}$ be the cycle to the right of $v_i$ where $\Phi(v_i)$ has an endpoint on its minor hemisphere. We place points $p_{j_1'}$ and $p_{j_2'}$ on the $x$-axis at a distance of $d(v_i, g_{j_1}) + \Delta_{j_1} + d(h_{j_1}, h_0)$  and $d(v_i, g_{j_2}) + \Delta_{j_2} + d(h_{j_2}, h_0)$ to the left of the origin, respectively, where $\Delta_i$ is the length of the major hemisphere of cycle $C_i$.

Let $P_S'$ be the path formed by a left to right ordering of these points. We can make the following observation.

\begin{observation}
    Every relevant type 3(c) candidate value of $R_S$ is included in $R(P_S')$.
\end{observation}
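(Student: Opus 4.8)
The plan is to show that every relevant type 3(c) candidate value $\ct(u^*, v^*, x^*)$ is realized as the $y$-coordinate of some intersection point of lines in the arrangement generated by $P_S'$, and hence lies in $R(P_S')$ by the correspondence established in Section~\ref{sec:arrangements}. Recall from the preliminaries that a relevant type 3(c) value has $x^*$ lying on the major hemisphere of a cycle $C_{i^*}$ with $u^*, v^* \notin V(\C_{i^*})$, where $C_{i^*}$ is the \emph{first} cycle to the right of $u^*$ satisfying both conditions of Lemma~\ref{lem:intersection_region}, and where $\Phi(u^*)$ completely overlaps the minor hemisphere of $C_{i^*}$. I would first fix such a triple and track exactly where $u^*$ and $v^*$ map under the point placement defining $P_S'$.

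First I would verify that the placement encodes the covering cost correctly. The key observation is that since $x^*$ sits on the major hemisphere of $C_{i^*}$ and neither $u^*$ nor $v^*$ lies in $V(\C_{i^*})$, both shortest paths $\pi_{x^* u^*}$ and $\pi_{x^* v^*}$ must leave $C_{i^*}$ through its hinges and traverse the major hemisphere. Assuming without loss of generality that $u^*$ is to the left and $v^*$ to the right of $C_{i^*}$, the distance $d(x^*, u^*)$ decomposes as (distance from $x^*$ along the major hemisphere to $g_{i^*}$) plus $d(g_{i^*}, u^*)$, and symmetrically for $v^*$. This is precisely why the point for $u^*$ (as the cycle $C_{j_1}$ it maps to) is placed at $x$-coordinate $d(u^*, g_{j_1}) + \Delta_{j_1} + d(h_{j_1}, h_0)$: the offset $\Delta_{j_1} + d(h_{j_1}, h_0)$ unrolls the major hemisphere and the tail to $h_0$ onto the line, so that a single center location on the line corresponds to a location on the major hemisphere. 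I would check that with this encoding, $\ct(u^*, v^*, x^*)$ equals $\ct(p, q)$ for the two mapped points $p, q$ in $P_S'$, i.e.\ the cost of covering them by a single center, which is captured as an intersection point of the associated $\pm w$ lines exactly as in Lemma~\ref{lem:paths}.

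The step I expect to be the main obstacle is justifying that the \emph{correct} pair of mapped points is the one realizing the cost, given that each vertex contributes two points ($p_{j_1'}$ from the first-cycle map and $p_{j_2'}$ from the endpoint-on-minor-hemisphere map). I would need to argue that for the relevant triple, $u^*$ uses its $C_{j_1}$-point — this is where the hypothesis that $C_{i^*}$ is the first cycle to the right of $u^*$ satisfying Lemma~\ref{lem:intersection_region} is essential, since it forces $C_{j_1} = C_{i^*}$ for $u^*$. For $v^*$, which lies to the right of $C_{i^*}$, I would argue that the relevant cost is measured through the same cycle and that the condition $\Phi(u^*)$ completely overlapping the minor hemisphere, together with Observation~\ref{obs:rank} on ranks, pins down $v^*$'s contribution; the $C_{j_2}$-point (endpoint on minor hemisphere) handles the subcase where $v^*$'s cover reaches into the cycle. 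The remaining task is routine: invoking the already-established fact that all values $\ct(\cdot,\cdot)$ and $\lct(\cdot,\cdot)$ over $V(P_S') \times V(P_S')$ appear as $y$-coordinates of intersection points in the line arrangement $I_{P_S'}$, so that the computed value indeed lies in $R(P_S')$. I would close by noting that since $|V(P_S')| = O(n)$, this yields the desired single path of linear size, in contrast to the quadratic-size $\hat P$ of the naive Stage~2.
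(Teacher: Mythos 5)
The paper states this observation without any proof at all --- it is presented as immediate from the construction of $P_S'$ --- so your sketch has to stand on its own, and it breaks down exactly at the step you labelled ``I would check.'' The entire content of the observation is that distance identity, and it is not automatic. Take a relevant triple $(u^*,v^*,x^*)$ with $u^*$ to the left of $C_{i^*}$ and $v^*$ to the right. You correctly note that relevance forces $C_{j_1}(u^*)=C_{i^*}$, so $u^*$ contributes the point at coordinate $-\bigl(d(u^*,g_{i^*})+\Delta_{i^*}+d(h_{i^*},h_0)\bigr)$; but $v^*$ can only contribute its natural point at $-d(v^*,h_0)$, since its unrolled points refer to cycles to the \emph{right} of $v^*$, hence never to $C_{i^*}$. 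The separation of these two points is $d(u^*,g_{i^*})+\Delta_{i^*}+d(h_{i^*},h_0)-d(v^*,h_0)$, whereas the quantity determining $\ct(u^*,v^*,x^*)$ is $d(u^*,g_{i^*})+\Delta_{i^*}+d(h_{i^*},v^*)$. These coincide if and only if $d(h_{i^*},v^*)+d(v^*,h_0)=d(h_{i^*},h_0)$, i.e.\ iff $v^*$ lies on a shortest path from $h_{i^*}$ to $h_0$. If $v^*$ is a non-twig pendant vertex attached to a vertex $w$ between $C_{i^*}$ and $h_0$, or a vertex interior to a cycle in that region, this fails (for the pendant, by $2\,l(w,v^*)$); the intersection of the $+w(u^*)$ and $-w(v^*)$ lines then encodes a strictly smaller value, and membership of $\ct(u^*,v^*,x^*)$ in $R(P_S')$ is not established. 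A complete proof must either show that such $v^*$ cannot occur in a relevant triple --- and nothing you cite does this, since Lemma~\ref{lem:intersection_region} and Observation~\ref{obs:rank} constrain $u^*$ and the cycles to its right, not $v^*$ --- or exhibit a different pair of points of $P_S'$ that realizes the value.

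There is a second, related misreading: the $C_{j_2}$-point of $v^*$ cannot ``handle the subcase where $v^*$'s cover reaches into the cycle,'' because $C_{j_2}(v^*)$ is by definition a cycle to the right of $v^*$, while $C_{i^*}$ lies to its left. Those points exist for the sake of the \emph{left} vertex of a pair: they capture type 3(c) values in which $\Phi(u)$ does not completely cover the minor hemisphere of the cycle carrying $x$ (the family complementary to the first-cycle family served by the $C_{j_1}$-points). Similarly, Observation~\ref{obs:rank} is a containment statement about cover regions and says nothing about how metric distances are encoded on the line, so it cannot ``pin down $v^*$'s contribution.'' The first half of your sketch --- the decomposition of $d(x^*,u^*)$ and $d(x^*,v^*)$ through the two hinges, the disjointness of the shortest paths forcing one vertex on each side of $C_{i^*}$, and the identification $C_{j_1}(u^*)=C_{i^*}$ --- is correct and matches the construction's intent; the proof is missing precisely where the observation stops being a tautology about the unrolling and starts depending on where $v^*$ can actually sit.
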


Therefore, using Lemma \ref{lem:paths}, we can find $\lm_1'$ and $\lm_2'$ such that (a) $\lm_1' < \lm^* \leq \lm_2$, and (b) no relevant type 3(c) candidate value is inside the range $(\lm_1', \lm_2')$. Finally we update $\lm_1\gets \max \{\lm_1, \lm_1'\}$ and $\lm_2 \gets \min \{\lm_2, \lm_2'\}$. Since we have exhausted all candidate values of $\lm^*$, $\lm^* = \lm_2$. Since the size of $P_S'$ is $O(n)$ we have the following theorem.

\begin{theorem}
    The weighted $k$-center problem on a cactus stem of size $n$ can be solved in $O(n \log n)$ time.
\end{theorem}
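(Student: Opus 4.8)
The plan is to derive the theorem by composing the two stages and checking that each stays within the claimed budget. For correctness I would first invoke the Stage 1 theorem: after Stage 1 we have $\lm_1 < \lm^* \le \lm_2$ with no value of $R_S'$ and no candidate value of $R_S$ other than a type 3(c) value lying in $(\lm_1,\lm_2)$. The only remaining task is to exclude the \emph{relevant} type 3(c) values, where by Lemma~\ref{lem:intersection_region}, Lemma~\ref{lem:cycle}, and the ensuing observation ``relevant'' means the triple $(u,v,x)$ has $x$ on the major hemisphere of the first cycle $C_i$ to the right of $u$ satisfying both conditions of Lemma~\ref{lem:intersection_region}. Since $\lm^*$ is unknown, all of the structures driving the algorithm, namely $\Phi$, $\Theta$, the ranks, and the twig/free classification, are computed at $\lm_2$ rather than at $\lm^*$; I would appeal to Lemma~\ref{lem:endpoint} and the accompanying invariance remarks (which assert these objects do not change as $\lm$ ranges over $(\lm_1,\lm_2)$) to justify this substitution. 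Once the improved Stage 2 refines $(\lm_1,\lm_2)$ so that no relevant type 3(c) value survives, no candidate value of $R_S$ lies in $(\lm_1,\lm_2)$ at all; by Lemma~\ref{lem:cost} the optimum is itself a candidate value, so $\lm^* = \lm_2$.

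The second thing to establish is that the refinement runs in $O(n\log n)$ time, which I would do by breaking the improved Stage 2 into its pieces. The ``finding the first cycle'' routine assigns to each vertex $v_i$ its cycle $C_{j_1}$ in $O(n\log n)$ time via the amortized charging argument already given, each vertex and each cycle being charged once, with an $O(\log n)$ search per charge. For each vertex I would also locate the cycle $C_{j_2}$ on whose minor hemisphere $\Phi(v_i)$ has an endpoint, again within $O(n\log n)$. Because each vertex contributes only the three image points $p_i$, $p_{j_1'}$, $p_{j_2'}$, the path $P_S'$ has size $O(n)$. Applying Lemma~\ref{lem:paths} to $P_S'$ with the $O(n)$-time feasibility test of Section~\ref{sec:feasibility} costs $O((m+\tau)\log m) = O(n\log n)$, since $m = |V(P_S')| = O(n)$ and $\tau = O(n)$. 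Adding the $O(n\log n)$ of Stage 1, the total is $O(n\log n)$.

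The main obstacle is the observation that $R(P_S')$ contains every relevant type 3(c) candidate value, i.e. that the unrolling is faithful. Here I would verify that for a type 3(c) triple $(u,v,x)$ with $x$ at arc-distance $s$ from $g_{j_1}$ along the major hemisphere of $C_{j_1}$, one has $d(u,x)+d(x,v) = d(u,g_{j_1}) + \Delta_{j_1} + d(h_{j_1},v)$, so that placing the images of the two vertices at the prescribed offsets $d(v_i,h_0)$ and $d(v_i,g_{j_1})+\Delta_{j_1}+d(h_{j_1},h_0)$ makes the single-center cost $\ct(\cdot,\cdot)$ along $P_S'$ reproduce $\ct(u,v,x)$ exactly; this is precisely the content recorded in $R_1(P_S')$. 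The one genuinely delicate point is that the relevance reduction of Lemma~\ref{lem:intersection_region} and Lemma~\ref{lem:cycle} really does confine attention to a \emph{single} cycle per vertex, so that $P_S'$ remains of linear size; without that reduction the naive concatenation of Section~\ref{sec:stage2} would have quadratic length. Everything else is a routine combination of the two stages.
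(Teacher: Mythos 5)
Your proposal is correct and follows essentially the same route as the paper: Stage 1 prunes all candidate values except type 3(c), Lemmas \ref{lem:intersection_region} and \ref{lem:cycle} justify restricting to the ``relevant'' type 3(c) triples via the first-cycle computation, the linear-size path $P_S'$ encodes exactly those values, and one application of Lemma \ref{lem:paths} with the $O(n)$-time feasibility test gives the final $O(n \log n)$ bound with $\lm^* = \lm_2$. If anything, you make explicit two points the paper only asserts---the invariance of $\Phi$, $\Theta$, and the ranks when computed at $\lm_2$ instead of the unknown $\lm^*$, and the unrolling identity $d(u,x)+d(x,v) = d(u,g_{j_1}) + \Delta_{j_1} + d(h_{j_1},v)$ underlying the claim that $R(P_S')$ contains every relevant type 3(c) value.
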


\section{The Main Algorithm} \label{sec:main}

We are now ready to present our algorithm for the $k$-center problem on the cactus $G$. Initialize $\lm_1 \gets 0$ and $\lm_2 \gets \infty$. Let $G'$ be the graph formed by removing all pendant vertices from $G$ and let $T$ be its tree representation. Note that $t_r$ (a leaf node) and its associate vertex $v_r$ represent the roots of $T$ and $G$, respectively.

\begin{definition}
    A \emph{maxpath} of a rooted tree is any subpath in the tree whose end nodes are either of degree 1, or 3, or more, and whose internal nodes are of degree 2. A \emph{leaf maxpath} is any maxpath of which contains a leaf of the tree (other than the root).
\end{definition}

For each leaf maxpath $\pi_i$ of $T$, we do the following. Let $t_1, t_2, \ldots, t_m$ be the nodes in $\pi_i$ such that $t_1$ is a leaf. If $t_m$ represents a cycle of $G$, then we set $\pi_i'$ to path $\pi_i$ with node $t_m$ removed; otherwise $\pi_i'$ is same as $\pi_i$. Let $S'_i$ be the subgraph of $G$ that $\pi_i'$ represents and let $S_i$ be the graph formed by attaching all pendant vertices of $G$ adjacent to $S'_i$ to it. If $\pi_i'$ contains the node $t_m$, then $t_m$ is either a graft or a hinge. Let $v_m \in V(G)$ be the associate vertex of $t_m$. We do not include the pendant vertices adjacent to $v_m$ in $S_i$. We call $S_i$ a \emph{leaf cactus stem} of $G$. See Figures \ref{fig:leaf_path} and \ref{fig:leaf_stem} for reference.

\begin{figure}[ht]
    \centering
    \includegraphics[width = .8 \textwidth]{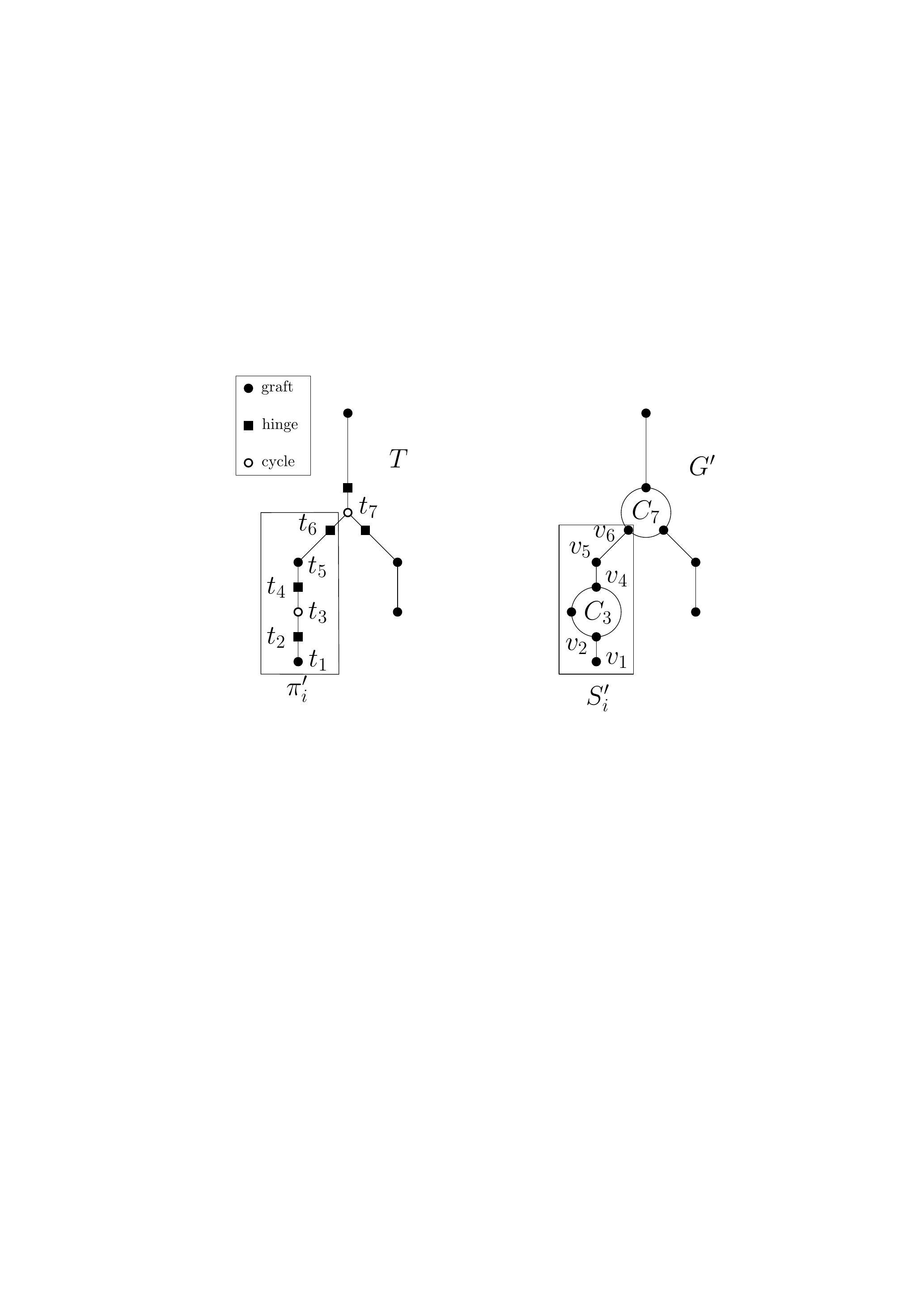}
    \caption{The nodes from $t_1$ to $t_7$ represent a leaf maxpath $\pi_i$ of $T$. The subpath $\pi_i'$ is from $t_1$ to $t_6$, as $t_7$ represents $C_7$. The subgraph $S_i'$ is from $v_1$ to $v_6$.} 
    \label{fig:leaf_path}
\end{figure}

\begin{figure}[ht]
    \centering
    \includegraphics[width = .33 \textwidth]{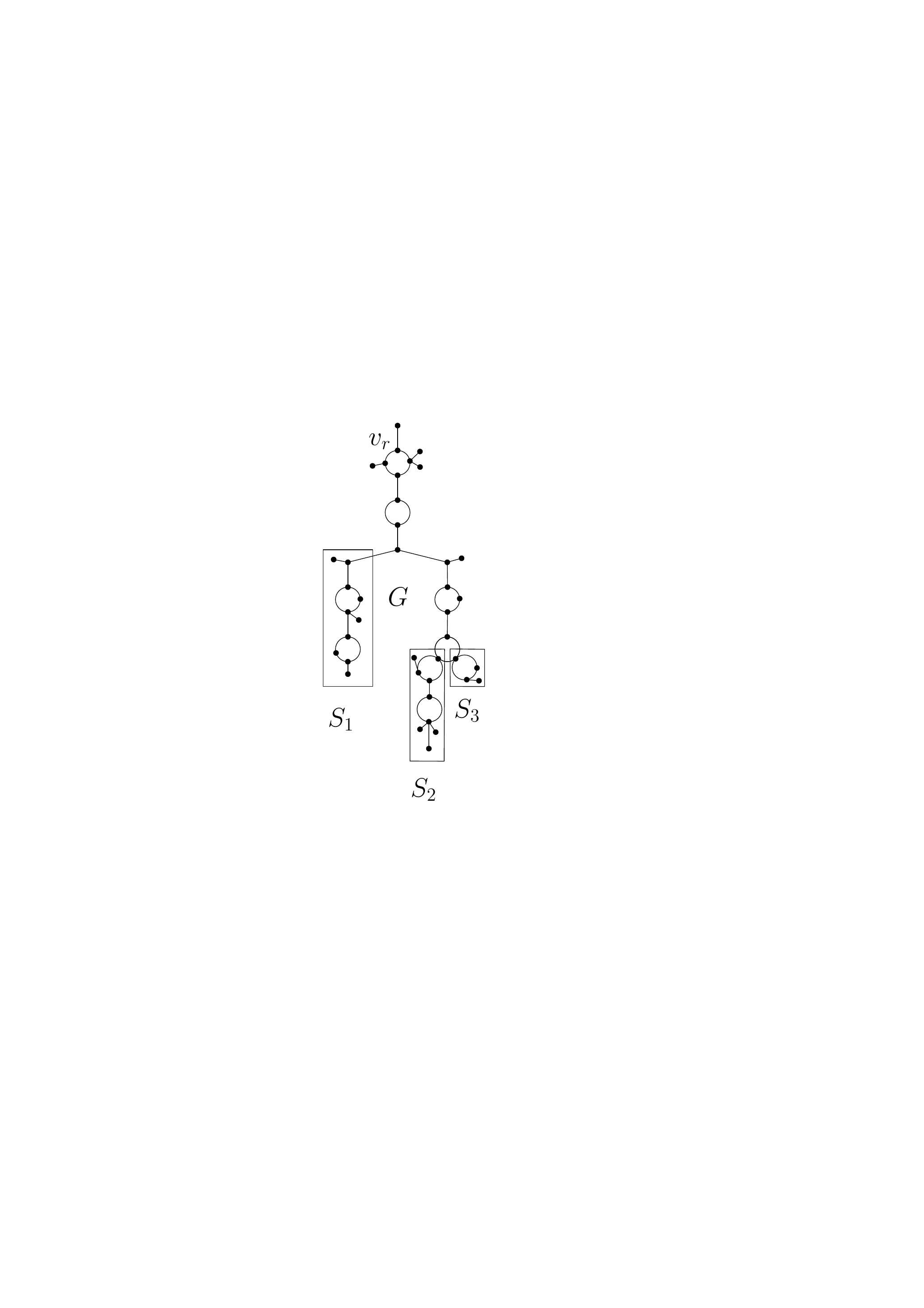}
    \caption{The leaf cactus stems $S_1, S_2$ and $S_3$ of $G$ with root $v_r$.} \label{fig:leaf_stem}
\end{figure}

For all leaf cactus stems $S_i$, we perform Stage 1 and Stage 2 of Section \ref{sec:stem}. For Stage 1, we generate $P_{S_i}$ for each $S_i$. We use Lemma \ref{lem:paths} to find the corresponding $\lm_1'$ and $\lm_2'$ values. For stage 2, we can process each $S_i$ sequentially to find the updated $\lm_1'$ and $\lm_2'$ values. It can be showed that $\lm_1' < \lm^* \leq \lm_2'$ and no value in $\bigcup_i R_{S_i}$ lies inside the range $(\lm_1', \lm_2')$. We update $\lm_1$ and $\lm_2$ to $\lm_1 \gets \max \{\lm_1, \lm_1'\}$ and $\lm_2 \gets \min \{\lm_2, \lm_2'\}$.

We now do a feasibility test on $G$ by choosing a $\lm \in (\lm_1, \lm_2)$. Let $X$ be the set of centers placed by the feasibility test. This defines a partition on $G$ where each vertex is associated with its nearest center. By our definition of $\lm_1$ and $\lm_2$, for each cactus stem $S_i$ in $G$, the partition defined by $X$ on $S_i$ is the same as that defined by an optimal set of centers $X^*$. This property was extensively exploited by Frederickson et al. \cite{frederickson} in the design of their optimal algorithm. We process each leaf cactus stem $S_i$ of $G$ and convert it into a pendant vertex of $G$ as follows.

Let $h \in V(S_i)$ be the vertex through which $S_i$ is connected to the rest of the graph $G$. Let $x_{h} \in X$ be the center closest to $h$ and let $V' \subseteq V(S)$ be the set of vertices for which $x_{h}$ is the nearest center. We replace $S$ by a pendant vertex $h'$ attached to $h$ such that $h' = \arg \min_{v} \{\lm / w(v) - d(h, v) \mid \forall v \in V'\}$ and set the length of the edge $(h, h')$ to $d(h, h')$. Let $X_{S_i} \subseteq X$ be the set of centers placed in $S_i$. Since $\lm \in (\lm_1, \lm_2)$, the vertices in $V(S_i) \setminus V'$ will also be covered by $X_{S_i} \setminus \{x_{h}\}$ in the optimal case. Therefore, we can be certain that these centers should also be included in the optimal set.

We decrement from $k$ the number of centers thus placed i.e. $|X_{S_i} \setminus \{x_{h}\}|$. We continue the process recursively in the new cactus until a single cactus stem remains. We can compute the weighted $k$-center in this last stem by using the algorithm in Section \ref{sec:stem}.  We have the following result.

\begin{theorem}
    The weighted $k$-center problem on a cactus $G$ can be solved in $O(n \log^2 n)$ time, where $n$ is the number of vertices of $G$.
\end{theorem}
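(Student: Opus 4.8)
The plan is to establish the $O(n \log^2 n)$ bound by analyzing the recursive ``peeling'' procedure described above, in which leaf cactus stems are repeatedly resolved and collapsed into pendant vertices until a single stem remains. I would organize the argument into two parts: a correctness claim, ensuring that the partition computed by the feasibility test on each leaf cactus stem agrees with an optimal partition, and a running-time analysis, which is where the logarithmic factors must be accounted for carefully.

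For correctness, I would first invoke the earlier results of Section \ref{sec:stem}: after running Stage 1 and Stage 2 on the collection of leaf cactus stems $\{S_i\}$, we obtain $\lm_1, \lm_2$ with $\lm_1 < \lm^* \leq \lm_2$ such that no value in $\bigcup_i R_{S_i}$ lies in $(\lm_1, \lm_2)$. The key structural fact, analogous to the pruning guarantees in Lemma \ref{lem:endpoint} and the overlap lemma, is that since no candidate cost of any $S_i$ falls strictly inside $(\lm_1, \lm_2)$, the combinatorial structure of how centers cover vertices within each $S_i$ is invariant across all $\lm \in (\lm_1, \lm_2)$, and in particular matches the structure at $\lm = \lm^*$. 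Thus choosing any $\lm \in (\lm_1, \lm_2)$ and running the linear-time feasibility test of Section \ref{sec:feasibility} yields, on each $S_i$, exactly the partition induced by an optimal solution $X^*$. This justifies committing the centers $X_{S_i} \setminus \{x_h\}$ as provably optimal and collapsing $S_i$ to the single pendant vertex $h'$ that preserves the residual covering constraint through the attachment vertex $h$. Decrementing $k$ accordingly maintains a faithful reduced instance, so correctness propagates through the recursion down to the final stem, which is solved directly by the theorem of Section \ref{sec:stage2imp}.

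For the running time, I would argue along the lines of Frederickson's path-stem decomposition. The crucial observation is that collapsing each leaf cactus stem into a single pendant vertex reduces the number of leaves (equivalently, the number of leaf maxpaths) of the tree representation $T$, so that the number of recursive rounds is $O(\log n)$: each round at least halves the count of degree-$\geq 3$ branching nodes, since collapsing leaf stems turns their branching parents into lower-degree nodes. Within a single round, the total work across all leaf cactus stems is governed by the pruning procedure of Lemma \ref{lem:paths} applied to the concatenation $\P$ of the generated paths $P_{S_i}$, plus one feasibility test; since the combined size of all the stems in a round is $O(n)$ and the feasibility test runs in $O(n)$ time, Lemma \ref{lem:paths} gives $O((n + n) \log n) = O(n \log n)$ work per round. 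Multiplying by the $O(\log n)$ rounds yields $O(n \log^2 n)$. I would separately note that the $kn \log n$ term from the naive Stage 2 is avoided precisely because Section \ref{sec:stage2imp} replaces the per-center feasibility tests with a single $O(n \log n)$ pruning over the path $P_S'$.

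The main obstacle I anticipate is the running-time recursion, specifically proving that the number of rounds is genuinely $O(\log n)$ rather than merely bounded by the number of leaf maxpaths, which could in principle be large. The delicate point is that after collapsing all current leaf cactus stems simultaneously in one round, each former branching node either disappears or loses enough incident leaf-branches to guarantee a constant-factor reduction in the relevant size parameter; I would need to argue this using a potential function such as the number of branching nodes of $T$, showing it decreases by a constant factor per round, exactly as in Frederickson's analysis for trees. A secondary subtlety is confirming that the total size of all stems processed across a single round is $O(n)$ and not larger due to the pendant-vertex replacements introduced by earlier rounds; this requires checking that each collapse introduces only $O(1)$ new vertices, so the instance size never grows, which follows directly from the construction of $h'$.
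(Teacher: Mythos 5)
Your proposal is correct and takes essentially the same approach as the paper: each round performs Stage 1 and Stage 2 on all current leaf cactus stems in $O(n \log n)$ time, and the number of rounds is $O(\log n)$ because collapsing all leaf stems to pendant vertices at least halves the number of leaf stems per round, which is exactly the paper's (terse) argument. The extra details you anticipate needing --- the potential-function/branching-node argument for the $O(\log n)$ round bound and the check that each collapse adds only $O(1)$ vertices --- merely elaborate steps the paper asserts without proof.
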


\begin{proof}
    In each iteration, performing the steps of Stage 1 and Stage 2 takes $O(n \log n)$ time. Moreover, in every iteration, all leaf cactus stems of the graph get reduced to a pendant vertex. This may result in new cactus stems becoming leaf cactus stems. However, the number of such new cactus stems can be at most half of the number of previous leaf cactus stems. This results in at most $O(\log n)$ many iterations. Therefore, the total time taken by our algorithm over all iterations is $O(n \log^2 n)$. \qed
\end{proof}

\section{Conclusion} \label{sec:conclusion}

In this paper we presented an $O(n \log^2 n)$ time algorithm to solve the weighted $k$-center problem in a cactus of size $n$. We adapt and use Frederickson and Johnson's \cite{frederickson1983} parametric search technique, generalized by Wang and Zhang \cite{wang}, to achieve our result. In their papers they had also shown how to iteratively optimize the feasibility test algorithm to achieve a sublinear running time. It is worth pursuing to adapt this idea to find a faster feasibility test algorithm for our problem. This has the potential of further reducing the running time of our algorithm by a factor of $\log n$ which we believe will be optimal.

\bibliographystyle{plain}
\bibliography{cactus}

\end{document}